\newcommand{\cF}{{\mathcal F}}
\newcommand{\E}{{\mathbb E}}
\newcommand{\N}{{\mathbb N}}
\newcommand{\R}{{\mathbb R}}
\newtheorem{theorem}{Theorem}[section]
\newtheorem{prop}{Proposition}[section]
\newtheorem{lemma}{Lemma}[section]
\newtheorem{remark}{Remark}[section]
\title{A stochastic volatility model for the valuation of temperature derivatives}
\begin{document}
\author[1]{Aurélien Alfonsi}
\author[1,2]{Nerea Vadillo}
\affil[1]{CERMICS, Ecole des Ponts, Marne-la-Vall\'ee, France. MathRisk, Inria, Paris,
  France.  }
\affil[2]{AXA Climate, Paris, France. }
\affil[ ]{email: \texttt{aurelien.alfonsi@enpc.fr}, \texttt{nerea.vadillo@axa.com} }
\date{\today}

\maketitle
\begin{abstract}
    This paper develops a new stochastic volatility model for the average daily temperature that is a natural extension of the Ornstein-Uhlenbeck model proposed by Benth and Benth~\cite{benth2007volatility}. This model allows to be more conservative regarding extreme events while keeping tractability. We give a method based on Conditional Least Squares to estimate the parameters on daily data and estimate our model on eight major European cities. We then  show  how to calculate efficiently the average payoff of weather derivatives both by Monte-Carlo and Fourier transform techniques. This new model allows to better assess the risk related to temperature volatility. 
\end{abstract}
{\bf Keywords:} Temperature model, Weather derivatives, Conditional Least Squares Estimation, Stochastic Volatility model.

\section*{Introduction}

With the increased awareness on climate risk and its tremendous consequences on all economic sectors, the demand for financial tools that enable to hedge this weather-related perils has significantly increased in the last decades. The development of weather derivatives coincides with this trend. First launched in 1997 with over the counter (OTC) contracts, the Chicago Mercantile Exchange (CME) introduced standardised contracts for American cities in 1999. Until date, the open market remains relatively small and therefore lacks of liquidity. However, OTC market has developed between industrialists and large insurance and finance companies leading to the necessity to develop satisfactory risk valuation methodologies for those derivatives. These derivatives usually bring on a Heating Degree Day (HDD) index or a Cooling Degree Day (CDD) index or  a Cumulative Average Temperature (CAT) index. Since most of the deals are OTC, the market is incomplete and thus no arbitrage free pricing. Weather derivative sellers are more interested in understanding the distribution of the payoff, and how this distribution may change under stressed conditions. These information allow them to determine their pricing depending on their risk appetite as well as to assess their maximum losses.

The literature on the valuation of weather derivatives mostly relies on the development of temperature models. 
On the one hand, there are continuous time models. Brody et al.~\cite{brody2002dynamical} suggests to use an Ornstein-Uhlenbeck process driven by a fractional Brownian motion with periodic time-dependent parameters. Benth and Benth~\cite{benth2007volatility} consider the same dynamics with a classical Brownian motion with a further expansion in periodic functions and including a trend for the temperature. Benth et al.~\cite{benth2012critical} proposes a continuous time version of autoregressive models. More recently, Groll et al.~\cite{GLCMB} have developed a continuous time model with factors for the temperature forecasting curve. All these papers use their model to calculate the average payoff of standard weather derivatives on HDD, CDD or CAT. On the other hand, Tol~\cite{Tol} and Franses et al.~\cite{franses2001modeling} have proposed discrete-time GARCH models for the temperature. Cao and Wei~\cite{cao2004weather} and Campbell and Diebold~\cite{campbell2005weather} use this approach in view of pricing derivatives.  Benth and Benth~\cite{benth2012critical}  complete this latter study and show that a low order of autoregression is enough to fit well temperature data. Recently, Meng and Taylor~\cite{MeTa} have proposed an extension of this family of models that gives a joint modeling of the daily minimum and maximum temperature.

In this paper, we focus on the temperature related derivatives which have been, by far, the most studied tools in the literature. Following the studies of Benth {and} Benth \cite{benth2007putting} and many others on temperature dynamics,  we develop a time inhomogeneous affine stochastic volatility model inspired from the celebrated Heston model~\cite{heston1993closed} for equity. This model integrates additional flexibility to the temperature dynamics enabling a better time-continuous modelling of the temperature, while keeping tractability. In particular, it is more conservative regarding extreme events than the corresponding Gaussian model. Following the recent work of Bolyog {and} Pap~\cite{bolyog2019conditional}, we develop a Conditional Least Squares estimation method for its parameters, which is easy to implement. Besides, we propose two different pricing algorithms. The first one based on simulation enables to sample the payoff distribution and then to compute empirically quantities such as the average payoff or quantiles of this distribution. The second one takes advantage of the affine structure and is an adaptation of the Fast Fourier Transform method introduced by Carr and Madan~\cite{carr1999option} to calculate the average payoff of some weather derivatives. Besides, this second method, combined with control variates, enables to reduce considerably the computational time (up to $10^5$). 
We can then easily calculate the sensitivity of the average payoff with respect to the different parameters, including those on the temperature volatility. Thus, an important contribution of our model is to better assess the risk behind the volatility of the temperature. Finally, we compare the results of this approach to price weather derivatives with common business practices.

{This paper can also be leveraged by practitioners. First, our model contributes to a deeper understanding of daily temperature models. This topic is particularly sensitive in the context of global warming and increased extreme events. This paper reviews different approaches to trend modeling and volatility capture. Second, the study describes the step-by-step pricing of temperature derivatives from calibration to computational implementation. This is particularly useful for reproductive purposes and has already been implemented, with some adjustments, in the industrial framework. Finally, the last section provides a comparison with standard business pricing practices to bridge the gap between research and application.}

The paper is organised as follows. The first section presents our model for the average daily temperature. In particular, we explain the rationale behind this model and how it goes beyond the dynamics studied in the literature up to date. The second section focuses on the estimation of the different parameters of our model. We develop a conditional least squares approach and check the robustness of the fitting on simulated data. The third section concentrates on the pricing of weather derivatives. It develops both Monte Carlo and Fast Fourier Transform algorithms and studies the sensitivity to model parameters of the average payoff. It also compares these pricing methods with current business approaches.

\section{Temperature models}


\subsection{A stochastic volatility model for temperature dynamics}

Temperature dynamics have largely been analysed in the statistics literature and they are of particular interest in the field of weather derivatives. These analysis apply to average daily temperature models that are defined as the average of maximum and minimum daily temperature, i.e. $T = \frac{T_{max} + T_{min}}{2}$. {This choice comes from the insurance contracts that typically use this average for the daily temperature.}

Different models have been suggested for the associated $(T_t)_{t \geq 0}$ process. In the present paper, we investigate the interest of applying a stochastic volatility model for the temperature process $(T_t)_{t \geq 0}$. Our model extends the existing models proposed in the literature while it enables to capture important fluctuations that were not illustrated by former models. Namely, we introduce Model~\eqref{stoch}:
\begin{equation}
    \tag{M}
  \begin{cases}
  T_t&=s(t)+\tilde{T}_t,\\
  d \tilde{T}_t&=- \kappa \tilde{T}_t dt +\sqrt{\zeta_t}(\rho dW_t +\sqrt{1-\rho^2} dZ_t),\\
  d \zeta_t&= - K (\zeta_t- \sigma^2(t))dt +  \eta \sqrt{\zeta_t} dW_t,
  \end{cases}\label{stoch} 
\end{equation}
\noindent where $(W_t)_{t \geq 0}$ and $(Z_t)_{t \geq 0}$ are independent Brownian motions, $\kappa, \eta,K>0$, $\rho \in [-1,1]$, $\sigma^2$ is a nonnegative function and the functions $s$ and $\sigma^2$ will be taken as in~\eqref{s} and~\eqref{sigma}. We will  {denote} $(\mathcal{F}_t)_{t\ge 0}$ the filtration generated by $(W,Z)$, so that the processes $T$ and $\zeta$ are adapted to it. This model integrates three components:

\begin{itemize}
    \item First, the function $s$ represents the trend and seasonality of the average temperature $(T_t)_{t \geq 0}$. This function is deterministic, bounded and continuously differentiable. { In this paper, we will consider the following parametric form
    $$s(t)=\alpha_0 +\beta_0 t+\alpha_1 \sin\left( \frac{2\pi}{365}t \right)+\beta_1 \cos\left( \frac{2\pi}{365}t \right),\ t\ge 0, $$
    see Subsection~\ref{subsubsec_OU} for a discussion on this.}
    \item Second, the detrended and deseasonalised temperature process $(\tilde{T}_t)_{t \geq 0}$ follows a mean-reverting process which enables to include memory effects into the model. The parameter $\kappa$ tunes the mean-reversion speed. 
    \item Third, the volatility of the  $(\tilde{T}_t)_{t \geq 0}$ process  {denoted}  $(\zeta_t)_{t \geq 0}$ follows a Cox-Ingersoll-Ross (CIR) process~\cite{cox1985interest} {with time-dependent parameters~\cite{HW90}}. The process $(\zeta_t)_{t \geq 0}$ includes a seasonal deterministic component $\sigma^2(\cdot)$ which is supposed  deterministic, bounded, continuously differentiable and nonnegative, so that the process $\zeta$ {is well defined and} remains nonnegative {if $\zeta_0\ge 0$}. {In this paper, we will the following parametric form for $\sigma^2$:
    $$ \sigma^2(t)=\gamma_0+ \sum_{k=1}^2 \gamma_k \sin\left( k\frac{2\pi}{365}t \right)+\delta_k \cos\left(k \frac{2\pi}{365}t \right).$$}The parameter $K$ tunes the mean-reversion of the volatility process and $\eta$ corresponds to the volatility of the volatility.
\end{itemize}
\noindent { Here and through the paper, we note $f(t)$ (time $t\ge 0$ in parenthesis) a deterministic function and $F_t$ (time in index) a stochastic process.}

This model is somehow an adaptation of the celebrated Heston model \cite{heston1993closed}. It enables to go beyond Ornstein–Uhlenbeck models  \cite{benth2007volatility} \cite{benth2007putting}   \cite{taylor2004comparison} \cite{franses2001modeling} and is an alternative to GARCH volatility models \cite{taylor2004comparison} \cite{franses2001modeling} \cite{campbell2005weather} \cite{benth2012critical}, while keeping some flexibility as we will show in the next sections.

In the following sections we study daily average temperature data series for 8 major European cities: Stockholm, Paris, Amsterdam, Berlin, Brussels, London, Rome and Madrid. The data series present daily data from January 1st 1980 to December 31st 2020. After removing 29 February of leap years, this gives a time series of 14,965 observations coming from weather stations. {Table~\ref{tab:weather_stations} in Appendix~\ref{appendix:weather_station} summarizes the characteristics of the weather stations.  These time series were extracted from Speedwell, the main historical and settlement data provider for weather derivatives. The company performs quality checks including physical consistency comparison, statistical consistency tests and comparison with neighboring sources and corrects the data if necessary. For instance, gap filling is mentioned for Rome Ciampino. The positive aspect of using these data is that it can be considered as cleaned and reliable and has been used to determine weather derivatives payoff in the past. The downside is that it is private with restricted access.} 

\subsection{Some background on temperature models}

\subsubsection{Ornstein–Uhlenbeck models}\label{subsubsec_OU}

Interest on temperature models for weather temperature derivative pricing have arisen in the last years. Even though there exists some alternative modeling \cite{gulpinar2017robust} \cite{elias2014comparison} \cite{schiller2012temperature}, there seems to be an overall agreement on the capacity of Ornstein–Uhlenbeck processes to model daily average temperatures. These models have largely been studied by Benth {and} Benth \cite{benth2007volatility} \cite{benth2007putting} and are written as follows:

\begin{equation}
   \begin{cases}
  T_t &=s(t)+\tilde{T}_t,\\
  \tilde{T}_t&=- \kappa \tilde{T}_t dt +\sigma(t) dW_t,
    \end{cases}\label{T} 
\end{equation}

\noindent where $(W_t)_{t \geq 0}$ is an independent Brownian motion, $s$ a seasonality deterministic function, $\kappa$ a nonnegative mean-reverting parameter and $\sigma^2$ is a deterministic nonnegative function. 

There is no clear agreement on the form of $s$. Different functions have been suggested from  {step} functions \cite{taylor2004comparison} to polynomial functions \cite{franses2001modeling}. However, there is a clear preference for Fourier decomposition (\cite{benth2007putting}, \cite{holleland2020decline}, \cite{benth2012critical}) which gives the following expression
\begin{equation}
   \begin{aligned}
  s(t)& = \alpha_0 + \beta_0 t + \sum_{k=1}^{K_s} \alpha_{k} \sin(\xi_k t) + \sum_{k=1}^{K_s}\beta_{k} \cos(\xi_k t),
    \end{aligned}\label{s} 
\end{equation}
where $\xi = \frac{2 \pi }{365}$, $\xi_k = k \xi$ and $K_s \in \N^*$. Here and through the paper, the time unit is the day.

In this model, $s$ integrates a trend component which enables to introduce climate change phenomenon into our model and a periodic component through trigonometric functions. There is a large discussion on the convenient $K_s \in \N^\star$ to choose as well as on the pertinence to introduce interaction terms \cite{xiaochun}. For simplicity purposes, we take $K_s=1$ which is the most commonly choice (see e.g. \cite{benth2007putting} \cite{benth2012critical}) and shows significance at 5\% confidence level on our experiments.

The different parameters of the Ornstein-Uhlenbeck process can be determined by Conditional Least Squares Estimation (CLSE) developed by Klimko {and} Nelson \cite{klimko1978conditional}, which boils down to minimise 
\begin{equation}
\sum_{i=0}^{N-1} \left( T_{(i+1)\Delta} - \mathbb{E}  [T_{(i + 1)\Delta} | T_{i\Delta} ] \right)^2.
\label{linear_reg_T_OU}
\end{equation} 
Since we have daily data, we mostly use $\Delta=1$ through the paper, unless specified. {The quantity 
\begin{equation}\label{def_residual}
  Res_{i\Delta}=  T_{(i+1)\Delta} - \mathbb{E}  [T_{(i + 1)\Delta} | T_{i\Delta} ]=T_{(i+1)\Delta}-s((i+1)\Delta)-e^{-\kappa \Delta}(T_{i\Delta}-s(i\Delta))
\end{equation}
is called the residual of the regression at time~$i\Delta$. Note that the same formula for the residual holds true for Model~\eqref{stoch}.} {We can equally note that the autoregressive factor $e^{-\kappa \Delta}$ derives from the integration of the Ornstein-Uhlenbeck dynamics which is explicit in Equation~\eqref{integral} in Appendix~\ref{appendix:T}. }{The}
CLS estimators have been studied by Overbeck {and} Ryden \cite{overbeck1997estimation} for the CIR process, Li {and} Ma \cite{LiMa} for the stable CIR process, and  Bolyog {and} Pap \cite{bolyog2019conditional} for Heston-like models. This approach has been used to estimate the parameters of Model~\eqref{stoch}, See Section~\ref{section:Parameter estimation}.

Similarly, $\sigma$ represents the deterministic volatility function. For flexibility and periodicity reasons, it can also be modeled thanks to a Fourier decomposition:

\begin{equation}
   \begin{aligned}
  \sigma^2(t)& = \gamma_0 + \sum_{k=1}^{K_{\sigma^2}} \gamma_{k} \sin(\xi_k t ) + \sum_{k=1}^{K_{\sigma^2}} \delta_{k} \cos(\xi_k t ).
    \end{aligned}\label{sigma} 
\end{equation}
{The coefficients $\gamma$'s and $\delta$'s are assumed to be such that $\sigma^2$ is indeed a nonnegative function, and we also exclude the trivial case $\sigma^2(t)=0$ for all $t\ge 0$. For $K_{\sigma^2}=1$,  a straightforward necessary and sufficient condition for having $\sigma^2$ nonnegative is  $\gamma_0\ge \sqrt{\gamma_1^2+\delta_1^2}$. For $K_{\sigma^2}\ge 2$, a sufficient condition is to assume $\gamma_0\ge \sum_{k=1}^{K_{\sigma^2}} \sqrt{\gamma_k^2+\delta_k^2}$. }
This decomposition is used for example by~\cite{benth2007volatility} or~\cite{MeTa} in a more evolved model. We also consider it in Model~\eqref{stoch} as the function to which the stochastic volatility mean reverts. The annual periodicity is a quite natural feature. Besides, this choice gives a bounded continuous function whose nonnegativity is easy to check, which is required for the definition of Model~\eqref{stoch}. In this paper, $K_{\sigma^2}$ will be taken equal to 2 but $K_{\sigma^2}=1$ will also be considered. 

\subsubsection{Limits of Ornstein–Uhlenbeck models}

Ornstein-Uhlenbeck processes, and their discrete form corresponding to Autoregressive Models, present some important limitations. 

First, they remove any potential long memory effects as today's temperature will only depend on the previous day's. This weakness has been challenged through literature. In particular, Brody {et} al. \cite{brody2002dynamical} suggest the introduction of Fractional Brownian motions. They suppose an Ornstein-Uhlenbeck model driven by a Fractional Brownian motion of Hurst parameter $H$. Such models generate temperature paths with  {$(H- \varepsilon)$-H\"older regularity, $\varepsilon>0$ being an arbitrary small real number}. Following the work of Gatheral and al. \cite{gatheral2018volatility}, we perform an analysis on our data to check whether we could observe such regularity and consider the below metric.

\begin{equation*}
m(q,\Delta) = \frac{1}{\lfloor N/\Delta \rfloor} \sum_{i=1}^{\lfloor N/\Delta \rfloor} \mid \tilde{T}_{(i+1)\Delta} - \tilde{T}_{i\Delta} \mid^q
\end{equation*}
For a $H$-H\"older dynamics, $m(q,\Delta)$ should behave as $\Delta^{qH}$. Figure \ref{fig:H} shows $\log(m(q,\Delta))$ is well approximated by an affine function of $\log(\Delta)$ for different values of $q$. Therefore, we can compute the coefficient of the regression of $\log(m(q,\Delta))$ by $\log(\Delta)$ and estimate $H$, see Table \ref{tab:H}. We observe that this parameter is very close to $0.5$, which corresponds to the regularity of a diffusion driven by a standard Brownian Motion. This justifies why we still consider in Model~\eqref{stoch} a diffusion model.

\begin{figure}[h!]
    \begin{minipage}[b]{0.45\linewidth}
        \centering
        \includegraphics[width=\textwidth]{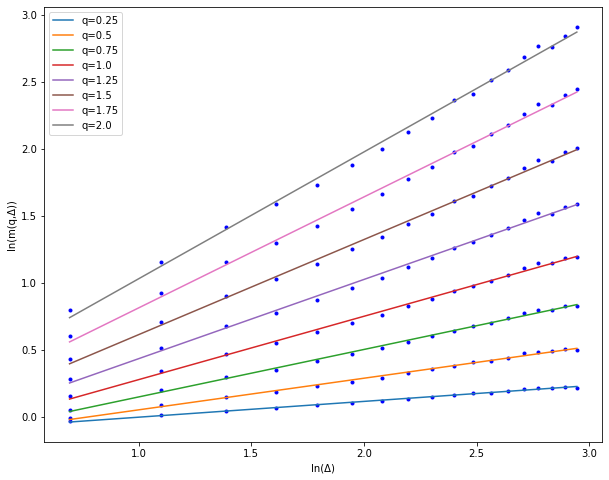}
    \end{minipage}
    \hspace{0.3cm}
    \begin{minipage}[b]{0.45\linewidth}
        \centering
        \includegraphics[width=\textwidth]{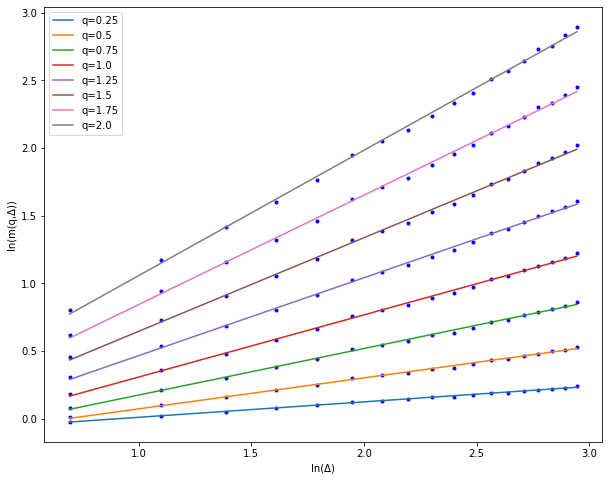}
    \end{minipage}
    \caption{Smoothing plots of the temperature dynamics for Stockholm and Paris (left and right)}
    \label{fig:H}
\end{figure}

\begin{table}[h!]
\centering
\begin{tabular}{l|llllllll}
City & Stockholm & Paris & Amsterdam & Berlin & Brussels & London & Rome & Madrid \\ \hline \hline
$\hat{H}$ & 0.498 & 0.476 & 0.497 & 0.477 & 0.456 & 0.510 & 0.525 & 0.539
\end{tabular}
\caption{Parameter estimations for the Hurst coefficient $H$}
\label{tab:H}
\end{table}

Second, we also contemplated autoregressive models with higher than one autoregressive order. Figure \ref{fig:pacf} shows the partial autocorrelation plots of $(\tilde{T}_t)_{t \geq 0}$ and the residuals   {$(Res_t)_{t \geq 0}$ (see Eq.~\eqref{def_residual})} for the city of Paris in the first $1,000$ observed days. While we can consider a second significant correlation term, its coefficient should be pretty small. Higher than two autoregressive terms are clearly non significant. Therefore, we studied the possibility of having two time delay Ornstein-Uhlenbeck dynamics. However the second order autoregressive coefficients were small and unstable. We hence decided to keep a model with a unique autoregressive terms which is coherent with Franses ad al. \cite{franses2001modeling}, Diebold and Campbell \cite{campbell2005weather} and Taylor and Buizza \cite{taylor2004comparison} findings.

\begin{figure}[ht]
    \begin{minipage}[b]{0.45\linewidth}
        \centering
        \includegraphics[width=\textwidth]{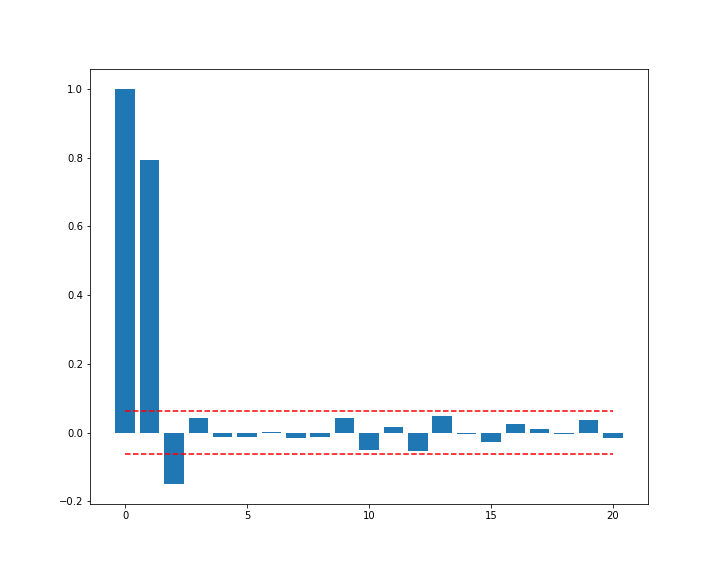}
    \end{minipage}
    \hspace{0.3cm}
    \begin{minipage}[b]{0.45\linewidth}
        \centering
        \includegraphics[width=\textwidth]{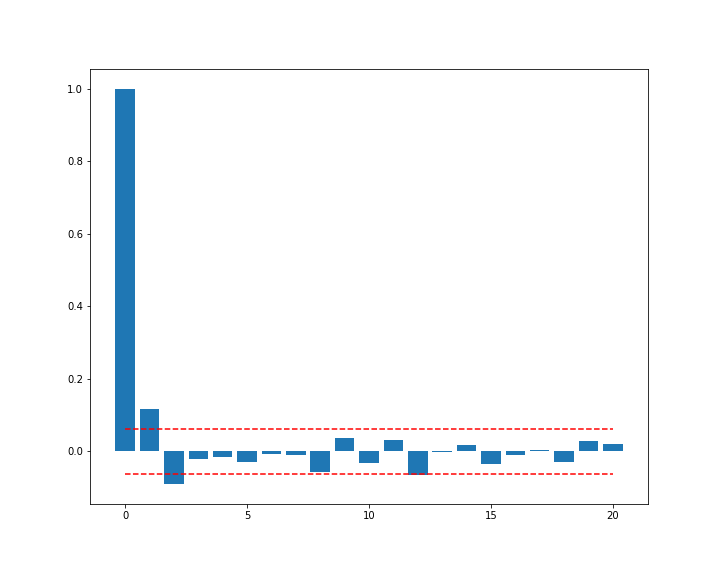}
    \end{minipage}
    \caption{Partial autocorrelation plots of $(\tilde{T}_t)_{t \in \N}$ and {of the residuals $Res_t$ (see Eq.~\eqref{def_residual})} for the city of Paris in the first $1,000$ observations. The dashed red line corresponds to the 95\% confidence interval from which we can consider the partial autocorrelation coefficient is significantly different from 0.} \label{fig:pacf}
\end{figure}

{Third, as can be observed in Figure~\ref{fig:s}, daily temperature present erratic noises with possible volatility autoregression or clustering.}

\begin{figure}[h!]
    \begin{minipage}[b]{0.45\linewidth}
        \centering
        \includegraphics[width=\textwidth]{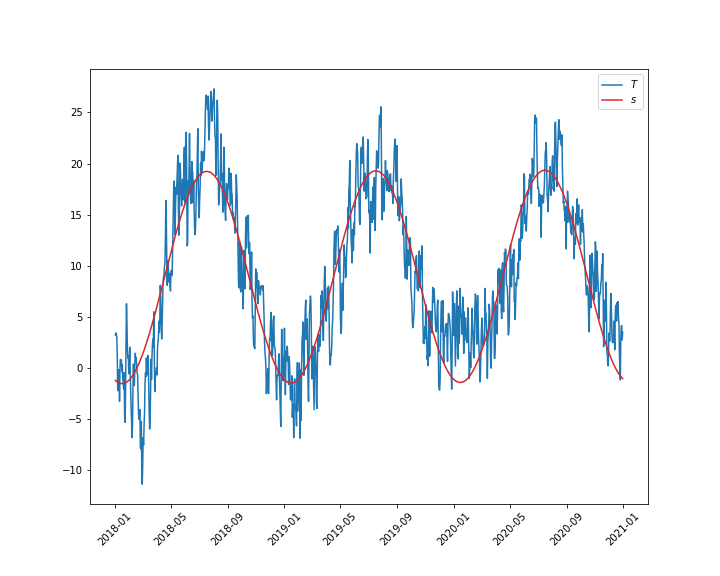}
    \end{minipage}
    \hspace{0.3cm}
    \begin{minipage}[b]{0.45\linewidth}
        \centering
        \includegraphics[width=\textwidth]{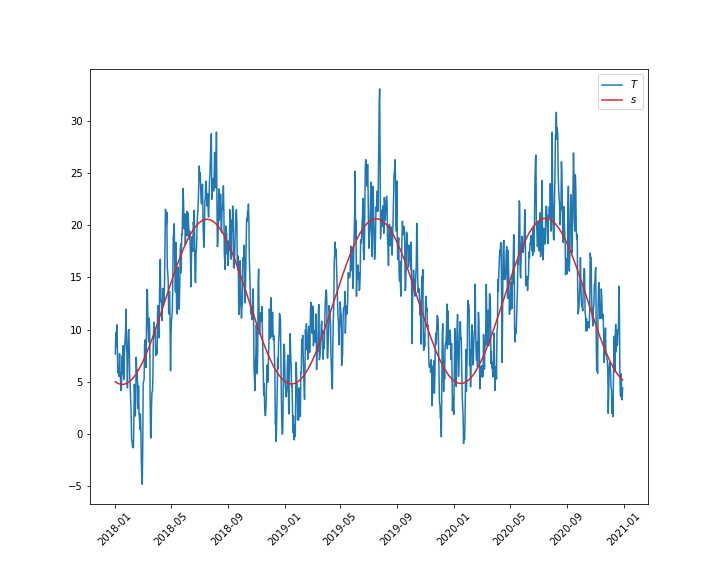}
    \end{minipage}
    \caption{Average temperature and trend on temperature from 2018 to 2020 in Stockholm and Paris (from left to right respectively)}
    \label{fig:s}
\end{figure}
To analyse this, we focus on the residuals~\eqref{def_residual} and compare the historical ones to those obtained on 9~simulations of the same 40 years period. {More precisely, for each simulation, we have plotted the curve $(Res^{sim}_{(i)},Res^{obs}_{(i)})_{1\le i\le 40\times 365}$, where $Res^{sim}_{(i)}$ (resp. $Res^{obs}_{(i)}$) is the ordered statistic of the residuals obtained with the simulated (resp. observed) temperature. In the center of the distribution, the points are on the line $y=x$ in red, which indicates a very good fit by the model. Instead, we see that for small (resp. large) values, the curves are slightly below (resp. above) the red line which indicates that the extreme events produced by the Ornstein-Uhlenbeck model are smaller than the ones observed.} The residuals given by Model~\eqref{T} present noticeable deviations from the historical ones as can be seen in the qqplots: Figure \ref{fig:qqplot_simu_OU} shows slight skewness but the main observation remains that the observed tails are heavier than the simulated ones, especially for the left tail. Different authors emphasise skewness deviation of residuals as well as  {volatility clustering} \cite{franses2001modeling} \cite{xiaochun}. This skewness and heavy tail issues are coped with different methods either thanks to the fitting of a skew-t distribution on residuals \cite{erhardt2013predicting} or through GARCH models that enable to capture dependencies on volatility and can lead to skewed residual qqplots. Here, anticipating on our estimation results, we have plotted in Figure~\ref{fig:qqplot_simu} the qqplot on the residuals between those observed and 9 simulations of Model~\eqref{stoch} with estimated parameters.  {We observe again a very good fit of the center distribution, but the curves are are slightly above (resp. below) the red line $y=x$ for small (resp. large) values, which means that Model~\eqref{stoch} produces larger extreme events than the ones observed. Thus,} compared with the Ornstein-Uhlenbeck model, we note that Model~\eqref{stoch}  produces heavier tails, that are slightly heavier to those observed on our 40 years data set. Thus, Model~\eqref{stoch} is more conservative on extreme events, which is an interesting feature when dealing about risk quantification.

\begin{figure}[h!]
    \begin{minipage}[b]{0.33\linewidth}
        \centering
        \includegraphics[width=1\textwidth]{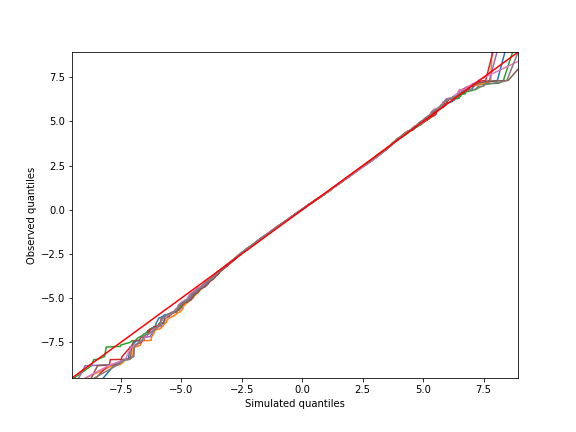}
    \end{minipage}
    \begin{minipage}[b]{0.33\linewidth}
        \centering
        \includegraphics[width=1\textwidth]{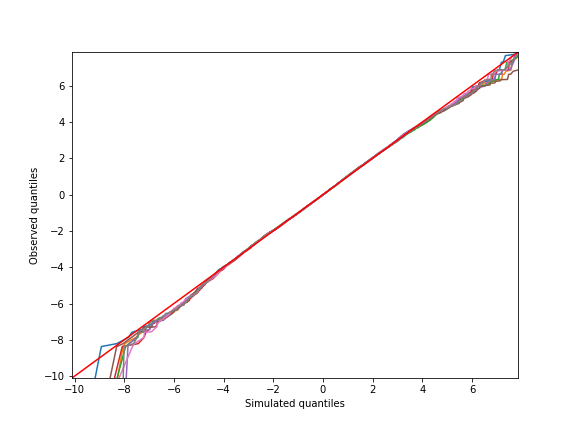}
    \end{minipage}
    \begin{minipage}[b]{0.33\linewidth}
        \centering
        \includegraphics[width=1\textwidth]{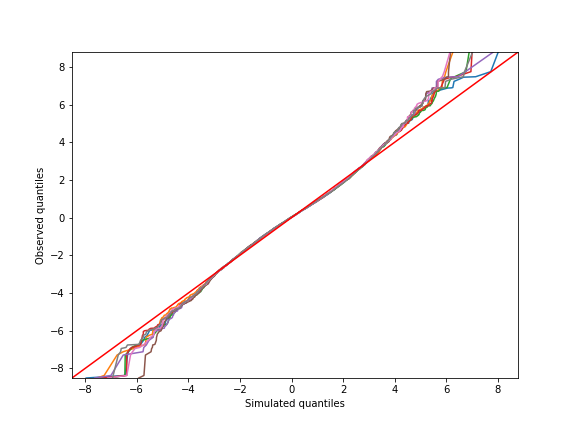}
    \end{minipage}
    \caption{Quantile quantile plots for observed and 9 simulated residuals of~\eqref{linear_reg_T_OU} for Stockholm, Paris and Rome for the Ornstein-Uhlenbeck model. }
    \label{fig:qqplot_simu_OU}
\end{figure}
\begin{figure}[h!]
    \begin{minipage}[b]{0.33\linewidth}
        \centering
        \includegraphics[width=1\textwidth]{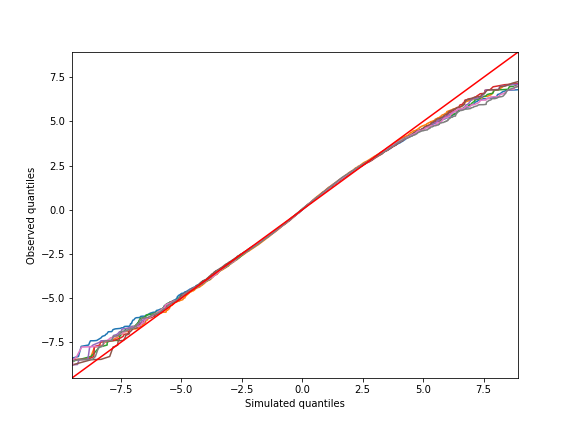}
    \end{minipage}
    \begin{minipage}[b]{0.33\linewidth}
        \centering
        \includegraphics[width=1\textwidth]{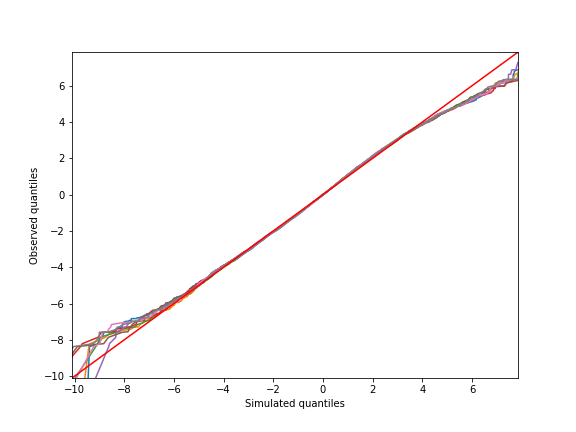}
    \end{minipage}
    \begin{minipage}[b]{0.33\linewidth}
        \centering
        \includegraphics[width=1\textwidth]{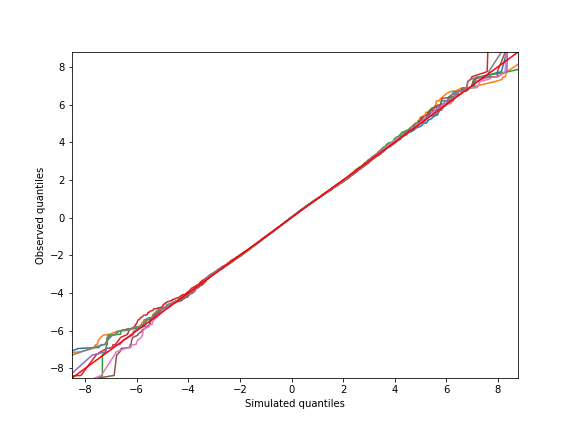}
    \end{minipage}
    \caption{Quantile quantile plots for observed and 9 simulated residuals of~\eqref{linear_reg_T_OU} for Stockholm, Paris and Rome for Model~\eqref{stoch}. }
    \label{fig:qqplot_simu}
\end{figure}

\subsubsection{GARCH models}

The Generalized Autoregressive Conditional Heteroskedasticity (GARCH) models~\cite{bollerslev1986generalized} represent a first response to these limits. They enable to integrate stylized features of temperature times series such as skewness, tail heaviness and volatility clustering.   Let us define the following temperature dynamics for $t\in \N$:

\begin{equation}
   \begin{cases}\label{garch} 
   T_t &= s(t) - \mu(T_{t-1} -s(t-1)) + Z_t\\
   Z_t &= \sqrt{\zeta_t}\epsilon_t, (\epsilon_t)_{t \geq 0} \sim \mathcal{N}(0,1) \ \text{i.i.d.} \\
   \zeta_t& = \sigma^2(t)  + \sum_{k=1}^{p} \lambda_{k} (Z_{t-k}- \sigma^2(t-k))^2+ \sum_{l=1}^{q} \theta_{l} \zeta_{t-l}, 
   \end{cases}
\end{equation}
with {$\lambda_k\ge 0$ for $1\le k\le p$, $\theta_l\ge 0$ for $1\le l\le q$,} and  $s$ and $\sigma^2$ are defined respectively by~\eqref{s} and~\eqref{sigma} { with parameters $\gamma$'s and $\delta$'s such that $\sigma^2(t)\ge 0$ for all $t\ge 0$. Thus, $(\zeta_t)$ is indeed a nonnegative process.} 

Different authors have worked on the application of these GARCH models particularly on GARCH(1,1) to temperature process fitting \cite{xiaochun} \cite{taylor2004comparison}, \cite{franses2001modeling} \cite{benth2012critical} \cite{campbell2005weather}. While these models enable to integrate the complexity of the series we will see that they also present considerable limits.

First, the analysis of autocorrelation and partial autocorrelation plots on squared corrected residuals show evidence of the necessity to integrate first order autocorrelation on the volatility model. This is coherent with the previously cited papers. However, it should be noted that the significance of the first order autocorrelation term remains small even for subsamples. We can therefore expect that it has a limited impact into the quality of multi-day forecasts and pricing.

Second, the GARCH volatility model and its continuous equivalents integrate an important hypotheses: they suppose that temperature and volatility are driven by the same noise (with a time shift). In Model~\eqref{stoch}, this would formally correspond to have $\rho \in \{-1,1\}$. Given the complexity of meteorological dynamics there is no reason explaining why temperature and volatility on temperature should be driven by the same noises. Model~\eqref{stoch} represents a generalization of average temperature dynamics that enables to integrate more flexibility on the temperature's volatility.

Finally, Model~\eqref{stoch} presents a clear advantage as it provides closed form formula for the pricing of temperature indices deriving from average temperature models. While these closed formulas can only be used for some derivatives depending on the payoff structure, they can be combined with Monte Carlo approaches to speed up derivative's pricing. In particular, this document will develop the use of control variates methods to reduce the variance of Monte Carlo approaches, see Subsection~\ref{subsec_control_variates}.
\bigbreak

To sum up, Model~\eqref{stoch} is aligned with the average temperature models that have been considered in the literature. First, from an Ornstein–Uhlenbeck process, it integrates a seasonal component corresponding to the natural climatology and climate change trend. The Ornstein–Uhlenbeck model also enables to include an autoregressive component upon which agree both literature and statistical tools. However, the residuals of this first model show deviation from normal hypotheses with skewness, tail heaviness and volatility clustering patterns. GARCH models partly answer to these limits and integrate an autoregressive component on the volatility process which is statistically observed. Our model is another natural extension to the Ornstein-Uhlenbeck dynamics that gives a larger flexibility on the volatility process. Model \eqref{stoch} presents two additional advantages. First, unlike ARMA and GARCH models it is a time-continuous model. While this might not be necessary when we only have daily data, it can be a significant advantage if this model is coupled with a model for energy or commodities that are traded continuously. We could foresee, for example, combining this model with other commodity models to identify hedging opportunities or price hybrid derivatives. Second, Model~\eqref{stoch} is an affine model for which efficient pricing methods based on Fourier techniques can be implemented.  In particular, we show in Section~\ref{Sec_pricing} how to use the Fast Fourier Transform for pricing, which translates into a significant competitive advantage.

\section{Fitting Model~\eqref{stoch} to historical data}\label{Sec_estim}

The previous section has motivated the interest of the stochastic volatility model~\eqref{stoch} for the temperature dynamics. The pertinence of this new model is however related to our capacity to well estimate its parameters. This section focuses on this challenge as well as on the robustness of the estimation. {Contrary to financial derivatives, we do not consider the calibration of our model to market prices. First, as pointed by Weagley~\cite{weagley2019financial} the market volume of weather derivatives is quite low and the transactions are mostly Over The Counter (OTC). Market prices are thus arguable. Besides, the underlying of these contracts is the real temperature, which is not a traded asset. There is therefore no justification of a risk-neutral pricing of weather derivatives. For these reasons, we prefer to estimate our model to historical data and then use it to determine the distributions of weather derivatives payoffs under the real probability. }

\subsection{Parameter estimation} \label{section:Parameter estimation}

Like Overbeck {and} Ryden~\cite{overbeck1997estimation} and Bolyog {and} Pap \cite{bolyog2019conditional}, we implement the conditional least squares estimation (CLSE) which consists in minimising the observed value against the predicted conditional expectation. Under general assumptions, Klimko {and} Nelson~\cite{klimko1978conditional} have shown that CLS estimators are {strongly consistent}, with a speed of convergence close to $O(N^{-1/2})$, where $N$ is the number of observations. 

It can be noted that Bolyog {and} Pap suggest in~\cite{bolyog2019conditional}, on a time-homogeneous model that is similar to~\eqref{stoch},  to simultaneously minimize the conditional temperature and volatility expectations, i.e. to minimize
$$ \sum_{i=0}^{N-1} \left( T_{(i+1)\Delta} - \mathbb{E}  [T_{(i +1) \Delta} | T_{i\Delta},\zeta_{i\Delta}  ] \right)^2 +\left( \zeta_{(i+1)\Delta} - \mathbb{E}  [\zeta_{(i +1) \Delta} | \zeta_{i\Delta} ] \right)^2, $$
with respect to $\kappa$, $K$, and the (parameterised) functions $s(\cdot)$ and $\sigma^2(\cdot)$.
As already remarked in~\cite{bolyog2019conditional}, the estimators of the parameters $\kappa,K, s(\cdot), \sigma^2(\cdot)$  do not  involve the values of $\eta$ and $\rho$. We can thus do the estimation  without knowing these values, and then estimate separately the parameters $\eta$ and $\rho$.
 Besides, Model \eqref{stoch} corresponds to a special case of the dynamics considered by Bolyog {and} Pap where the volatility is absent from the mean-reverting term of the temperature process (this corresponds to $\beta=0$ in~\cite{bolyog2019conditional}). In this case, $\mathbb{E}  [T_{(i +1) \Delta} | T_{i\Delta},\zeta_{i\Delta}  ] =\mathbb{E}  [T_{(i +1) \Delta} | T_{i\Delta}]$,  and the minimisation problem  is equivalent to minimise $\sum_{i=0}^{N-1} \left( T_{(i+1)\Delta} - \mathbb{E}  [T_{(i +1) \Delta} | T_{i\Delta} ] \right)^2 $ and $\sum_{i=0}^{N-1} \left( \zeta_{(i+1)\Delta} - \mathbb{E}  [\zeta_{(i +1) \Delta} | \zeta_{i\Delta} ] \right)^2$  separately, which we do here. 
 
 Thus, Paragraph~\ref{par_estim_kps} deals with the estimation of $\kappa$ and $s(\cdot)$, Paragraph~\ref{par_estim_Ksg} brings on the estimation of $K$ and $\sigma^2(\cdot)$, while Paragraph~\ref{par_estim_etarho} focuses on the estimation of  $\eta$ and $\rho$.

\subsubsection{Parameter estimation for $\kappa$ and $s(\cdot)$}\label{par_estim_kps}

 To estimate $\kappa$ and $s(\cdot)$, we are thus interested by the following minimisation problem:

\begin{mini}|s|
{{(\kappa, \alpha_0,\beta_0,\alpha_1, \beta_1)\in \R^5}}{\sum_{i=0}^{N-1} \left( T_{(i+1)\Delta} - \mathbb{E}  [T_{(i +1) \Delta} | T_{i\Delta} ] \right)^2,}{}{}
\label{linear_reg_T}
\end{mini}

\noindent that brings on temperature process  $(T_t)_{t \geq 0}$. Proposition~\ref{prop-ap1} solves this problem and gives explicit formulas for $\hat{\kappa}$, $\hat{\alpha}_0$, $\hat{\beta}_0$, $\hat{\alpha}_1$ and $\hat{\beta}_1$.
Table~\ref{tab:parameter_row1} shows the estimated parameters for the 8 European cities. All the parameters are significant at 5\% confidence level (their 95\% confidence interval do not cross zero) . We can also add that there is a certain coherence between the different European cities for the mean-reverting parameter. In addition, given the range of values close to $0.25$ we can also derive that the temperature memory effect lasts for around 4 days. These results are aligned with the literature, see e.g.~\cite{benth2007putting}. We also notice that the estimated values of $\beta_0$ are around $0.00013$, which corresponds to a warming of about $0.5$°C every 10 years. 

\begin{table}[h!]
\centering
\begin{tabular}{l|lllll}
City & $\hat{\alpha}_0$ & $\hat{\beta}_0$  & $\hat{\alpha}_1$  & $\hat{\beta}_1$ & $\hat{\kappa}$ \\ \hline \hline
Stockholm & 6.678 & 0.00016 & -4.564 & -9.142 & 0.192 \\
Paris & 10.868 & 0.00013 & -3.540 & -6.993 & 0.230 \\
Amsterdam & 9.402 & 0.00013 & -3.509 & -6.426 & 0.228 \\
Berlin & 9.190 & 0.00013 & -3.863 & -8.834 & 0.203 \\
Brussels & 9.746 & 0.00012 & -3.467 & -6.761 & 0.195 \\
London & 10.670 & 0.00011 & -3.345 & -6.035 & 0.260 \\
Rome & 14.826 & 0.00013 & -4.733 & -7.522 & 0.228 \\
Madrid & 13.961 & 0.00010 & -4.572 & -8.608 & 0.221
\end{tabular}
\caption{Parameter estimations for the seasonal function $s$ and mean reverting $\kappa$.}
\label{tab:parameter_row1}
\end{table}

\subsubsection{Parameter estimation for $K$ and $\sigma(\cdot)$}
\label{par_estim_Ksg}

An important challenge we face when estimating the parameters of the process  $(\zeta_t)_{t \geq 0}$ is that the instantaneous volatility process is, per se, unobservable. There is a large literature on this issue, we mention here  \cite{ai2007maximum} and \cite{azencott2020realised} that deal with the Heston model. Following Azencott {et} \textit{al.} approach~\cite{azencott2020realised}, we approximate the unobservable volatility $\zeta$ by the series of realized volatilities $\hat{\zeta}$. These realized volatilities $\hat{\zeta}$ correspond to the observed volatility on a time window of $Q$-days such that we get:
\begin{equation}
   \begin{aligned}
  \hat{\zeta}_{iQ  \Delta} & := \frac{1}{Q } \sum_{j=1}^Q   \frac{2 \hat{\kappa}}{1-e^{-2 \hat{\kappa} \Delta}} \left( \tilde{T}_{(iQ+j)\Delta} - e^{-\hat{\kappa}  \Delta}\tilde{T}_{(iQ+j-1)\Delta} \right)^2, & i \in\{ 0,\dots, \lfloor N/Q\rfloor-1\}.
    \end{aligned}\label{vol} 
\end{equation}
Here, $\hat{\zeta}_{iQ  \Delta}$ corresponds to the realized volatility on $[iQ\Delta, (i+1)Q\Delta]$ and we have thus $I = \lfloor N/Q\rfloor $  different values. {The autoregressive factor comes from the integration of the temperature dynamics of Model~\eqref{stoch} which is explicit in Equation~\eqref{integral} in Appendix~\ref{appendix:T}.} The correction factor $\frac{2 \hat{\kappa}}{1-e^{-2 \hat{\kappa} \Delta}}$, which is not present in~\cite{azencott2020realised}, is related to the mean-reverting behaviour of the temperature and is justified by Remark~\ref{rk_correction}. Since $\frac{2 \hat{\kappa}}{1-e^{-2 \hat{\kappa} \Delta}}\approx_{\Delta \to  0} \frac 1 \Delta$, $\hat{\zeta}_{iQ  \Delta}$ is close to the usual quadratic variation, but the difference is not negligible as $\Delta$ cannot be smaller than one day on observed data. 

\begin{remark}\label{rk_correction}
Let us suppose that $\zeta_t=\zeta_{iQ\Delta}$ for $t\in [iQ\Delta, (i+1)Q\Delta]$. Then, for $j\in\{1,\dots, Q\}$, we have
$$\tilde{T}_{(iQ+j)\Delta} - e^{-\hat{\kappa}  \Delta}\tilde{T}_{(iQ+j-1)\Delta} =\sqrt{\zeta_{iQ\Delta}}\int_{(iQ+j-1)\Delta}^{(iQ+j)\Delta}e^{(iQ+j)\Delta-s}dW^\rho_s, $$

with $W^\rho=\rho W +\sqrt{1-\rho^2} Z$. Since $\int_{(iQ+j-1)\Delta}^{(iQ+j)\Delta}e^{(iQ+j)\Delta-s}dW^\rho_s \sim \mathcal{N}\left(0,\frac{1-e^{-2 \hat{\kappa} \Delta}}{2 \hat{\kappa}} \right)$ is independent of~$\zeta_{iQ\Delta}$, we get that $\frac{1}{ \zeta_{iQ\Delta}}\sum_{j=1}^Q   \frac{2 \hat{\kappa}}{1-e^{-2 \hat{\kappa} \Delta}} \left( \tilde{T}_{(iQ+j)\Delta} - e^{-\hat{\kappa}  \Delta}\tilde{T}_{(iQ+j-1)\Delta} \right)^2$ follows a chi-squared distribution with $Q$ degrees of freedom. Thus, if $\zeta_t$ were frozen for $t\in [iQ\Delta, (i+1)Q\Delta]$,  $\hat{\zeta}_{iQ  \Delta}$ would be an unbiased estimator, i.e. $\E[ \hat{\zeta}_{iQ  \Delta} | \mathcal{F}_{iQ\Delta}]=\zeta_{iQ\Delta}$.
\end{remark}


\noindent \paragraph{Estimation of $\sigma^2$ and $K$.} 
Once again, we use the conditional least squares method (CLSE) to simultaneously compute the parameters of the deterministic component of the volatility and the mean-reversion coefficient. For this, we would like to minimise \begin{equation}
    \sum_{i=0}^{N-1} \left( \zeta_{(i+1)\Delta} - \mathbb{E}  [\zeta_{(i+1)\Delta} | \zeta_{i\Delta} ] \right)^2 \label{mini:reg_vol}
\end{equation} and use Proposition~\ref{prop-zeta}. The convergence of such kind of estimators for a time inhomogeneous CIR is given by Theorem~\ref{thm_cv_gamma}.  However, since the volatility is not directly observed, we minimise
the difference of the realized volatility and its conditional expectation given by the previously realised volatilities. Namely, we apply Proposition~\ref{prop-zeta} to minimise \begin{equation}\label{mini:reg_vol_Q}
    \sum_{i=0}^{I-2} \left( {\zeta}_{(i+1)Q\Delta} - \mathbb{E}  [\zeta_{(i+1)Q\Delta} | \zeta_{iQ\Delta} ] \right)^2, 
\end{equation}replacing the volatility $\zeta_{iQ \Delta}$ by the realized volatility $\hat{\zeta}_{iQ \Delta}$ . This leads to the following estimators  of the volatility dynamics of Model~\eqref{stoch}:
\begin{equation} \label{simu_zeta}
\begin{dcases}
   \begin{aligned}
  \hat{\gamma}_0 &=  \frac{\hat{\theta}_0 }{1 - \hat{\phi}_0 } \\
  \hat{K}  &=-\frac{1}{Q\Delta}\ln(\hat{\phi}_0) \\
  \hat{\gamma}_k& = \frac{\hat{\theta}_k D_k- \hat{\phi}_k B_k}{A_k D_k- C_k B_k} \\
  \hat{\delta}_k& = \frac{\hat{\theta}_k C_k- \hat{\phi}_k A_k}{C_k B_k- A_k D_k}
    \end{aligned}
\end{dcases}
\end{equation}

\noindent where $k \in \{1,2\}$ and

\begin{equation*}
\begin{dcases}
   \begin{aligned}
  A_k& =  \hat{K} \frac{\hat{K} (\cos(\xi_k Q\Delta)-e^{-\hat{K} Q\Delta}) + \xi_k \sin(\xi_k Q\Delta)}{\hat{K}^2+\xi_k^2}\\
  B_k&=  - \hat{K} \frac{\hat{K} \sin(\xi_k Q\Delta)-\xi_k ( \cos(\xi_k Q\Delta) - e^{-\hat{K} Q\Delta})}{\hat{K}^2+\xi_k^2} \\
  C_k &=  \hat{K} \frac{\hat{K} \sin(\xi_k Q\Delta)-\xi_k ( \cos(\xi_k Q\Delta) - e^{-\hat{K}  Q\Delta})}{\hat{K} ^2+\xi_k^2} \\
  D_k &= \hat{K} \frac{\hat{K} (\cos(\xi_k Q\Delta)-e^{-\hat{K}Q\Delta}) +\xi_k  \sin(\xi_k Q\Delta)}{\hat{K}^2+\xi_k^2},
  \end{aligned}
\end{dcases}
\end{equation*}
and 
\begin{align}\label{def_hat_theta}
  \hat{\vartheta}&:= (\hat{\theta}_0,\hat{\phi}_0,\hat{\theta}_1,\hat{\theta}_2,\hat{\phi}_1,\hat{\phi}_{2})^T =  \left(\sum_{i=0}^{I-2} {\hat{X}'_{i Q\Delta} \hat{X}'^T_{iQ\Delta} } \right)^{-1} \left(\sum_{i=0}^{I-2} \hat{X}'_{iQ\Delta}  \hat{\zeta}_{(i + 1)Q\Delta}\right), \\ \text{ with }
\hat{X}'_{iQ\Delta} &= (1, \hat{\zeta}_{iQ\Delta} , \sin(\xi_1 iQ\Delta), \sin(\xi_2 iQ\Delta), \cos(\xi_1 i Q\Delta), \cos(\xi_2 i Q\Delta))^T.  \notag
\end{align}

Table \ref{tab:parameter_row2}
summarises the numerical implementation of the parameter estimation for our eight cities. We can again observe a coherence between the different cities. It can also be noted that  $\hat{\gamma}_0$ has more importance in the $\sigma^2$ than the trigonometric components. Finally the mean reverting parameter $K$ is more unstable than $\kappa$ through the cities. Its influence can differ from one to 7 days depending on the city.

\begin{table}[h!]
\centering
\begin{tabular}{l|llllll}
City & $\hat{\gamma}_0$ & $\hat{\gamma}_1$  & $\hat{\gamma}_2$  & $\hat{\delta}_1$ & $\hat{\delta}_2$ & $\hat{K}$\\ \hline \hline
Stockholm & 4.790 & 0.684 & -0.450 & 1.401 & 0.704 & 0.147 \\
Paris & 5.603 & 0.201 & -0.266 & 0.358 & 0.459 & 0.396 \\
Amsterdam & 4.690 & 0.503 & -0.524 & 0.500 & 0.604 & 0.335 \\
Berlin & 5.857 & 0.646 & -0.542 & 0.410 & 0.578 & 0.255 \\
Brussels & 4.490 & 0.266 & -0.337 & 0.298 & 0.431 & 0.255 \\
London & 4.387 & -0.014 & -0.314 & 0.479 & 0.174 & 0.774 \\
Rome & 3.086 & 0.212 & -0.391 & 1.335 & 0.373 & 0.332 \\
Madrid & 4.164 & 0.418 & -0.267 & 0.746 & 0.443 & 0.269
\end{tabular}
\caption{Parameter estimations for the seasonal function $\sigma^2$ and volatility mean reverting parameter $K$.}
\label{tab:parameter_row2}
\end{table}

We also tested the choice of setting $K_{\sigma^2}=1$. Table \ref{tab:parameter_row2_one_sin} summarizes the estimation of the different parameters for $K_{\sigma^2}=1$. We can see that the impact on the estimations of $\gamma_0$, $\gamma_1$, $\delta_1$ and $K$ is small or {nill}. Hence, while all the parameters $\gamma_2$ and $\delta_2$ are significant, we find that setting $K_{\sigma^2}=1$ or $2$ has a rather small impact in the reasoning that follows.

\begin{table}[h!]
\centering
\begin{tabular}{l|llll}
City      & $\hat{\gamma}_0$ & $\hat{\gamma}_1$ & $\hat{\delta}_1$ & $\hat{K}$     \\  \hline \hline
Stockholm & 4.790    & 0.674    & 1.408    & 0.137 \\
Paris     & 5.603    & 0.198    & 0.359    & 0.332 \\
Amsterdam & 4.691    & 0.496    & 0.509    & 0.256 \\
Berlin    & 5.857    & 0.644    & 0.416    & 0.226 \\
Brussels  & 4.491    & 0.265    & 0.300    & 0.233 \\
London    & 4.387    & -0.022   & 0.479    & 0.430 \\
Rome      & 3.086    & 0.198    & 1.338    & 0.273 \\
Madrid    & 4.164    & 0.414    & 0.749    & 0.243
\end{tabular}
\caption{Parameter estimations for the seasonal function $\sigma^2$ and mean reverting $K$.}
\label{tab:parameter_row2_one_sin}
\end{table}

In addition, Figure \ref{fig:plot_osbserved_vol} shows the plots corresponding to the observed volatility and estimated seasonality on volatility. We can see that while the seasonality does not seem negligible $\sigma^2$ is far from completely explaining the observed volatility. Indeed, we observe important fluctuations around $\sigma^2(t)$ on the dynamics of $\zeta$.

\begin{figure}[h!]
    \begin{minipage}[b]{0.45\linewidth}
        \centering
        \includegraphics[width=\textwidth]{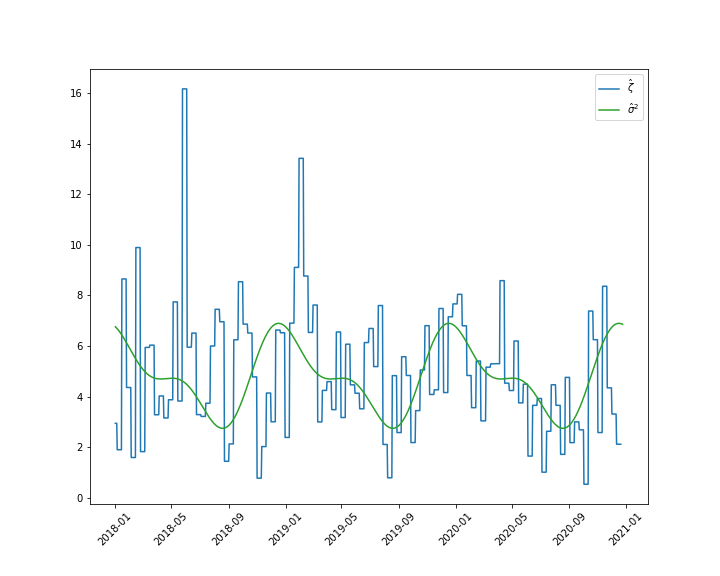}
    \end{minipage}
    \hspace{0.3cm}
    \begin{minipage}[b]{0.45\linewidth}
        \centering
        \includegraphics[width=\textwidth]{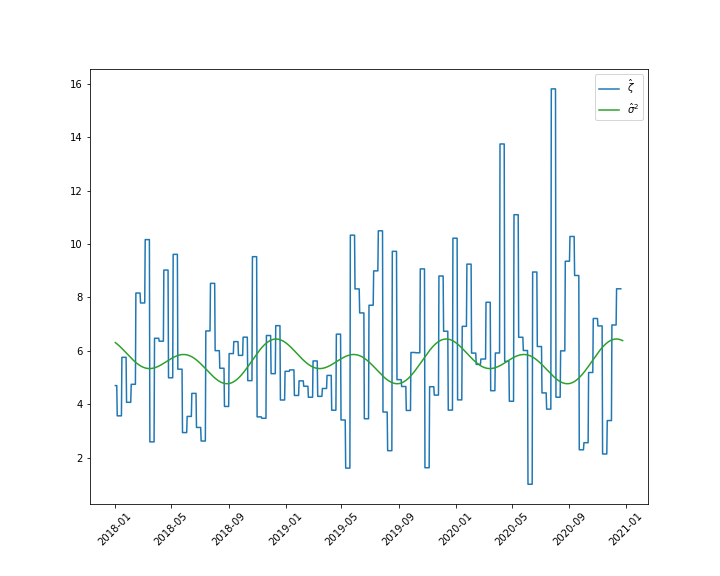}
    \end{minipage}
    \caption{Realized volatility and trend on observed volatility from 2018 to 2020 in Stockholm and Paris (from left to right respectively).} \label{fig:plot_osbserved_vol}
\end{figure}

\subsubsection{Estimation of the parameters $\eta^2$ and $\rho$}\label{par_estim_etarho}

\noindent \paragraph{Estimation of the volatility of volatility $\eta^2$} 
While in the previous section we use the conditional expectation to estimate $\sigma^2$ and $K$, here the idea is to implement a similar approach based on conditional variance.  Namely, we want to solve:

\begin{mini*}|s|  
{\eta^2}{\sum_{i=0}^{N-1} \left( (\zeta_{(i+1)\Delta}- \mathbb{E}  [\zeta_{(i +1) \Delta} | \zeta_{i \Delta} ])^2 - \mathbb{E}  [ \left(\zeta_{(i + 1)\Delta} - \mathbb{E}  [\zeta_{(i +1) \Delta} | \zeta_{i\Delta} ] \right)^2| \zeta_{i\Delta} ]  \right)^2}{}{}
\end{mini*}
Note that Li {and} Ma~\cite{LiMa} or Bolyog {and} Pap \cite{bolyog2019conditional} do not study the properties of such conditional second moment estimators. Here, we show in Theorem~\ref{thm_cv_eta2} 
the convergence of such estimator for a time-dependent CIR process, when the values $\zeta_{i\Delta}$ are directly observed.

\noindent Again, we have to work with the estimated volatility $\hat{\zeta}_{iQ\Delta}$ since we do not directly observe the process $\zeta$. We then apply Proposition~\ref{prop-eta2} with the previously estimated parameters $\hat{K},\hat{\gamma},\hat{\delta}$. This leads to the following estimator 
\begin{equation} \label{formula-eta2}
   \widehat{\eta^2} =  \frac{\sum_{i=0}^{I-2}  \hat{Y}_{i Q\Delta }(\hat{\zeta}_{(i+1)Q\Delta}- \hat{\vartheta}^T \hat{X}'_{iQ\Delta})^2}{\sum_{i=0}^{I-2} \hat{Y}_{iQ\Delta }^2},
\end{equation}

\noindent where $\hat{\vartheta}^T$ and $\hat{X}'_{iQ\Delta}$ are defined by~\eqref{def_hat_theta} and  \begin{equation}\label{def_Ytheta}
\hat{Y}_{iQ\Delta } =  \theta_0' + \phi_0' \hat{\zeta}_{iQ\Delta }  + {\sum_{k=1}^2} \theta_k' \sin(\xi_k iQ\Delta ) + {\sum_{k=1}^2} \phi_k' \cos(\xi_k iQ\Delta )\end{equation} with
\begin{equation*}
\begin{dcases}
   \begin{aligned}
  \theta_0' &= \hat{\gamma}_0 \frac{(1  - e^{- \hat{K} Q \Delta})^2 }{2\hat{K}}\\
  \phi_0' &=  \frac{e^{-\hat{K} Q \Delta}}{\hat{K}}  (1- e^{- \hat{K} Q \Delta}) \\
  \theta'_k &=\hat{\gamma}_k\hat{K} \frac{\hat{K}[A_k'-\phi_0'] +\xi_k B_k'}{\hat{K}^2+\xi_k^2}- \hat{\delta}_k \hat{K} \frac{\hat{K} B_k' -\xi_k [A_k'- \phi_0']}{\hat{K}^2+\xi_k^2} \\
  \phi'_k &= \hat{\gamma}_k \hat{K} \frac{\hat{K} B_k'-\xi_k [A_k'-\phi_0']}{\hat{K}^2+\xi_k^2} +\hat{\delta}_k \hat{K} \frac{\hat{K}[A_k'-\phi_0'] +\xi_k B_k'}{\hat{K}^2+\xi_k^2},   
    \end{aligned}
\end{dcases}
\end{equation*}

\noindent and

\begin{equation*}
\begin{dcases}
   \begin{aligned}
   A_k'&=\frac{2\hat{K}[\cos(\xi_k Q \Delta )-e^{- 2 \hat{K} Q \Delta}] +\xi_k \sin(\xi_k Q \Delta)}{4\hat{K}^2+\xi_k^2}\\
   B_k'&=\frac{2 \hat{K} \sin(\xi_k Q \Delta)-\xi_k( \cos(\xi_k Q \Delta)- e^{- 2 \hat{K} Q \Delta})}{4\hat{K}^2+\xi_k^2}
    \end{aligned}
\end{dcases}
\end{equation*}

\noindent and $(\hat{\gamma}_0, \hat{K}, \hat{\gamma}_k, \hat{\delta}_k )$  defined by Equation~\eqref{simu_zeta}.



\begin{table}[h!]
\centering
\begin{tabular}{l|llllllll}
city & Stockholm & Paris & Amsterdam & Berlin & Brussels & London & Rome & Madrid \\ \hline \hline
$\widehat{\eta^2}$ & 0.629 & 1.043 & 0.929 & 0.884 & 0.713 & 1.605 & 0.988 & 0.737
\end{tabular} 
\caption{Parameter estimations for the volatility of volatility $\eta^2$.}\label{tab:eta2}
\end{table} 

The numerical application of the above formula applied to our dataset gives the values collected in Table \ref{tab:eta2}. First, comparing Tables \ref{tab:parameter_row2} and \ref{tab:eta2} enables to observe the two components of the volatility dynamics, the mean reversion component and the pure noise. They both have comparable magnitude and therefore both contribute significantly to the volatility dynamics. The mean reverting coefficients are rather small, which is consistent with the fluctuations observed in Figure~\ref{fig:plot_osbserved_vol}. Second, Table \ref{tab:eta2} shows a certain coherence in terms of magnitude for all the European cities.

\noindent \paragraph{Estimation of the correlation $\rho$}
The last step consists in estimating the correlation $\rho$. The idea this time is to use conditional covariance to estimate this parameter, and Proposition~\ref{prop-rho} gives the minimiser of the following problem:
\begin{equation}
  \begin{aligned}
    \min_\rho \sum_{i=0}^{N-1}  &  \Bigl( (T_{(i+1)\Delta}- \mathbb{E}  [T_{(i +1) \Delta} |  \mathcal{F}_{i\Delta} ])(\zeta_{(i+1)\Delta}- \mathbb{E}  [\zeta_{(i +1) \Delta} |  \mathcal{F}_{i\Delta} ]) \\
    &- \mathbb{E}  \Big[ (T_{(i+1)\Delta}- \mathbb{E}  [T_{(i +1) \Delta} |  \mathcal{F}_{i\Delta} ])(\zeta_{(i+1)\Delta}- \mathbb{E}  [\zeta_{(i +1) \Delta} |  \mathcal{F}_{i\Delta} ]  \Big|  \mathcal{F}_{i\Delta} \Big]  \Bigr)^2.
  \end{aligned}  
\end{equation}

\noindent Again, since we do not observe the volatility, we use  Proposition~\ref{prop-rho} with the estimated volatility $\hat{\zeta}_{iQ\Delta}$ and the previously estimated parameters ${\hat{\kappa}},\hat{\alpha},\hat{\beta},{\hat{K}},\hat{\gamma},\hat{\delta},\widehat{\eta^2}$ given by Proposition~\ref{prop-ap1}, \eqref{simu_zeta} and~\eqref{formula-eta2}. This leads to
\begin{equation}
   \hat{\rho} =  \frac{\sum_{i=0}^{I-2} \hat{Y}'_{i Q \Delta} (T_{(i+1) Q\Delta}- \hat{\lambda}^T X_{i Q\Delta})(\hat{\zeta}_{(i+1) Q \Delta}- \hat{\vartheta}^T \hat{X}'_{iQ\Delta})}{\sum_{i=0}^{I-2} (\hat{Y}'_{i  Q\Delta})^2},
\end{equation}
\noindent where $\hat{\lambda}$ and $X_{i\Delta}$ are defined by~\eqref{def_hlambda} { with $N:= I-1$}, $\hat{\vartheta}$ and $\hat{X}'_{iQ\Delta}$ are defined by~\eqref{def_hat_theta}, and  $\hat{Y}'_{i  Q\Delta} =  \theta_0'' + \phi_0'' \hat{\zeta}_{iQ\Delta} + \sum_k \theta_k'' \sin(\xi_k i Q\Delta) + \sum_k \phi_k'' \cos(\xi_k i Q \Delta)$, with

\begin{equation*}
\begin{dcases}
   \begin{aligned}
  \theta''_0 &= \hat{\eta} \hat{\gamma}_0  \left( \frac{1 - e^{- (\hat{\kappa}+\hat{K}) Q \Delta} }{\hat{\kappa}+\hat{K}}+ \frac{e^{-(\hat{\kappa}+\hat{K})  Q \Delta}  - e^{-\hat{K}Q \Delta}  }{\hat{\kappa}} \right) \\
  \phi_0'' &= \hat{\eta}  e^{-\hat{K}  Q \Delta} \frac{1 - e^{-\hat{\kappa} Q \Delta} }{\hat{\kappa}} \\
  \theta''_k&=  \hat{\eta} \hat{\gamma}_k \hat{K} \frac{\hat{K}(A_k''-\phi_0'')  +  \xi_k B_k''}{\hat{K}^2+\xi_k^2}-  \hat{\eta} \hat{\delta}_k \hat{K} \frac{\hat{K} B_k'' -\xi_k (A_k''-\phi_0'')}{\hat{K}^2+\xi_k^2} \\
  \phi''_k&=   \hat{\eta} \hat{\gamma}_k \hat{K} \frac{\hat{K}B_k''-\xi_k (A_k''-\phi_0'')}{\hat{K}^2+\xi_k^2} +  \hat{\eta} \hat{\delta}_k \hat{K} \frac{\hat{K}(A_k''-\phi_0'') +\xi_k B_k''}{\hat{K}^2+\xi_k^2},
    \end{aligned}
\end{dcases}
\end{equation*}

and

\begin{equation*}
\begin{dcases}
   \begin{aligned}
  A_k''&=\frac{(\hat{K}+\hat{\kappa})(\cos(\xi_k Q \Delta)-e^{- (\hat{K}+\hat{\kappa})Q \Delta}) +\xi_k \sin(\xi_k Q \Delta)}{(\hat{K}+\hat{\kappa})^2+\xi_k^2}\\
  B_k''&=\frac{(\hat{K}+\hat{\kappa})\sin(\xi_k Q \Delta)-\xi_k (\cos(\xi_k Q \Delta)-e^{- (\hat{K}+\hat{\kappa})Q \Delta}) }{(\hat{K}+\hat{\kappa})^2+\xi_k^2}.
  \end{aligned}
\end{dcases}
\end{equation*}

\begin{table}[h!]
\centering
\begin{tabular}{l|llllllll}
City & Stockholm & Paris & Amsterdam & Berlin & Brussels & London & Rome & Madrid \\ \hline \hline
$\hat{\rho}$ & -0.000 & -0.006 & -0.010 & -0.013 & -0.014 & -0.011 & 0.023 & -0.005
\end{tabular}
\caption{Parameter estimations for the correlation $\rho$.}
\label{tab:rho}
\end{table}

The numerical application of the above formula applied to our data sets gives the values collected in Table \ref{tab:rho}. We observe that the correlation is close to zero for all the cities. Therefore, for simplification purposes, we will consider on the following of this document that $\rho$ equals zero. This finding also questions the pertinence of GARCH model~\eqref{garch}  that corresponds to $\rho \in \{-1,1\}$ since the temperature and its volatility are driven by the same noise.

\subsection{Robustness of estimators}

In the previous subsection, we have obtained conditional least squares estimators for the different parameters of Model~\eqref{stoch}. All the above expressions have been computed by discretizing the processes $(\tilde{T}_t)_{t \geq 0}$ and $(\zeta_t)_{t \geq 0}$. 
Theoretically speaking, Overbeck {and} Ryden~\cite{overbeck1997estimation} and Bolyog {and} Pap ~\cite{bolyog2019conditional} have proven the convergence of the CLS estimators for the Cox-Ingersoll-Ross and a generalized Heston model. Their proof is mainly based in ergodic arguments. In Appendix~\ref{appendix:conv-CIR}, we prove the  {consistency} of CLSE for a time inhomogeneous Cox-Ingersoll-Ross process. 

However, to estimate the parameters of the volatility dynamics, we have approximated the unobservable volatility $\zeta$ by the realized volatility $\hat{\zeta}$. Azencott,  Ren and Timofeyev \cite{azencott2020realised} have deeply studied the convergence modes of the volatility process $\hat{\zeta}$  to the instantaneous process $\zeta$ as well as the estimated realized estimators $(K, \sigma, \eta)$ under a classic Heston framework. Under boundary and continuity hypothesis on $T$ and $\zeta$, uniform convergence of $\hat{\zeta}$  to $\zeta$ over $[0,T]$ in $L^2$ is proven. Probability convergence of estimators is also proven for moments based estimators. The extension of these convergence properties to our particular model is left as a further work. In this section, we test numerically the robustness of our estimators and check their accuracy on simulated data.

\subsubsection{Methodology}

{Robustness of the estimators is checked through simulated data. Essentially, we simulate data series with the model that have the same length as our data set (40 years) and we check that we find back the parameters by using the CLS estimators. The detailed methodology is presented in Appendix~\ref{appendix:method_simu}.}

\begin{figure}[h!]
    \begin{minipage}[b]{0.5\linewidth}
        \centering
        \includegraphics[width=\textwidth]{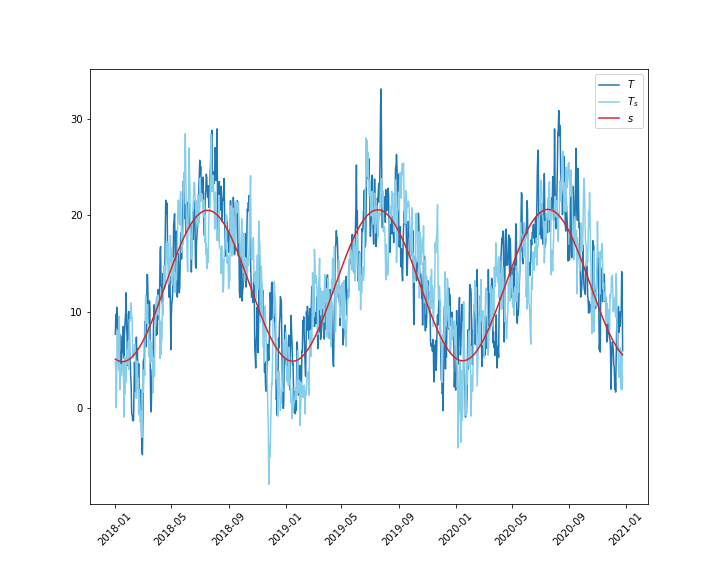}
    \end{minipage}
    \hspace{0.2cm}
    \begin{minipage}[b]{0.5\linewidth}
        \centering
        \includegraphics[width=\textwidth]{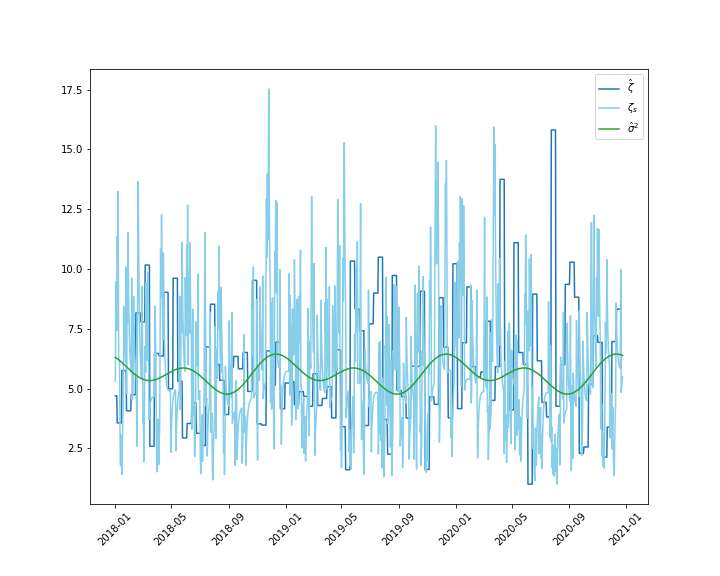}
    \end{minipage}
    \caption{Plots of simulated temperature and volatility processes for Paris and $(K,\eta^2)=(0.396,1.043)$. On the left, we plot the observed temperature $T$ (blue), the simulated temperature $T_s$ (light blue) and the trend and seasonal function $s$. On the right, we plot the observed volatility $\hat{\zeta}$ (blue), defined here as the 10-lag moving average of $\zeta$, the simulated volatility $\zeta_s$ (light blue) and the seasonal volatility function $\sigma^2$.}\label{fig:simu}
\end{figure}

As a first qualitative check, we have represented in Figure \ref{fig:simu} an example of simulated temperature for Paris. The simulated paths looks similar to the observed ones. 
In the following paragraphs, we will test our capacity to estimate the values of the parameters and discuss the choice of averaging-time windows $Q$.

\subsubsection{Estimation of parameters $\kappa,s(\cdot)$ related to the simulated temperature $T$ }

The estimation of $\kappa$ and $s(\cdot)$ is a priori easy since it relies on the temperature that is directly observable. Table \ref{tab:simuT} summarizes the estimators of the parameters related to the temperature. Figure \ref{fig:simu_T_estim} corresponds to the related temperature plot for Stockholm and Paris. We can see that all the estimated parameters remain very close to the original values. We can thus  conclude that the estimation of the parameters of the temperature is robust enough to be reliable.

\begin{table}[h!]
\centering
\begin{tabular}{r|rr|rr|rr|rr|rr}
City & $\alpha_0$ & $\hat{\alpha}_0$ & $\beta_0$ & $\hat{\beta}_0$ & $\alpha_1$ & $\hat{\alpha}_1$ & $\beta_1$ & $\hat{\beta}_1$  & $\kappa$ & $\hat{\kappa}$ \\ \hline \hline
Stockholm & 6.678 & 6.966 & 0.00016 & 0.00016 & -4.564 & -4.564 & -9.142 & -9.142 & 0.192 & 0.192 \\
Paris & 10.868 & 10.733 & 0.00013 & 0.00013 & -3.540 & -3.540 & -6.993 & -6.993 & 0.230 & 0.235 \\
Amsterdam & 9.402 & 9.353 & 0.00013 & 0.00013 & -3.509 & -3.509 & -6.426 & -6.426 & 0.228 & 0.220 \\
Berlin & 9.190 & 9.472 & 0.00013 & 0.00013 & -3.863 & -3.863 & -8.834 & -8.834 & 0.203 & 0.200 \\
Brussels & 9.746 & 9.350 & 0.00012 & 0.00012 & -3.467 & -3.467 & -6.761 & -6.761 & 0.195 & 0.192 \\
London & 10.670 & 10.659 & 0.00011 & 0.00011 & -3.345 & -3.345 & -6.035 & -6.035 & 0.260 & 0.270 \\
Rome & 14.826 & 14.751 & 0.00013 & 0.00013 & -4.733 & -4.733 & -7.522 & -7.522 & 0.228 & 0.224 \\
Madrid & 13.961 & 13.763 & 0.00010 & 0.00010 & -4.572 & -4.572 & -8.608 & -8.608 & 0.221 & 0.232
\end{tabular}
\caption{Estimation of temperature parameters from the simulated temperature path.}
\label{tab:simuT}
\end{table}

\begin{table}[h!]
\centering
\begin{tabular}{r|rr|rr|rr|rr|rr}
City & $\alpha_0$ & $\hat{\alpha}_0$ & $\beta_0$ & $\hat{\beta}_0$ & $\alpha_1$ & $\hat{\alpha}_1$ & $\beta_1$ & $\hat{\beta}_1$  & $\kappa$ & $\hat{\kappa}$ \\ \hline \hline
Stockholm & 6.678 & 6.939 &0.00016 &0.00015 &-4.564 &-5.492 &-9.142 &-8.355 &0.192 &0.185 \\
Paris & 10.868 &10.712 &0.00013 &0.00015 &-3.54 &-4.276 &-6.993 &-6.577 &0.230 &0.229 \\
Amsterdam & 9.402 &9.647 &0.00013 &0.00009 &-3.509 &-4.110 &-6.426 &-6.107 &0.228 &0.225 \\
Berlin & 9.190 &9.429 &0.00013 &0.00011 &-3.863 &-4.804 &-8.834 &-8.325 &0.203 &0.197 \\
Brussels & 9.746 &10.011 &0.00012 &0.0001 &-3.467 &-3.954 &-6.761 &-6.232 &0.195 &0.195 \\
London & 10.670 &10.830 &0.00011 &0.0001 &-3.345 &-4.059 &-6.035 &-5.439 &0.260 &0.259 \\
Rome &14.826 &14.781 &0.00013 &0.00013 &-4.733 &-5.522 &-7.522 &-6.933 &0.228 &0.235 \\
Madrid & 13.961 &14.081 &0.00010 &0.00008 &-4.572 &-5.377 &-8.608 &-7.95 &0.221 &0.224
\end{tabular}
\caption{{Estimation of temperature parameters from the simulated temperature path.}}
\label{tab:simuT}
\end{table}

\begin{figure}[h!]
    \begin{minipage}[b]{0.5\linewidth}
        \centering
        \includegraphics[width=\textwidth]{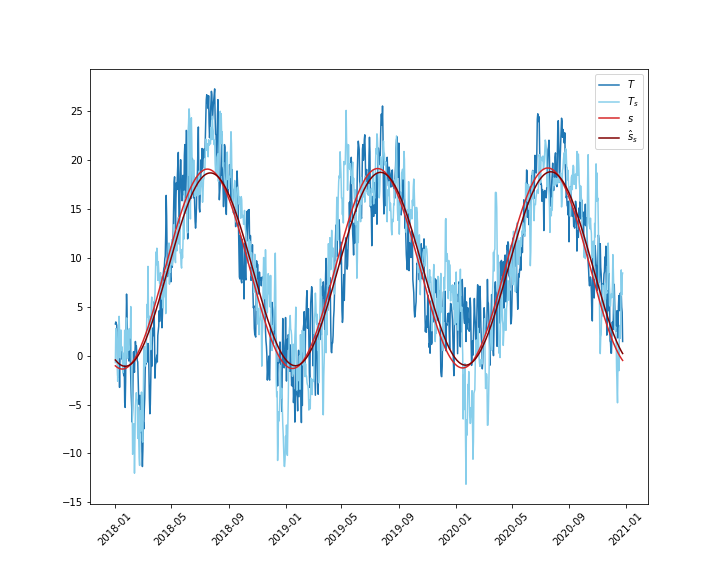}
    \end{minipage}
    \hspace{0.2cm}
    \begin{minipage}[b]{0.5\linewidth}
        \centering
        \includegraphics[width=\textwidth]{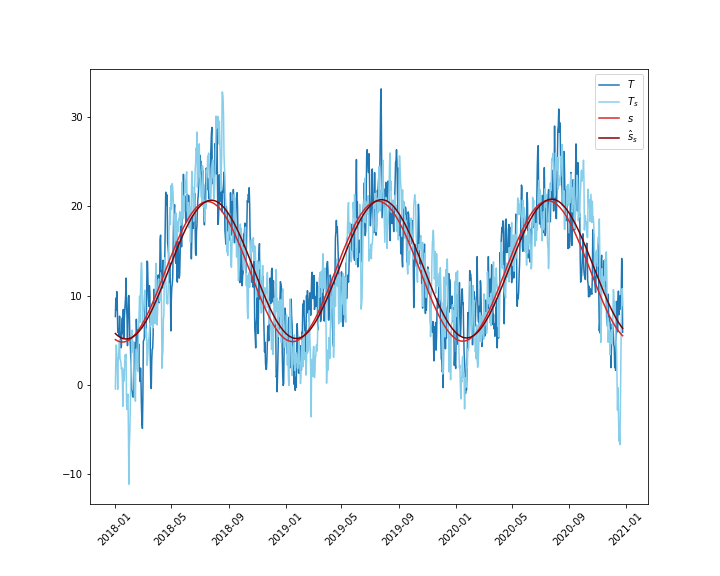}
    \end{minipage}
    \caption{Estimation of the trend and seasonal function $s$ on both real temperature $T$ (blue) and simulated temperature $T_s$ (light blue) for Stockholm (left) and Paris (right) and $Q=10$. This function is plotted in red (resp. brown) when estimated on observed (resp. simulated) data. }
    \label{fig:simu_T_estim}
\end{figure}

\subsubsection{Estimation of $K$, $\sigma^2(\cdot)$ and $\eta^2$ on the simulated realized volatility $\hat{\zeta}$ }

Let us recall that instantaneous volatility $\zeta$ is not observable and we estimate it by $\hat{\zeta}$  defined in Equation~\eqref{vol}.  In this paragraph, we apply the same process as in Section \ref{section:Parameter estimation} to simulated data.

We first focus on the effect of $Q$, i.e. the size of the averaging window~\eqref{vol}, on the estimation of parameters. For simplicity purposes, we start by setting the trigonometric coefficients of $\sigma^2$ to $0$ and study the impact on $K$ and $\eta^2$.  For each time window $Q$ and city, we perform $50,000$ simulations of daily volatility $\zeta$, by using~\eqref{schemes}. We average this series trough the time window $Q$ and then estimate the corresponding $(K, \eta^2)$. This exercise enables to analyse the influence of the time window $Q$ on the capacity to well estimate $(K, \eta^2)$.

\begin{table}[h!]
\centering
\begin{tabular}{r|rrrrrrrrrrrrrrr}
City & $\hat{K}$ & $\hat{K}_{Q=1}$ & $\hat{K}_{Q=2}$ & $\hat{K}_{Q=5}$ &  $\hat{K}_{Q=8}$ &  $\hat{K}_{Q=10}$  & $\hat{K}_{Q=12}$ \\ \hline \hline
Stockholm & 0.147 & 2.261 & 0.886  & 0.301  & 0.190& 0.157  & 0.140 \\
Paris & 0.396 & 2.853 & 1.336 & 0.552 & 0.403 & 0.286  & 0.265 \\
Amsterdam & 0.335 & 2.578 & 1.159  & 0.463  & 0.345 & 0.266  & 0.260  \\
Berlin & 0.255 & 2.590 & 1.083 & 0.408 & 0.278 & 0.243  & 0.208  \\
Brussels & 0.255 & 2.540 & 1.042  & 0.401  & 0.262  & 0.228 & 0.208  \\
London & 0.774 & 3.363 & 1.637  & 0.880 & 0.651  & 0.459  & 0.464  \\
Rome & 0.332 & 2.381 & 1.059  & 0.433 & 0.303  & 0.264 & 0.216  \\
Madrid & 0.269 & 2.495 & 1.067  & 0.407 & 0.278 & 0.260 & 0.219 
\end{tabular}
\caption{Estimation of $K$ for different averaging time windows $Q$.}
\label{tab:K_robust}
\end{table}

\begin{table}[h!]
\centering
\begin{tabular}{r|rrrrrrrrrrrrrrr}
City & $\widehat{\eta^2}$ & $\widehat{\eta^2}_{Q=1}$ & $\widehat{\eta^2}_{Q=2}$  & $\widehat{\eta^2}_{Q=5}$ &  $\widehat{\eta^2}_{Q=8}$ & $\widehat{\eta^2}_{Q=10}$ & $\widehat{\eta^2}_{Q=12}$ \\ \hline \hline
Stockholm & 0.629 & 56.229 & 12.288  & 2.123 & 0.896 & 0.644  & 0.499 \\
Paris & 1.043 & 56.429 & 13.385  & 2.506 & 1.156 & 0.690  & 0.531 \\
Amsterdam & 0.929 & 46.609 & 11.364 & 2.013& 1.010  & 0.625  & 0.528  \\
Berlin & 0.884 & 61.162 & 13.930 & 2.377 & 1.070 & 0.795  & 0.580  \\
Brussels & 0.713 & 47.390 & 10.652  & 1.921  & 0.834 & 0.599 & 0.474  \\
London & 1.605 & 45.348 & 11.535 & 2.608 & 1.236 & 0.723 & 0.588  \\
Rome & 0.988 & 35.879 & 8.962  & 1.686  & 0.787  & 0.561  & 0.407  \\
Madrid & 0.737 & 43.787 & 10.316  & 1.862  & 0.809  & 0.657  & 0.430 
\end{tabular}
\caption{Estimation of $\eta^2$ for different averaging time windows $Q$}
\label{tab:eta_robust}
\end{table}

Table \ref{tab:K_robust} and \ref{tab:eta_robust} represent different estimates of $(K, \eta^2)$ depending on window width $Q$. First, we can observe that for small $Q$,  $(K, \eta^2)$  is overestimated due to the preponderance of the noise related to volatility estimation. On the contrary,  for large values of $Q$,  $(K, \eta^2)$  is underestimated due to the averaging effect. Therefore there is a need to find a trade-off between the effect of the noise and the effect of the window averaging. Depending on the city the most efficient $Q$ can range from 5 to 10. 

We now focus on the impact of~$Q$ on the estimation of the function $\sigma^2$. We have plotted in Figure \ref{fig:simuvol} the estimated trend with the original one for $Q=5$ and $Q=12$. We see that $\sigma^2$ is correctly estimated, independently of $Q$.  
From this numerical study, the choice of $Q=10$ appears to be quite reliable. Of course, as one may expect, the parameters are not as well estimated as for $\kappa$ and $s(\cdot)$. They still however give the correct magnitude of the parameters, which is acceptable for risk management. Since we are then interested in evaluating derivative products, we will then analyse the sensitivity to these parameters, see Section~\ref{subsec_sensi}, which can be done efficiently in    Model~\eqref{stoch}.

\begin{figure}[h!]
    \begin{minipage}[b]{0.5\linewidth}
        \centering
        \includegraphics[width=\textwidth]{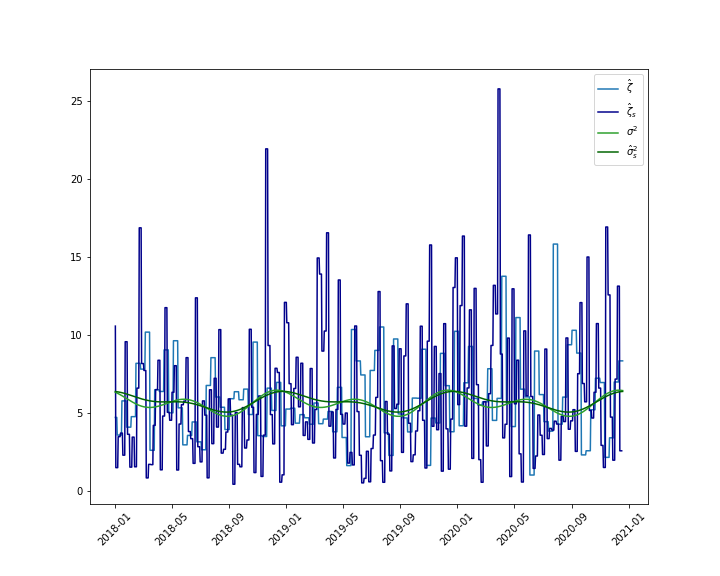}
    \end{minipage}
    \begin{minipage}[b]{0.5\linewidth}
        \centering
        \includegraphics[width=\textwidth]{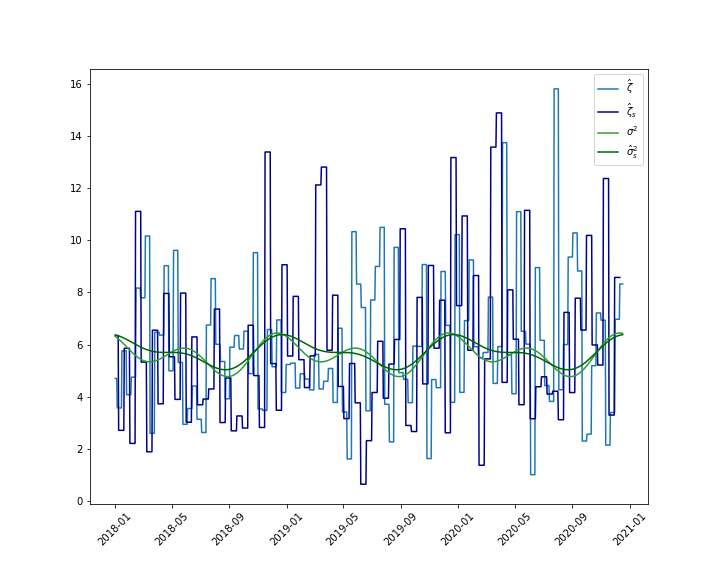}
    \end{minipage}
    \caption{Plots of observed volatility process $\hat{\zeta}$ (blue) and simulated volatility processes $\hat{\zeta}_s$ (dark blue) for Paris for averaging windows $Q$ equals 5 (left) and 12 (right). The function $\sigma^2$ is in green while the estimated functions for $Q=5$ and $Q=12$ are in dark green.}
    \label{fig:simuvol}
\end{figure}

\section{Application to pricing weather derivatives}\label{Sec_pricing}

The previous sections concentrate on modeling the daily average temperature and on the estimation of the parameters of the model. However, the final objective of our model is to better assess the risk related to weather temperature derivatives. This section will focus on how we evaluate the average payoff of these derivatives and how Model~\eqref{stoch} improves our capacity to understand their risk.

\subsection{Temperature derivatives}

\subsubsection{Average temperature indices}

Temperature derivatives are financial products used to hedge weather risk. The covers are often based on an index corresponding to a proxy of the buyer's financial risk. This index corresponds to an aggregate of a more granular meteorological parameter which in this document is the average daily temperature. There exist different possible indices, the main ones being HDD (Heating Degree Days), CDD (Cooling Degree Days) and CAT (Cumulative Average Temperature):

\begin{equation*}
   \begin{aligned}
  HDD & := \sum_{t=t_1}^{t_2} \max(0, T_b - T_t), \   CDD  := \sum_{t=t_1}^{t_2}  \max(0, T_t - T_b), \   CAT  := \sum_{t=t_1}^{t_2}  T_t,
    \end{aligned}\label{index} 
\end{equation*}

\noindent where $T_t$ corresponds to the average daily temperature on day $t$, $T_b$ to a base temperature, $t_1$ to the inception date and $t_2$ the exit date of the contract. We will call risk period the time period between $t_1$ and~$t_2$.

Physically speaking, the HDD corresponds to the cumulative degrees needed to heat a given building. Hence the base temperature $T_b$ corresponds to the temperature at which heating is probably switched on and the cumulative HDD measures the demand on energy of the building. The base temperature $T_b$ varies depending on the country. In the EU, $T_b$ is often taken as equal to $15.5^{\circ} C$ while in the US it corresponds to $65^{\circ} F$. Symmetrically, CDD corresponds to the energy demand for air conditioning. Finally, CAT corresponds to cumulative temperature degrees which is related to the energy demand between $t_1$ and $t_2$.

In the following of this document, we will focus on the HDD index however the methodology presented can be applied to all average temperature related indices. Particularly for options on the CAT, the Fast Fourier Transform approach would enable to get a very efficient pricing method.

\subsubsection{Payoff function}

Weather derivatives are used to hedge weather risk. They trigger a payment depending on an aggregate temperature index. The payment is defined given a payoff structure. Standard payoff structures correspond to capped put or call options applied to the aggregate index. For simplification purposes we will suggest the payoff structure:

\begin{equation}
   \begin{aligned}
  \min((HDD - HDD_{strike})^+ , L).
    \end{aligned}\label{payoff} 
\end{equation}

As we want to price this kind of instruments our objective is to understand the characteristics of the payment distribution (expectation, VaR and CVaR) under the real world probability. In particular, we consider the average payoff of the derivative

\begin{equation}
   \begin{aligned}
   \E [  D(t_0, t_2)  \min((HDD - HDD_{strike})^+ , L) ],
   \end{aligned}\label{payoff_disc} 
\end{equation}

\noindent where $D(t_0, t_2)$ is a discount rate that will be taken equal to $1$ in this paper. Note that this is not a fair price: there is no market dealing HDD continuously and therefore the classical pricing theory of Black and Scholes does not apply. The calculation of the average payoff~\eqref{payoff_disc}, as well as other indicators on the distribution of $\min((HDD - HDD_{strike})^+ , L)$ such as the variance and quantiles, is used in practice to propose a price over the counter. Thus, a very accurate evaluation of~\eqref{payoff_disc} for a given model is not really at stake: one is more interested in evaluating risk and how the average payoff may change under stressed parameters.

Finally, there is no consensus on how to choose $HDD_{strike}$. However, it is a market practice to use quantiles, and particularly historical quantiles of the index, to define this strike. In the present paper, we consider the  90\% quantile which is within market practices.

\subsection{Monte-Carlo Approach}

A first pricing approach to identify the distribution of payments is to simulate temperature paths based on the discretization schemes in~\eqref{schemes} for $\Delta=1$. We proceed as follows:

\begin{enumerate}
    \itemsep0em 
    \item Simulate temperature paths starting from the pricing date $t_0$, the day until which we can observe temperature data, to the expiration date $t_2$.
    \item Compute simulated HDD between $t_1$ and $t_2$ for each of the paths and obtain an HDD distribution.
    \item Either fix an arbitrary $HDD_{strike}$ or choose a quantile to select the moneyness of the structure.
    \item Deduce the payment distribution.
    \item Compute payment distribution characteristics: mean, VaR and CVaR.
\end{enumerate}

Figure \ref{fig:MC} shows the results of this method for Paris temperature in 2019. Contracts last one month and are computed 30 days in ahead i.e. $t_1-t_0=30$. We consider the payoff function~\eqref{payoff} with $L=+\infty$ and $HDD_{strike}$ set to the 90\% empirical quantile of the $HDD$ distribution obtained with Model~\eqref{stoch}. We perform $50,000$ Monte Carlo simulations for the Ornstein–Uhlenbeck model and for Model~\eqref{stoch}. All these choices are challenged in the following sections.

\begin{figure}[h!]
    \begin{minipage}[b]{0.5\linewidth}
        \centering
        \includegraphics[width=\textwidth]{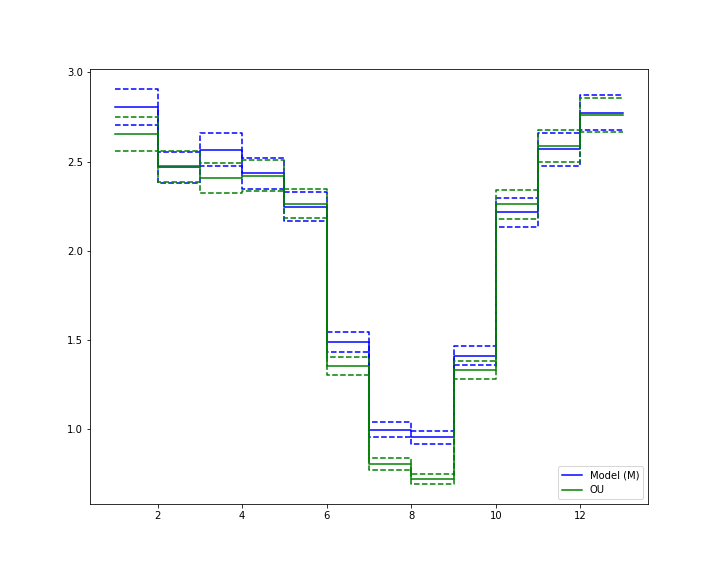}
        \caption*{Mean with 95\% confidence interval}
    \end{minipage}
    \begin{minipage}[b]{0.5\linewidth}
        \centering
        \includegraphics[width=\textwidth]{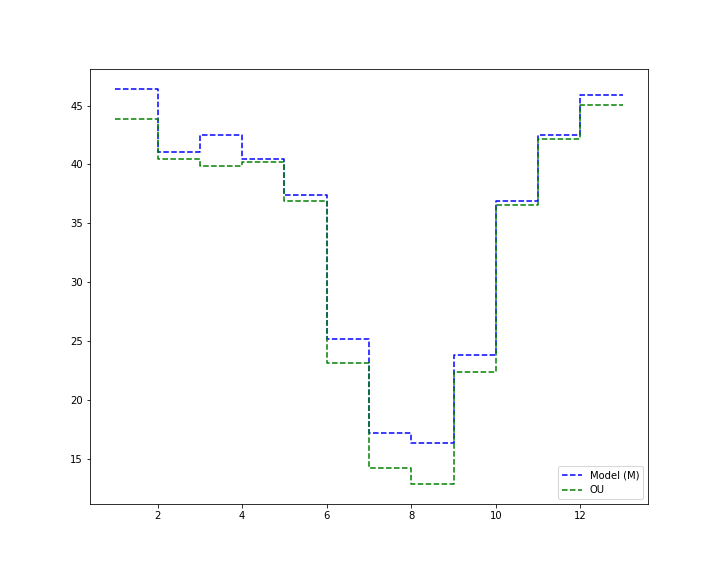}
        \caption*{Conditional Value at Risk at 95\%}
    \end{minipage}
    \caption{Different metrics of the payment distribution for $50,000$ Monte Carlo simulations, Paris, a cumulation period of a month, a forecast 30 days ahead and $HDD_{strike}$ corresponding to a 90\% quantile of the monthly HDD. Monte Carlo simulations are performed for both Model~\eqref{stoch} and the Ornstein–Uhlenbeck model~\eqref{T}.  }\label{fig:MC}
\end{figure}

From Figure \ref{fig:MC} we can see that both  Model~\eqref{stoch} and the Ornstein–Uhlenbeck model lead to similar expected payoffs for winter months while Model~\eqref{stoch} states higher expected payoffs for summer months. Model~\eqref{stoch} tends to have heavier left tails which are particularly visible in summer months. This explains slightly higher mean payoffs for Model~\eqref{stoch} during these months. Nevertheless, it should be noted that these derivatives are mainly sold for winter months to cover against cold waves. During these months both models give similar mean payoffs.

In terms of risk metrics, we compute Conditional Value at Risks at 95\% for both models. We can see  that Model~\eqref{stoch} presents again heavier tails.  Thus, as already noticed from Figures~\ref{fig:qqplot_simu_OU} and~\ref{fig:qqplot_simu}, Model~\eqref{stoch} is more conservative and reduces the problem of underestimating rare events related to the Ornstein–Uhlenbeck Gaussian framework. 

\subsection{Fast Fourier Transform Approach}

This section explores an alternative methodology for HDD pricing thanks to the Fast Fourier Transform (FFT) approach developed by Carr and Madan \cite{carr1999option}. {This method has been widely used in the literature for pricing, we just mention here the recent work of Benth et al.~\cite{BDK} for an application close to ours.}

\subsubsection{The characteristic function}

In order to apply FFT pricing, we first calculate the characteristic function of $(\tilde{T}_t,\zeta_t,\int_0^t\tilde{T}_s ds)$. A semi explicit formula is available because of the affine structure of Model~\eqref{stoch}.

\begin{prop}\label{prop_fct_car}
Let $0\le t \le t'$. Let $(\tilde{T},\zeta)$ be the solution of~\eqref{stoch} with $\rho=0$. The characteristic function of $(\tilde{T}_{t'},\zeta_{t'})$ given $\cF_t$ is, for $u_1,u_2,u_3 \in \R$,
\begin{equation}\label{fct_car}
\E\left[\exp\left( i[u_1 \tilde{T}_{t'}+u_2\zeta_{t'}+u_3 \int_t^{t'} \tilde{T}_s ds] \right) |\cF_t \right]=\exp(a_0(t,t')+ a_1(t'-t) \tilde{T}_t+a_2(t'-t)\zeta_t),
\end{equation}
where  $a_2$ is the unique solution on $\R_+$ of the time inhomogeneous autonomous Riccati equation
\begin{equation}\label{def_a2}
a_2'=-Ka_2-\frac 12 \left[u_1 \exp(-\kappa t)+u_3 \frac{1-\exp(-\kappa t)}{\kappa}\right]^2 
+\frac 12 \eta^2 a_2^2, \  a_2(0)=iu_2,
\end{equation}
$a_1(t)=iu_1 \exp(-\kappa t)+iu_3 \frac{1-\exp(-\kappa t)}{\kappa}$ and $a_0(t,t')=K\int_t^{t'} \sigma^2(s) a_2({t'-}s)ds$. Besides, the real part of $a_2(t)$ remains nonpositive for all $t\ge 0$.
\end{prop}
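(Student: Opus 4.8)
The plan is to use the linearity of the $\tilde T$-equation (available because $\rho=0$) together with the independence of $Z$ and $W$ to collapse this three‑dimensional transform into a one‑dimensional affine problem for the time‑inhomogeneous CIR process $\zeta$, and then to close the argument by a Feynman--Kac martingale verification; the only genuinely new ingredient is the a priori control of the Riccati solution, which is also what yields the last assertion. \emph{Step 1 (linearisation):} since $d\tilde T_s=-\kappa\tilde T_s\,ds+\sqrt{\zeta_s}\,dZ_s$ one has $\tilde T_s=e^{-\kappa(s-t)}\tilde T_t+\int_t^s e^{-\kappa(s-r)}\sqrt{\zeta_r}\,dZ_r$ for $s\ge t$; substituting this into $u_1\tilde T_{t'}+u_3\int_t^{t'}\tilde T_s\,ds$ and applying the stochastic Fubini theorem (licit by path-continuity of $\zeta$) gives
\[
u_1\tilde T_{t'}+u_3\!\int_t^{t'}\!\tilde T_s\,ds=\Big(u_1e^{-\kappa(t'-t)}+u_3\tfrac{1-e^{-\kappa(t'-t)}}{\kappa}\Big)\tilde T_t+\int_t^{t'}\!g(r)\sqrt{\zeta_r}\,dZ_r,
\]
with $g(r):=u_1e^{-\kappa(t'-r)}+u_3\tfrac{1-e^{-\kappa(t'-r)}}{\kappa}$; the bracketed coefficient of $\tilde T_t$ is exactly $a_1(t'-t)/i$, which accounts for the $a_1$-term in~\eqref{fct_car}.

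\emph{Step 2 (integrating out $Z$):} set $\mathcal G:=\mathcal F_t\vee\sigma(W_r-W_t:r\ge t)$. Given $\zeta_t$, the whole path $(\zeta_s)_{s\ge t}$ is $\mathcal G$-measurable, whereas the increments of $Z$ after $t$ are independent of $\mathcal G$ (independence of the coordinate Brownian motions $W,Z$ and of $\mathcal F_t$); hence, by a standard conditioning argument, conditionally on $\mathcal G$ the integral $\int_t^{t'}g(r)\sqrt{\zeta_r}\,dZ_r$ is centred Gaussian with variance $\int_t^{t'}g(r)^2\zeta_r\,dr$, while $\zeta_{t'}$ is $\mathcal G$-measurable. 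Conditioning first on $\mathcal G$ and then on $\mathcal F_t$ therefore reduces the claim to proving
\[
\E\!\left[\exp\!\Big(iu_2\zeta_{t'}-\tfrac12\int_t^{t'}g(r)^2\zeta_r\,dr\Big)\,\Big|\,\mathcal F_t\right]=\exp\big(a_0(t,t')+a_2(t'-t)\zeta_t\big).
\]

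\emph{Step 3 (affine ansatz for $\zeta$):} for $s\in[t,t']$ put $N_s:=\exp\!\big(-\tfrac12\int_t^s g(r)^2\zeta_r\,dr+a_0(s,t')+a_2(t'-s)\zeta_s\big)$, with $a_2,a_0$ chosen so that $N_{t'}$ matches the integrand above, i.e. $a_2(0)=iu_2$ and $a_0(t',t')=0$. Since $g(t'-\tau)=u_1e^{-\kappa\tau}+u_3\tfrac{1-e^{-\kappa\tau}}{\kappa}$, applying It\^o's formula with $d\zeta_s=-K(\zeta_s-\sigma^2(s))\,ds+\eta\sqrt{\zeta_s}\,dW_s$ and cancelling the finite-variation part of $N$ splits into two requirements: the coefficient of $\zeta_s$ must vanish, which is precisely the Riccati equation~\eqref{def_a2} for $a_2$ read in the time-to-maturity variable; and the remaining $\zeta$-free term gives a first-order linear ODE for $a_0$ which integrates to the stated $a_0(t,t')=K\int_t^{t'}\sigma^2(s)a_2(s)\,ds$. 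It then remains to check that $a_2$ is well defined on all of $\R_+$ and that $N$ is a true martingale, not just a local one.

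\emph{Step 4 (Riccati control and conclusion)} — this is the main obstacle. Local existence and uniqueness for~\eqref{def_a2} follow from Cauchy--Lipschitz (polynomial right-hand side, continuous coefficients). Writing $a_2=p+iq$, the imaginary part satisfies the \emph{linear} ODE $q'=(-K+\eta^2 p)q$ with $q(0)=u_2$, while the real part satisfies $p'=\tfrac12\eta^2 p^2-Kp+f(\tau)$ with $p(0)=0$ and $f(\tau):=-\tfrac12 g(t'-\tau)^2-\tfrac12\eta^2 q(\tau)^2\le 0$. Since the constant function $0$ is a supersolution of the $p$-equation with matching initial value, the ODE comparison principle gives $p\le 0$ on the maximal interval of existence; this yields $-K+\eta^2 p\le -K$, hence $|q|\le|u_2|$, so $f$ is bounded, and the quadratic term $\tfrac12\eta^2 p^2$ prevents $p$ from escaping to $-\infty$. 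Thus $p,q$ remain bounded, the solution is global on $\R_+$, and $\mathrm{Re}\,a_2=p\le 0$ — the last claim of the proposition. Finally, $\mathrm{Re}\,a_2\le 0$, $\zeta\ge 0$ and $\sigma^2\ge 0$ together force $|N_s|\le 1$ on $[t,t']$; the drift computation of Step~3 makes $N$ a local martingale, and a bounded local martingale is a true martingale, so $\E[N_{t'}\mid\mathcal F_t]=N_t=\exp(a_0(t,t')+a_2(t'-t)\zeta_t)$. Combining this with the deterministic prefactor $e^{a_1(t'-t)\tilde T_t}$ from Step~1 yields~\eqref{fct_car}.
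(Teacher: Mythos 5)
Your proof is correct, and it reaches the same verification engine as the paper by a genuinely different reduction. The paper applies It\^o's formula directly to the joint exponential ansatz $\mathcal{E}_s=\exp(a_0(s,t')+a_1(t'-s)\tilde T_s+a_2(t'-s)\zeta_s+iu_3\int_0^s\tilde T_u\,du)$: cancelling the drift produces at once the linear ODE for $a_1$, the Riccati equation for $a_2$ and the formula for $a_0$, and the bound $|\mathcal{E}_s|\le e^{\mathfrak{R}(a_0(s,t'))}$ (from $a_1\in i\R$ and $\mathfrak{R}(a_2)\le 0$) upgrades the local martingale to a true one. You instead use $\rho=0$ twice: first to solve the linear $\tilde T$-equation and rewrite $u_1\tilde T_{t'}+u_3\int_t^{t'}\tilde T_s\,ds$ as an $\mathcal{F}_t$-measurable term plus a single $dZ$-integral, then to integrate out $Z$ by conditioning on the $W$-path (the Hull--White/Romano--Touzi device), which collapses the statement to a one-dimensional exponential-affine transform of the CIR process $\zeta$ alone. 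This buys a transparent explanation of why $a_1$ is explicit while only $a_2$ solves a Riccati equation, at the price of invoking the classical but not-quite-free fact that, conditionally on $\mathcal{G}$, the It\^o integral $\int_t^{t'}g(r)\sqrt{\zeta_r}\,dZ_r$ is centred Gaussian with variance $\int_t^{t'}g(r)^2\zeta_r\,dr$; note also that this conditioning is precisely what breaks for $\rho\ne 0$, whereas the paper's direct computation would survive with an extra cross term $\rho\eta a_1a_2\zeta_t$ in the drift. Your treatment of the Riccati equation (comparison with the zero supersolution to get $\mathfrak{R}(a_2)\le 0$, the linear equation for the imaginary part giving $|\mathfrak{I}(a_2)|\le |u_2|$, and a lower bound on $p'$ excluding finite-time blow-up) is a clean variant of the paper's first-exit-time argument and in fact avoids the degenerate case in which the paper's strict inequality $\mathfrak{R}(a_2'(\bar t))<0$ could fail; just observe that what rules out escape to $-\infty$ is that all three terms of $p'$ are bounded below once $p\le 0$, not the quadratic term specifically. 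One small point: the ODE you derive in Step~3 integrates to $K\int_t^{t'}\sigma^2(s)a_2(t'-s)\,ds$, not to the ``stated'' $K\int_t^{t'}\sigma^2(s)a_2(s)\,ds$; the latter is a typo in the proposition itself (the paper's discretization~\eqref{approx_a0} uses $a_2(t_l-t_j)$), so keep the version your computation actually produces.
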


\begin{proof} Let us first check that Equation~\eqref{def_a2} admits a unique solution, which is well defined for all $t\ge 0$. {When $u_1=u_2=u_3=0$, $a_2(t)=0$ is the unique solution and we get then $a_1(t)=0$, $a_0(t,t')$ so that~\eqref{fct_car} holds. We now exclude this case, and }  observe  that 
$$ \begin{cases}
\mathfrak{R}(a_2')=-K\mathfrak{R}(a_2)-\frac 12 \left[u_1 \exp(-\kappa t)+u_3 \frac{1-\exp(-\kappa t)}{\kappa}\right]^2 +\frac 12 \eta^2(\mathfrak{R}(a_2)^2-\mathfrak{I}(a_2)^2),\ \mathfrak{R}(a_2(0))=0,\\
\mathfrak{I}(a_2')=-K\mathfrak{I}(a_2)+ \eta^2\mathfrak{R}(a_2) \mathfrak{I}(a_2),\ \mathfrak{I}(a_2(0))=u_2,
\end{cases}$$
with $\mathfrak{R}(z)$ and $\mathfrak{I}(z)$ denoting the real and imaginary parts of   {a} complex number $z$. Let $\bar{t}=\inf \{t\ge  0: \mathfrak{R}(a_2(t))>0\}$. Since {$\mathfrak{R}(a_2(0))=0$, we have  $\mathfrak{R}(a'_2(0))=-\frac 12 (u_1^2 +\eta^2 u_2^2)<0 $ when $u_1\not=0$ or $u_2\not=0$ and thus we have $\bar{t}>0$. If $u_1=u_2=0$ and $u_3 \not=0$, we have $\mathfrak{I}(a_2(t))=0$,  $\mathfrak{R}(a'_2(0))=0$, $\mathfrak{R}(a''_2(0))=0$ and $\mathfrak{R}(a'''_2(0))=-u_3^2<0$ and thus again $\bar{t}>0$.} Then, we have $\mathfrak{I}(a_2(t))=u_2\exp \left(-Kt +\eta^2 \int_0^t\mathfrak{R}(a_2(s))ds\right)$ and thus $|\mathfrak{I}(a_2(t))| \le |u_2|$ for $t\in[0,\bar{t})$. We now observe that $\bar{t}$ cannot be finite. {If it were finite, we would have $\mathfrak{R}(a_2(\bar{t}))=0$ by continuity and then 
$$\mathfrak{R}(a'_2(\bar{t}))= -\frac 12 \left[u_1 \exp(-\kappa \bar{t})+u_3 \frac{1-\exp(-\kappa \bar{t})}{\kappa}\right]^2 -\frac 12 \eta^2\mathfrak{I}(a_2(\bar{t}))^2\le 0.$$
If $\mathfrak{R}(a'_2(\bar{t}))<0$, we get $\mathfrak{R}(a_2(t))>0$ in a left neighbourhood of $\bar{t}$ which is impossible. If $\mathfrak{R}(a'_2(\bar{t}))=0$, we then have $u_1 \exp(-\kappa \bar{t})+u_3 \frac{1-\exp(-\kappa \bar{t})}{\kappa}=0$ and $\mathfrak{I}(a_2(\bar{t}))=0$. The latter gives $u_2=0$. We check then that $\mathfrak{R}(a''_2(\bar{t}))=0$ and $\mathfrak{R}(a'''_2(\bar{t}))=-(u_3-\kappa u_1)^2e^{-2\kappa \bar{t}}<0$ since $u_1 \exp(-\kappa \bar{t})+u_3 \frac{1-\exp(-\kappa \bar{t})}{\kappa}=0$ and $(u_1,u_3)\not = (0,0)$. Again, this gives that $\mathfrak{R}(a_2(t))>0$ in a left neighbourhood of $\bar{t}$ which is impossible.
} Thus, $\bar{t}=+\infty$ and the ODE is then clearly well defined for all $t\ge 0$.

We now check that we indeed have~\eqref{fct_car}. Let $\mathcal{E}_t=\exp(a_0(t,t')+ a_1(t'-t) \tilde{T}(t)+a_2(t'-t)\zeta_t+iu_3 \int_0^t \tilde{T}_sds)$. By Itô's formula, we get for $t\in [0,t']$,
\begin{align*}
d\mathcal{E}_t=&\mathcal{E}_t\Bigg[\partial_t a_0(t,t') - a_1'(t'-t) \tilde{T}_t- a_2'(t'-t)\zeta_t-\kappa a_1(t'-t)\tilde{T}_t +a_2(t'-t)K({\sigma}^2(t)-\zeta_t) \\
&+\frac 12 a_1^2(t'-t) \zeta_t 
+\frac 12 \eta^2 a_2^2(t'-t)\zeta_t  +iu_3 \tilde{T}_t \Bigg]dt +\mathcal{E}_t \sqrt{\zeta_t} [a_1(t'-t) dZ_t + \eta a_2(t'-t)dW_t].
\end{align*}
The first term vanishes, and we get 
$$\mathcal{E}_{t'}=\mathcal{E}_t+ \int_t^{t'}\mathcal{E}_s \sqrt{\zeta_s} [a_1(t'-s) dZ_s +\eta a_2(t'-s)dW_s]  $$
We note that $0\le { |\mathcal{E}_t|}\le \exp(a_0(t,t'))$ for $t\in [0,t']$ since $a_1 \in i \R$ and $\mathfrak{R}(a_2)\le 0$ { and that $\E[\zeta_t]=\zeta_0 e^{-Kt}+\int_0^t \sigma^2(s) e^{-K(t-s)}ds$ is integrable with respect to~$t$}. Thus, the integrand of the stochastic integral is square integrable, and we get  $$\mathcal{E}_t=\E[\mathcal{E}_{t'}|\cF_t]=\E\left[\exp\left( i[u_1 \tilde{T}_{t'}+u_2\zeta_{t'}+{u_3 \int_0^{t'} \tilde{T}_sds}] \right) \Bigg|\cF_t \right],$$ 
{which gives the claim. }
\end{proof} 

\begin{remark}\label{rk_fct_car}
Formula~\eqref{fct_car} can be extended easily to $u_2\in \R +i\R_+$. We then have $\mathfrak{R}(a_2(0))=-\mathfrak{I}(u_2)\le 0$, and the proof of Proposition~\ref{prop_fct_car} can be repeated step by step. 
\end{remark}

\subsubsection{Approximation of the characteristic function}\label{sub_sec_approx_riccati}

We now discuss the approximation of the characteristic function~\eqref{fct_car}. To do so, we consider a time step $\delta>0$, and we will assume that  $t = t_k = k \delta$ and $t' = t_l = l \delta$. Note that the function $a_1$ is fully explicit and does not need to be approximated. We use the trapezoidal rule to integrate the function $a_0$:
\begin{equation}\label{approx_a0}
a_0(t_k,t_l)  \approx K \sum_{j=k}^{l-1}  \frac 12 [{\sigma}^2(t_j)a_2(t_l-t_j)+\sigma^2(t_{j+1})a_2(t_l-t_{j+1})]\delta , \ k<l.
\end{equation}
The main issue may come from the discretization of the Riccati equation which may lead to instabilities if it is not well handled. Here, we take advantage of the fact that an explicit solution of~\eqref{def_a2} is known for $\kappa=0$ and $u_3=0$, see e.g.~\cite[p.~101]{alfonsi2015affine},
$$ a_2(t)=\Psi+\frac{2\sqrt{D} (\Psi-iu_2) }{\left( \eta^2(\Psi-iu_2)-2\sqrt{D}\right) \exp(-\sqrt{D} t)-\eta^2(\Psi-iu_2) },$$
with
\begin{align*}
  D&=K^2+\eta^2u_1^2,\quad    \Psi=\frac{K+\sqrt{D}}{\eta^2}.
\end{align*}
Thus, to solve~\eqref{def_a2}, we freeze on each interval $[t_k,t_{k+1}]$ the value of the time inhomogenous term to its value at $t=\frac{t_k+t_{k+1}}{2}$, and use the explicit formula. {This is the midpoint method that leads formally to a convergence of order $O(\delta^2)$.} This leads to:
\begin{equation} \label{char_func}
    a_2(t_{k+1}) =\Psi_k+\frac{2\sqrt{D_k} (\Psi_k-a_2(t_k)) }{\left( \eta^2(\Psi_k-a_2(t_k))-2\sqrt{D_k}\right) \exp(-\sqrt{D_k} \delta)-\eta^2(\Psi_k-a_2(t_k)) }
\end{equation}

\noindent where

\begin{equation*}
    D_k=K^2+\eta^2 \left(u_1 \exp\left(-\kappa \frac{t_k+t_{k+1}}2\right)+u_3 \frac{1-\exp\left(-\kappa \frac{t_k+t_{k+1}}2\right)}{\kappa}  \right)^2,\quad    \Psi_k =\frac{K +\sqrt{D_k}}{\eta^2}.
\end{equation*}

\begin{figure}[h!]
    \begin{minipage}[b]{0.45\linewidth}
        \centering
        \includegraphics[width=\textwidth]{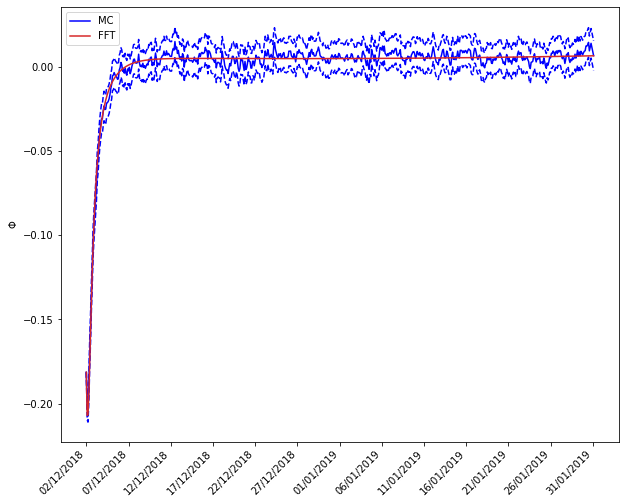}
    \end{minipage}
    \hspace{0.3cm}
    \begin{minipage}[b]{0.45\linewidth}
        \centering
        \includegraphics[width=\textwidth]{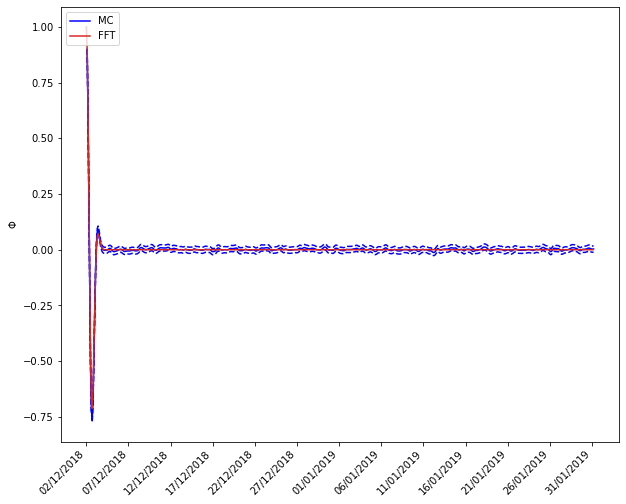}
    \end{minipage}
    \caption{Characteristic function $\E\left[\exp\left( iu_1 \tilde{T}_{t'}\right) \right]$ (left) and $\E\left[\exp\left( i u_3  \int_t^{t'} \tilde{T}_s ds   \right) |\cF_t  \right]$ (right) for Paris temperature during January 2019 for an observation time 30 days ahead and  $\delta=0.1$ day.}
    \label{fig:FFT_charact}
\end{figure}

We implement the three functions in~\eqref{char_func} which enable us to deduce the characteristic function of $(u_1 \tilde{T}(t)+u_2\zeta_{t})_{t\geq 0}$ where $(u_1,u_2) \in \mathbb{C}^2$. Figure \ref{fig:FFT_charact} shows the characteristic function of $(T_t)_{t\geq 0}$ calculated with the approximation~\eqref{char_func}. It is compared with the  Monte-Carlo estimator obtained with simulated path using~\eqref{schemes} (we have used here independent simulations for each values of $t'$). We can see that both methods give close results, which validates the relevance of the approximation. 

\subsubsection{Fast Fourier Transform for pricing HDD and related options}

Once we have the characteristic function of $(T_t)_{t\geq 0}$, we can price HDD by using Fourier inverse transform techniques. Here, we {adapt} the approach of Carr and Madan~\cite{carr1999option} that uses the Fast Fourier Transform {in order to calculate the cumulative distribution function of~$T_t$. This allows us to calculate then easily the average value different types of bespoke options. }

We first focus on the calculation of  $\E[(T_b- T_t)^+]$ {(resp. $\E[\min((T_b- T_t)^+,L)]$)}, that can be seen as the average price of a ``daily HDD" {(resp. capped daily HDD)}.  We will use this naming later on. The characteristic function $\Phi(u)=\E[e^{iu \tilde{T}_t}]$  is given by Proposition~\ref{prop_fct_car}. We have 
\begin{align}\label{price_dailyhdd}\E[(T_b- T_t  )^+]&= \int_{0}^{\infty} \mathbb{P}(T_b- T_t \geq x )dx = \int_{-\infty}^{T_b-s(t)} \mathbb{P}(\tilde{T}_t\leq x)dx  \\
{\E[\min((T_b- T_t  )^+,L)]}&{= \int_{0}^{L} \mathbb{P}(T_b- T_t \geq x )dx = \int_{T_b-s(t)-L}^{T_b-s(t)} \mathbb{P}(\tilde{T}_t\leq x)dx } \notag
\end{align}
We approximate this cumulative density function by using the Gil-Pelaez inversion formula~\cite[Theorem 4.2.3 p. 104]{alfonsi2015affine}:
$$   \mathbb{P}( \tilde{T}_t \leq x)=\frac 12 - \frac 1 \pi \lim_{m\to 0^+, M\to +\infty} \int_m^M \mathfrak{R}\left(\frac{e^{-ivx} \Phi(v)}{iv}\right)dv, \  \Phi(v)=\E[e^{iv\tilde{T}_t}].$$
Let $\delta_x,\delta_v>0$ be such that $\delta_x \delta_v=\frac{2\pi}{N}$ (one can take for example $\delta_x=\delta_v=\sqrt{\frac{2\pi}{N}}$ {or $\delta_x=\frac{L}{N-1}$ for the capped option so that $x_0$ defined below is equal to $T_b-s(t)-L$}). We define:
$$ v_{j+1/2}=(j+1/2)\delta_v, \ x_k=T_b-s(t) +(k-N+1)\delta_x, j \in \{0,{..., }N-1\}, k \in \{0,\dots, N-1\},$$
so that $x_{N-1}=T_b-s(t)$, and use the following approximation:
\begin{align}
   \notag \mathbb{P}( \tilde{T} \leq x_k) &\approx \frac 12 - \frac{\delta_v}{\pi} \mathfrak{R}\left( \sum_{j=0}^{N-1}\frac{e^{-iv_{j+1/2}x_k} \Phi(v_{j+1/2})}{iv_{j+1/2}}\right)\\ &= \frac 12 - \frac{\delta_v}{\pi} \mathfrak{R}\left(e^{-\frac 12 i\delta_v k \delta_x}\sum_{j=0}^{N-1}e^{-2i\pi  \frac{jk}{N}} \frac{ e^{-i (j+1/2)\delta_vx_0}\Phi(v_{j+1/2})}{iv_{j+1/2}}\right), \label{approx_cdf} 
\end{align}
since $ v_{j+1/2}x_k= 2\pi \frac{jk}{N}+(j+1/2)\delta_vx_0 + \frac 12 \delta_v k \delta_x$.
This amounts to use the {midpoint} rule and to truncate the integral at $M=N\delta_v$. {Other choices of quadrature are possible but have to be taken in compliance with the FFT.} Using~\eqref{approx_cdf}, we can obtain $(\mathbb{P}( \tilde{T} \leq x_k), 0\le k\le N-1)$ by applying the FFT to $\left(\frac{ e^{-i (j+1/2)\delta_vx_0}\Phi(v_{j+1/2})}{iv_{j+1/2}},0\le j\le N-1\right)$:  these $N$ values are obtain with a  time complexity of $O(N\log(N))$ (instead of $O(N^2)$ with the naive calculation of the sums).  We finally approximate the expectation of the daily HDD by:
\begin{equation}
    \E[(T_b- T_t )^+] \approx \delta_x\left( \sum_{k=0}^{N-2} \mathbb{P}(\tilde{T}_t\leq x_k)  + \frac12 \mathbb{P}(\tilde{T}_t\leq x_{N-1})\right).
\end{equation}

Figure \ref{fig:FFT} shows the characteristic function and the expected HDD in Paris during January 2019 comparing Monte Carlo simulation and FFT approach. Both graphs display a clear coherence between Monte Carlo simulations and the FFT approach. 
{In this case Monte Carlo simulations show precision given that we simulate $50,000$ scenarios. However, FFT pricing is more precise, smooth and faster. We perform FFT with $N=2^{17}$. }

\begin{figure}[h!]
    \begin{minipage}[b]{0.45\linewidth}
        \centering
        \includegraphics[width=\textwidth]{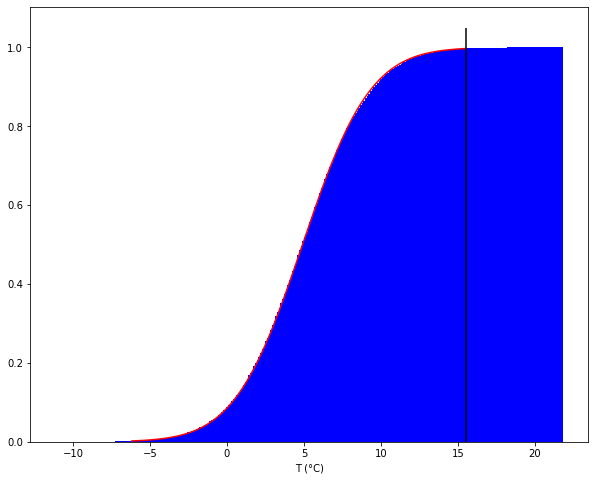}
    \end{minipage}
    \hspace{0.3cm}
    \begin{minipage}[b]{0.45\linewidth}
        \centering
        \includegraphics[width=\textwidth]{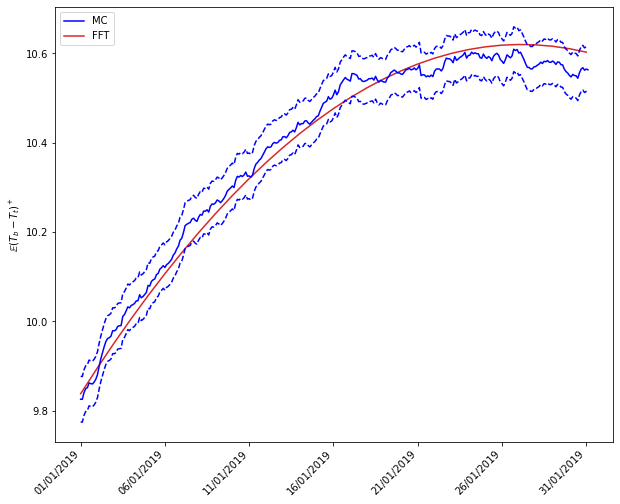}
    \end{minipage}
    \caption{Cumulative distribution function of 31st January 2019's daily temperature (left) and expected daily HDD during January 2019 days computed 30 days ahead of the month (right) by the FFT method (red) and Monte-Carlo (blue). The black vertical line  corresponds to $T_b$.}
    \label{fig:FFT}
\end{figure}

We now focus on the pricing on options on HDD. We observe from the distribution in Figure~\ref{fig:FFT} that we almost always have $T_t\le T_b$ in January, otherwise we would notice a Dirac mass at~$0$. In fact, with the standard strike $T_b=15.5$°C, we mostly have $T_t\le T_b$ during winter, and therefore $HDD\approx (t_2-t_1+1)T_b-CAT$, so that {the average value of the option~\eqref{payoff} can be approximated by }  
\begin{equation}\label{approx_HDDopt} {\E[\min((HDD-HDD_{strike})^+,L) ]\approx \E[\min(((t_2-t_1+1)T_b
-HDD_{strike}-CAT)^+,L)].}\end{equation}
The problem of computing the right hand side is then similar to the pricing of daily HDD~\eqref{price_dailyhdd}: the underlying is now $CAT$ instead of $T_t$. We can thus calculate the average payoff, provided that we know the characteristic function of the  CAT $\Phi(u)=e^{iu \sum_{t=t_1}^{t_2} s(t)} \E[e^{iu \sum_{t=t_1}^{t_2} \tilde{T}_t }]$. To do so, it is possible to use formula~\eqref{fct_car} inductively with $u_3=0$ (using Remark~\ref{rk_fct_car})  in order to calculate  $\E[e^{iu \sum_{t=t_1}^{t_2} \tilde{T}_t }|\cF_{t_2-\ell}]$ for $\ell=1,\dots,t_2-t_1$ and then $\Phi$. This is however cumbersome, and we prefer to make the following approximation
$$ \Phi(u)\approx e^{iu \sum_{t=t_1}^{t_2} s(t)}  \E[e^{iu \int_{t_1}^{t_2+1} \tilde{T}_t dt }]. $$
We apply Proposition~\ref{prop_fct_car} with $u_1=u_2=0$ and $u_3=u$, and get $\E[e^{iu \int_{t_1}^{t_2+1} \tilde{T}_t dt }|\cF_{t_1}]=\exp(a_0(t_1,t_2+1)+i u \frac{1-e^{-\kappa (t_2+1-t_1) }}\kappa  \tilde{T}_{t_1}+a_2(t_2+1-t_1)\zeta_{t_1})$.
Hence, for {$t_0 \leq t_1 \leq t_2$,}
\begin{align*}
    \E[e^{iu \int_{t_1}^{t_2+1} \tilde{T}_t dt }|\cF_{t_0}] &= \E[{\E}[e^{iu \int_{t_1}^{t_2+1} \tilde{T}_t dt }|\cF_{t_1}] |\cF_{t_0}]\\
    &=\E[\exp(a_0(t_1,t_2+1)+i u \frac{1-e^{-\kappa (t_2+1-t_1) }}\kappa  \tilde{T}_{t_1}+a_2(t_2+1-t_1)\zeta_{t_1}) |\cF_{t_0}]\\
    &=\exp(a_0(t_1,t_2+1))\exp(\check{a}_0(t_0,t_1)+\check{a}_1(t_1 -t_0) \tilde{T}_{t_0}+\check{a}_2(t_1-t_0)\zeta_{t_0}) 
\end{align*}
To obtain~$\check{a}_0, \check{a}_1, \check{a}_2$ and the above characteristic function,  we apply a second time  Proposition~\ref{prop_fct_car} with $u_1= u \frac{1-e^{-\kappa (t_2+1-t_1) }}\kappa$, $u_2=-ia_2(t_2+1-t_1)$ and $u_3=0$. Figure~\ref{fig:FFT_CAT} compares CAT distribution obtained with Monte Carlo and FFT inverse methods for the month of January 2019. We can observe a good fit between both methods.\\

This section has focused on the capacity of Fast Fourier Transform method to compute explicit formulas for options on $T$ and $CAT$. This methodology enables to get rid of the computation burden of Monte Carlo simulations. However, not all the indices or payoff functions can be explicited with FFT. In particular, derivatives that integrate double non-linearities, like put or call payoff functions applied to HDD, cannot be explicitly computed with the FFT method. The next section will focus on how to use FFT results to increase the performance of Monte Carlo simulation for such cases.

\begin{figure}[h!]
    \centering
        \includegraphics[width=0.6\textwidth]{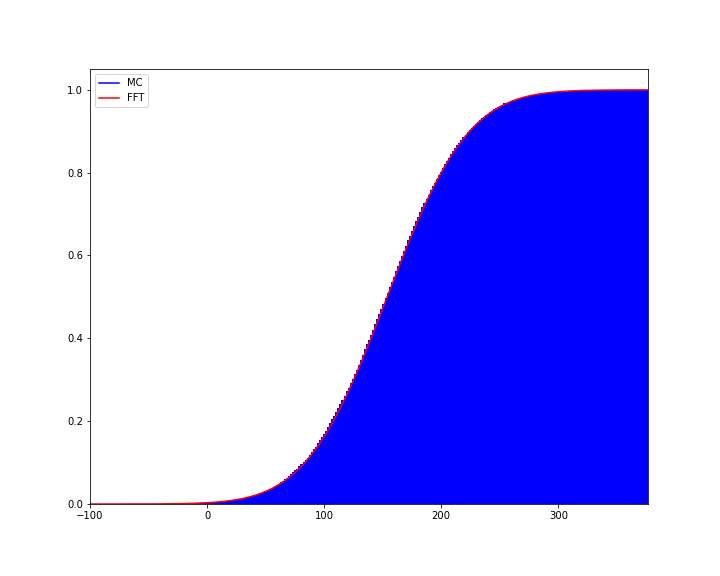}
    \caption{Cumulative distribution function of CAT for January 2019 and 30 days observation in advance computed by the FFT method (red) and Monte-Carlo  with $50,000$ simulations (blue). }
    \label{fig:FFT_CAT}
\end{figure}

\subsubsection{Control variates method for Monte-Carlo}\label{subsec_control_variates}

In practice the approximation~\eqref{approx_HDDopt} is precise when $\mathbb{P}(T_t>T_b)$ is close to zero. However, when this probability is small but not negligible, the approximation may not be enough precise. However, we can use the calculation above to run a Monte-Carlo method with the control variable {$\min((HDD-HDD_{strike})^+,L) - \lambda \min( ((t_2-t_1+1)T_b -HDD_{strike}-CAT)^+,L) $} in order to calculate the average payoff $\E[(HDD-HDD_{strike})^+ ]$. Namely, we write {(we take here $L=+\infty$ for simpler notation)} 
\begin{align}\label{control_variate} \E[(HDD-HDD_{strike})^+ ]=&\lambda \E[((t_2-t_1+1)T_b
-HDD_{strike}-CAT)^+] \\
&+\E\left[ (HDD-HDD_{strike})^+ - \lambda ((t_2-t_1+1)T_b
-HDD_{strike}-CAT)^+ \right],\notag
\end{align}
and we chose $\lambda$ that minimizes $Var\left[ (HDD-HDD_{strike})^+ - \lambda ((t_2-t_1+1)T_b -HDD_{strike}-CAT)^+ \right]$, i.e. 
$$ \lambda^*= \frac{Cov((HDD-HDD_{strike})^+,((t_2-t_1+1)T_b -HDD_{strike}-CAT)^+)}{Var(((t_2-t_1+1)T_b -HDD_{strike}-CAT)^+)}.$$
The first term of the right hand side of~\eqref{control_variate} is calculated by using the FFT while the second one is calculated by Monte-Carlo.

\begin{figure}[h!]
    \centering
    \includegraphics[width=0.6\textwidth]{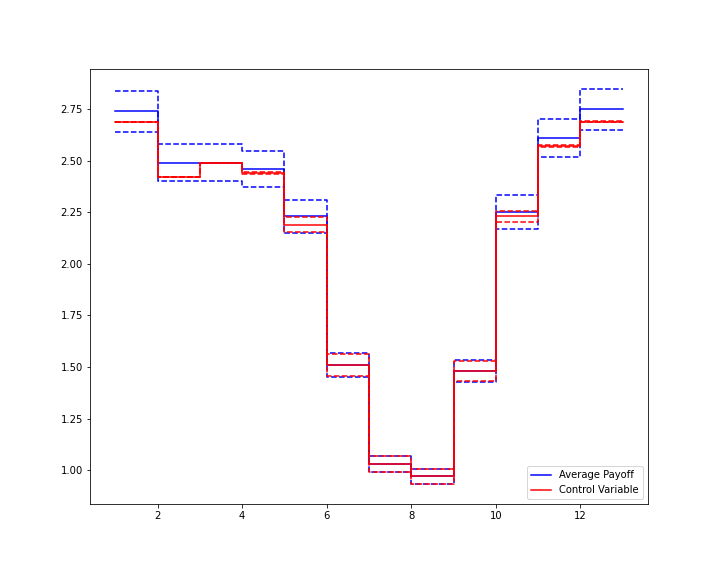}
    \caption{Expected payoffs forecasted 30 days ahead for the HDD derivative~\eqref{payoff} {(with $L=+\infty$)} on each month of 2019 (blue) and for the control variable (red). We performed $50,000$ Monte Carlo simulations, {dotted lines indicate the 95\% confidence interval}.}
    \label{fig:Control Variates}
\end{figure}

\begin{table}[h!]
\centering
\begin{tabular}{c|cccccccccccc}
Month & 1 & 2 & 3 & 4 & 5 & 6 & 7 & 8 & 9 & 10 & 11 & 12 \\ \hline \hline
Corr & 1.00 & 1.00 & 1.00 & 1.00 & 0.94 & 0.66 & 0.38 & 0.33 & 0.66 & 0.97 & 1.00 & 1.00 \\
VR & 2.41e5 & 5.24e4 & 4.73e3 & 2.22e2 & 5.08 & 1.19 & 1.01 & 1.01 & 1.20 & 9.84 & 3.92e2 & 1.40e4\end{tabular}
\caption{Correlation  and variance reduction (VR) brought by the control variates method for options computed during each month of 2019. Variance ratio corresponds to the variance of  $(\sum_{t=t_1}^{t_2}(T_b-T_t)^+ - HDD_{strike})^+$ divided by the variance of the control variable. }
\label{tab:Control Variates}
\end{table}

Figure~\ref{fig:Control Variates} and Table~\ref{tab:Control Variates} show the results of the implementation of the control variates method. First,   from Table~\ref{tab:Control Variates} we can see that the control variates method enables to decrease the variance up to $ 2.41 \times 10^5$ times, leading to a price computation $10^5$ time faster. Second, we can observe that the performance of the method depends on the correlation between $(\sum_{t=t_1}^{t_2}(T_b-T_t)^+ - HDD_{strike})^+$ and the control variable:  the more correlated they are, the more variance reduction we obtain (and the more approximation~\eqref{approx_HDDopt} is valid). Hence for the winter months the computation is more effective, which coincides with the months for which such options are sold. In Figure~\ref{fig:Control Variates}, we see that the confidence interval for winter months is considerably narrower.  

To sum up, this section has focused on exploring alternative pricing methodologies. The Fast Fourier Transform pricing method enables to bypass Monte Carlo simulations and to get analytical expressions for the expected payoffs of some derivatives. This enables direct expectation computations for some derivatives like $CAT$. For derivatives integrating non linear indices and payoffs, this method can be combined with the control variates method to decrease the computational cost of the Monte Carlo simulations. In our case, this method enables to considerably decrease the number of required simulations.
 
\subsection{Sensitivity study}\label{subsec_sensi}

This section studies the sensitivity of the pricing to the different parameters that were either imposed or estimated in the previous sections.

\noindent \paragraph{Sensitivity to $\kappa$}

We first analyse the sensitivity to the mean reverting parameter of the temperature dynamics. Figure \ref{fig:sensitivity_kappa} shows that increasing $\kappa$ has an important effect on the average payoff and the simulated HDD distribution and therefore on the pricing. Indeed when $\kappa$ increases the volatility loses its importance, the HDD distribution becomes more certain and therefore peaks around an expected average value. Similarly, the quantiles are less spread and therefore the strikes based on initially estimated $\kappa$ are less frequent and mean payoffs shrink.

\begin{figure}[h!]
    \begin{minipage}[b]{0.45\linewidth}
        \centering
        \includegraphics[width=1.1\textwidth]{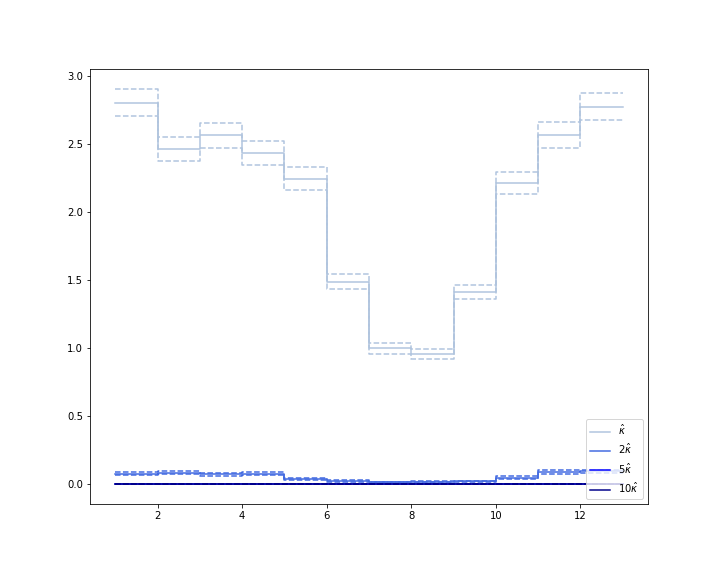}
    \end{minipage}
    \begin{minipage}[b]{0.45\linewidth}
        \centering
        \includegraphics[width=1.1\textwidth]{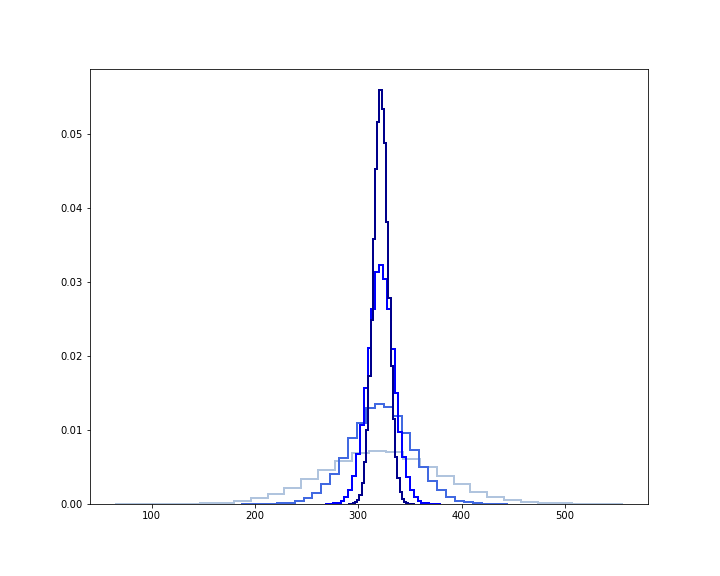}
    \end{minipage}
    \caption{Average payoffs for values of $\kappa \in \{\hat{\kappa}, 2\hat{\kappa}, 5\hat{\kappa}, 10\hat{\kappa}\}$ (left) and HDD distribution  for a derivative on the month of January 2019, forecasted 30 days ahead and based on $50,000$ Monte Carlo simulations (right). $HDD_{strike}$ is kept at 90\% empirical quantile of $\kappa = \hat{\kappa}$ for all simulations.}
    \label{fig:sensitivity_kappa}
\end{figure}

\noindent \paragraph{Sensitivity to $\eta^2$}

Figure \ref{fig:sensitivity_eta} shows different average payoffs and confidence intervals for $50,000$ simulations and different values of the volatility of the volatility  $\eta^2$. We can see that increasing $\eta^2$ creates more peaked distributions for HDD and heavier tails. This also leads to usually higher prices. Nevertheless, it should be noted that the impact of $\eta^2$ is marginal in winter months when this product is meant to be sold. In summer months, HDDs only capture extreme temperature left tails and this is when we can see a real impact of the volatility of Model \eqref{stoch}. Besides, let recall that the estimation of $\eta^2$ is sensitive to the choice of $Q$. Wrongly estimating this parameter would therefore mainly impact the pricing of derivatives on summer months where the demand of such derivatives is much lower.

\begin{figure}[h!]
    \begin{minipage}[b]{0.45\linewidth}
        \centering
        \includegraphics[width=1.1\textwidth]{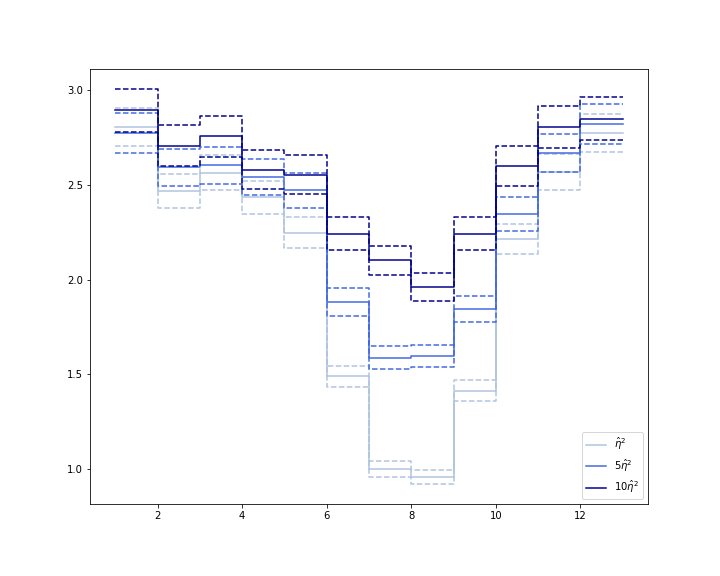}
    \end{minipage}
    \begin{minipage}[b]{0.45\linewidth}
        \centering
        \includegraphics[width=1.1\textwidth]{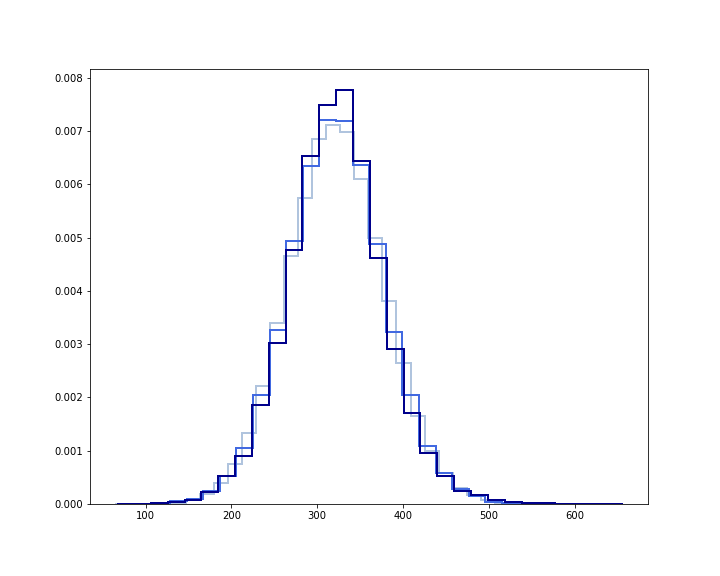}
    \end{minipage}
    \caption{Average payoffs for different values of  $\eta^2 \in \{\widehat{\eta^2}, 5\widehat{\eta^2}, 10\widehat{\eta^2}\}$ (left) and HDD distribution for a derivative on the month of January 2019, forecasted 30 days ahead and based on $50,000$ Monte Carlo simulations (right). $HDD_{strike}$ is kept at 90\% empirical quantile of $\eta^2  = \widehat{\eta^2}$ for all simulations. }
    \label{fig:sensitivity_eta}
\end{figure}

\noindent \paragraph{Sensitivity to $K$}

Figure \ref{fig:sensitivity_K} shows different average payoffs and confidence intervals for $50,000$ simulations. We can see that increasing $K$ creates less peaked distributions for HDD and lighter tails. This leads to usually lower mean payoffs when $K$ increases. Likewise, the impact is marginal in winter months when this product is meant to be sold. This phenomenon is intuitive as we increase the mean reverting term of the volatility, the volatility of the volatility losses weight in the dynamics and the extreme HDDs decrease.

\begin{figure}[h!]
    \begin{minipage}[b]{0.45\linewidth}
        \centering
        \includegraphics[width=1.1\textwidth]{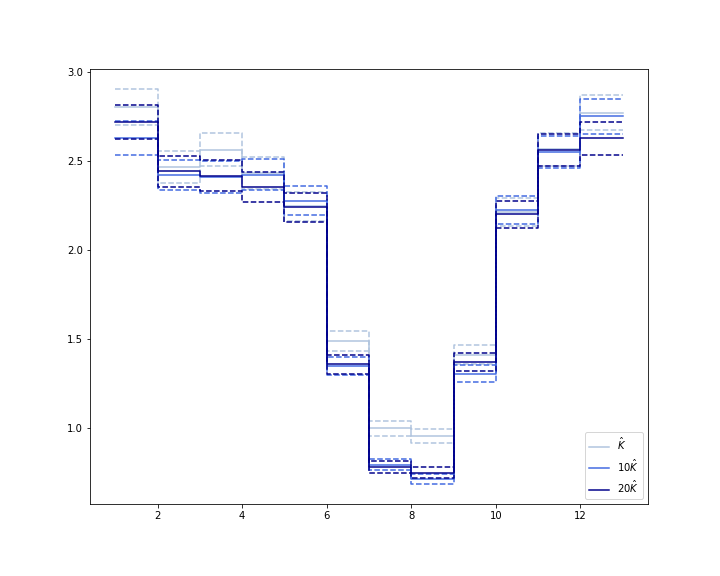}
    \end{minipage}
    \begin{minipage}[b]{0.45\linewidth}
        \centering
        \includegraphics[width=1.1\textwidth]{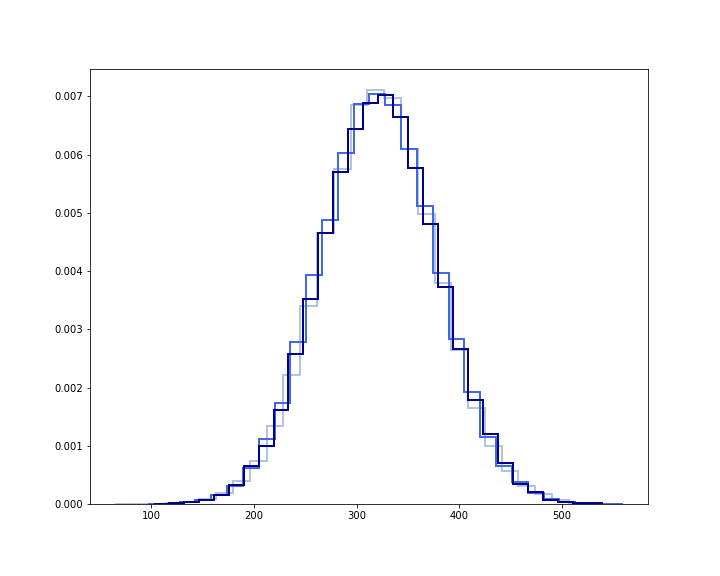}
    \end{minipage}
    \caption{Average payoffs for different values of  $K \in \{\hat{K}, 10 \hat{K}, 20\hat{K}\}$ (left), HDD distribution distribution starting from $HDD_{strike}$ (right) for a derivative on the month of January 2019, forecasted 30 days ahead and based on $50,000$ Monte Carlo simulations (right). $HDD_{strike}$ is kept at 90\% empirical quantile of $K  = \hat{K}$ for all simulations. }
    \label{fig:sensitivity_K}
\end{figure}

\noindent \paragraph{Sensitivity to $t_1-t_0$} Now we suppose we compute the price of the derivative different possible times ahead. Figure \ref{fig:sensitivity_time} shows temperature paths for different observation times $t_0$ but the same observed temperature and volatility at $t_0$. The derivative applies between the black vertical lines. We can observe that all the paths end up following the seasonality $s$. 

\begin{figure}[h!]
    \begin{minipage}[b]{0.30\linewidth}
        \centering
        \includegraphics[width=1.1\textwidth]{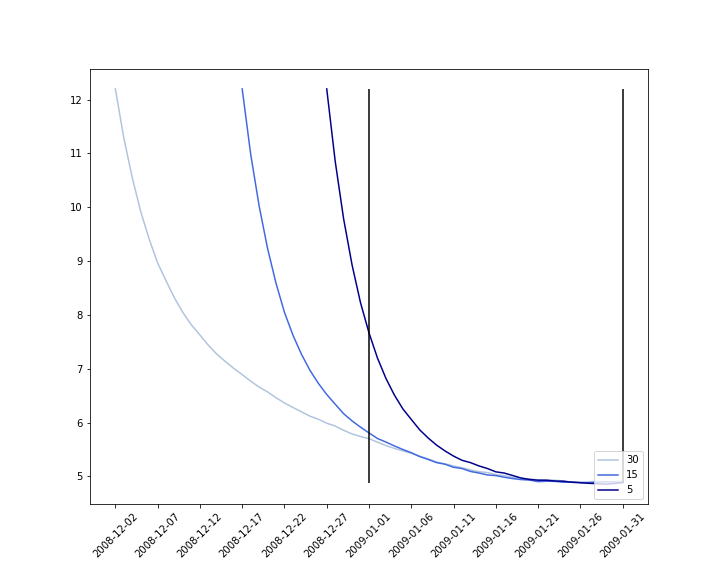}
    \end{minipage}
        \begin{minipage}[b]{0.30\linewidth}
        \centering
        \includegraphics[width=1.1\textwidth]{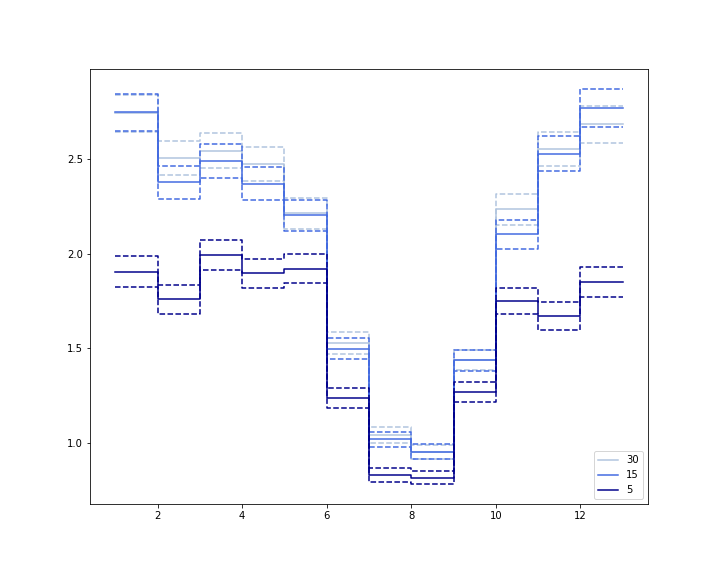}
    \end{minipage}
    \begin{minipage}[b]{0.30\linewidth}
        \centering
        \includegraphics[width=1.1\textwidth]{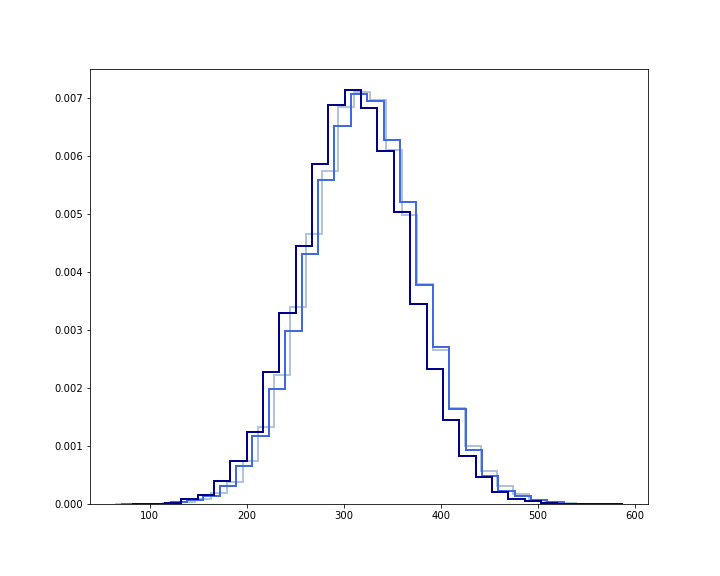}
    \end{minipage}
    \caption{Average temperature paths for different values of $t_1 - t_0$ (left), average payoffs for different values of $t_1 - t_0$ for a derivative on each month of 2019 (center) and HDD distribution for a derivative on the month of January 2019, forecasted 5, 15 and 30 days ahead and based on $50,000$ Monte Carlo simulations (right). Here, $T_{t_0} = s(t_0) + 2\sigma(t_0)$ and $\zeta_{t_0} = \sigma(t_0)$ for $t_0=30$. $HDD_{strike}$ is fixed at the 90\% quantile of the 30 days ahead simulation. In the left plot, the two black vertical lines represent times $t_1$ and $t_2$.}
    \label{fig:sensitivity_time}
\end{figure}

From simulated densities in Figure \ref{fig:sensitivity_time} we can first observe shifts depending on $t_1-t_0$. This significantly impacts the quantiles of these densites and therefore the $HDD_{strike}$. Second, we can note that the more ahead we forecast the less information we have. In this case, we can observe that while pricing 20 and 30 days ahead lead to similar average payouts during the risk period, forecasting 5 days ahead significantly impacts the average payoffs and therefore pricing. This element is key to answer the risk of antiselection.

\noindent \paragraph{Sensitivity to the moneyness of the product} 

The moneyness of the product has a direct impact on the payoffs distribution as can be observed in Figure \ref{fig:moneyness}. The lowest the $HDD_{strike}$, the more HDD we capture in the payoff and the higher the mean payoffs becomes. 

\begin{figure}[h!]
    \centering
    \includegraphics[width=0.6\textwidth]{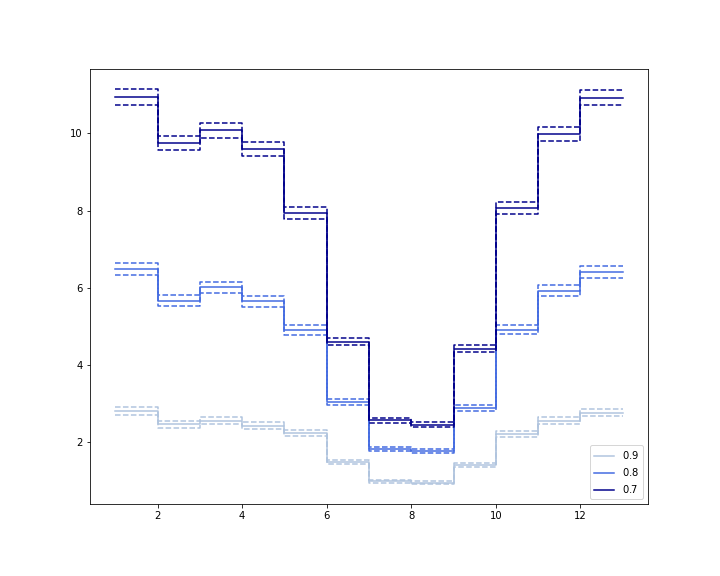}
    \caption{Average payoffs for different quantiles defining $HDD_{strike}$ in $\{0.7, 0.8, 0.9\}$ for each month of 2019 and based on the $50,000$ Monte Carlo simulation 30 days ahead.}
    \label{fig:moneyness}
\end{figure}

To sum up, this section has focused on the sensitivity of the pricing to the different parameters. We particularly show that the parameters related to the temperature dynamics like $\kappa$ as well as the moneyness of the payoff function are the ones affecting the most the mean payoffs. Parallely, the distribution of the payoffs and the strike based on quantiles show important sensitivity to the time interval $t_1-t_0$. Finally, the parameters related to the volatily have a relatively lower impact on the payoffs distribution and hence on the pricing.

\subsection{Comparison of our pricing methodology with business practices}

This section aims to make the bridge between the pricing methodology exposed in this document and the current market practices. In particular, we will compare a pricing based on modeling of the underlying meteorological parameter with a pricing based on historical index modeling. 

As described in Schiller {et} al. \cite{schiller2012temperature} and Jewson {and} Brix \cite{jewson2005weather}, the index based pricing methodology consists in modeling the independent yearly indices, in this case cumulative HDD.

For computation ease, the index pricing is automatised following the below algorithm:
\begin{enumerate}
    \itemsep0em 
    \item Compute historical index, cumulative $HDD$, from 1980 to 2018.
    \item Remove the linear trend in this time series. 
    \item Fit a gamma distribution to these observations through a maximum likelihood method.
    \item Compute the expected payoff of the fitted index distribution.
\end{enumerate}

While this approach can be simplistic, as we can consider other probability distributions for the index, it syntheses the common market practices.

Figure \ref{fig:index_model} represents the expected payoffs computed with the two methodologies. $HDD_{strike}$ corresponds to the 90\% simulated quantile with the Monte Carlo method and to the 90\% historical quantile for the index modeling method. First, we can observe there is a coherence between the approaches that give expected payoffs in the same ranges. However, the index modeling approach introduces important instabilities. These instabilities affect both the average payoffs as well as the strike $HDD_{strike}$, which is estimated from less than 40 observations.

\begin{figure}[h!]
    \centering
    \includegraphics[width=0.6\textwidth]{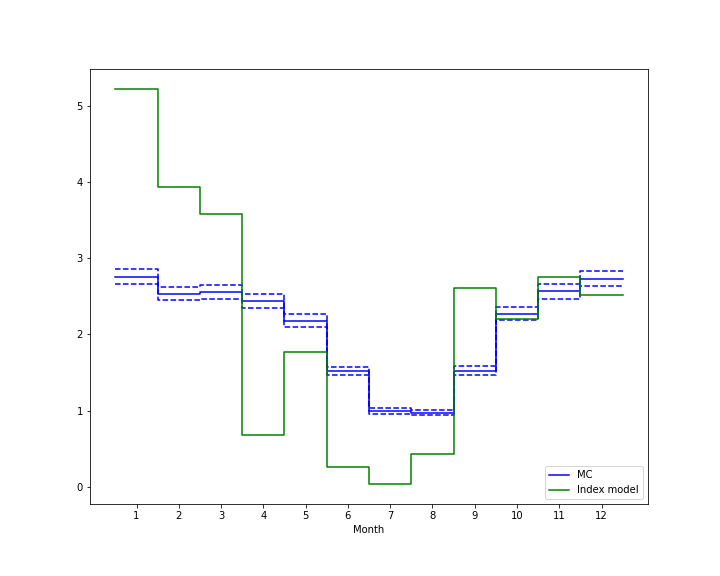}
    \caption{Expected payoffs forecasted 30 days ahead for a derivative on each month of 2019. $HDD_{strike}$ is defined differently with two methodologies. For the Monte Carlo approach it corresponds to the 90\% quantile of $50,000$ simulations while for the Index model approach the $HDD_{strike}$ corresponds to the historical quantile.}
    \label{fig:index_model}
\end{figure}

We also studied the possibility of using the same strikes with both methods. First, using the 90\% quantile of $50,000$ Monte Carlo simulations leads to slightly more volatile average payoffs for the index model method. However, we feel it is counter-intuitive to use a yearly index modeling for pricing and a more cumbersome daily index modeling just to get the strikes. 
Second, we can use historical quantiles for both methodologies as on the left of Figure \ref{fig:quantile_index_model}. In this case we can see that the winter months seem to be completely overpriced by the business practice while the summer months are underpriced. On the right of Figure \ref{fig:quantile_index_model}, we compare simulated and historical quantiles. We can see that the more difference we have between these quantiles the higher the risk of over or underpricing.

\begin{figure}[h!]
    \begin{minipage}[b]{0.45\linewidth}
        \centering
        \includegraphics[width=1.1\textwidth]{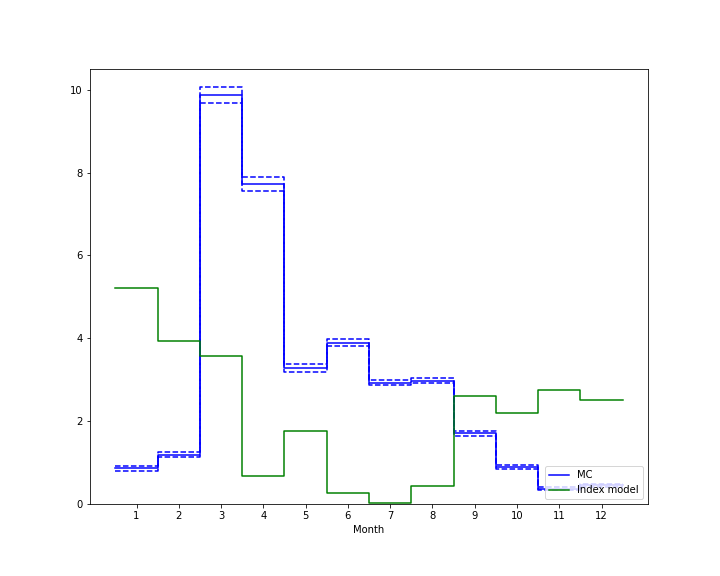}
    \end{minipage}
    \begin{minipage}[b]{0.45\linewidth}
        \centering
        \includegraphics[width=1.1\textwidth]{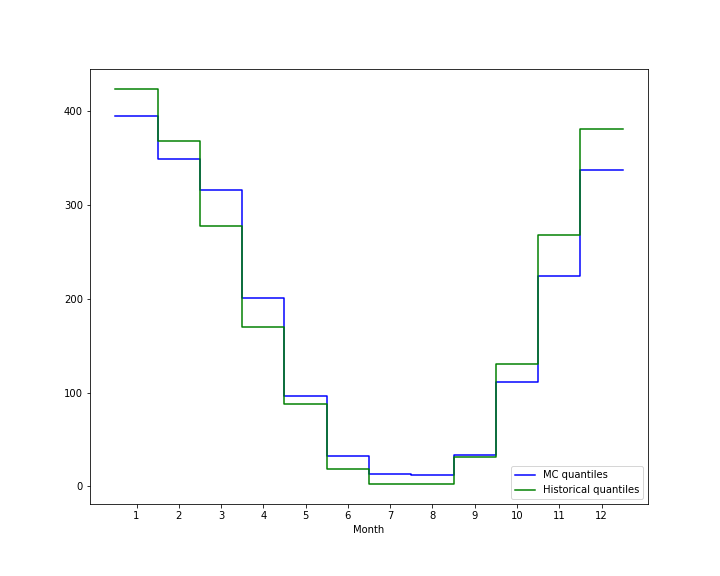}
    \end{minipage}
    \caption{On the right, expected payoffs forecasted 30 days ahead for a derivative on each month of 2019 and $HDD_{strike}$ defined as the 90\% historical quantile for both approaches. On the left, a comparison of simulated and historical 90\% quantile.}
    \label{fig:quantile_index_model}
\end{figure}

To sum up, there exist a clear coherence between the pricing methodology followed in this document and the current market practices as both give prices in the same ranges. However, our approach enables to better quantify the sources of risk, which is crucial  in  this growing market.

\bibliographystyle{abbrv} 

\bibliography{references.bib} 


\appendix

\section{{Weather station data description}} \label{appendix:weather_station}

\begin{table}[h!]
\centering
\begin{tabular}{l|lllll}
{ City} & { WMO} & { Latitude} & { Longitude} & { Elevation} & { Original Source} \\ \hline
{ Stockholm} & { 2485} & { 59.34} & { 18.05} & { 43 m} & { \begin{tabular}[c]{@{}l@{}}Swedish Meteorological and \\ Hydrological Institute, SMHI\end{tabular}} \\
{ \begin{tabular}[c]{@{}l@{}}Paris Charles \\ de Gaulle\end{tabular}} & { 7157} & { 49.02} & { 2.53} & { 109 m} & {ASOS-METAR } \\
{ \begin{tabular}[c]{@{}l@{}}Amsterdam \\ AP Schiphol\end{tabular}} & { 6240} & { 52.3} & { 4.78} & { -4 m} & { \begin{tabular}[c]{@{}l@{}}Royal Netherlands \\ Meteorological Institute, KNMI\end{tabular}} \\
{ Berlin Tempelhof} & { 10384} & { 52.47} & { 13.4} & { 50 m} & { Deutsche Wetterdienst, DWD} \\
{ Brussels National} & { 6451} & { 50.9} & { 4.53} & { 58 m} & { \begin{tabular}[c]{@{}l@{}}Royal Meteorological \\ Institute of Belgium\end{tabular}} \\
{ London Heathrow} & { 3772} & { 51.48} & { -0.45} & { 25 m} & { ASOS-METAR} \\
{ Rome Ciampino} & { 16239} & { 41.78} & { 12.58} & { 105 m} & { ASOS-METAR } \\
{ Madrid Barajas} & { 8221} & { 40.5} & { -3.58} & { 633 m} & { ASOS-METAR}
\end{tabular}
\caption{{Characteristics of the weather stations providing temperature data. Speedwell explicitly states the meteorological agency the data is extracted from except for ASOS-METAR data. ASOS-METAR is in charge of the monitoring of airport weather stations following the standards of the International Civil Aviation Organization (ICAO) and the World Meteorological Organization (WMO).}}
\label{tab:weather_stations}
\end{table}

\section{{Model~\eqref{stoch} simulation and estimator testing}} \label{appendix:method_simu}

{This section explains the algorithm to simulate Model~\eqref{stoch}. Simulation is first performed to test robustness of the estimation through}  the following steps:

\begin{enumerate}
    \itemsep0em 
    \item Estimate $\kappa$ and the seasonality $s$ from temperature data.
    \item Estimate the parameters $K$ and $\sigma^2$ and then $\eta^2$.
    \item Fix these parameters at the estimated values.
    \item Generate a simulated instantaneous volatility series $\zeta$ based on a generalized Ninomiya-Victoir scheme for Cox-Ingersoll-Ross (CIR) processes, 
    and the corresponding temperature series $T$.
    \item Estimate $(\alpha_0, \beta_0, \alpha_1, \beta_1, \kappa)$ on the simulated data and compare with the fixed values.
    \item Compute realized volatility $\hat{\zeta}$  for different time lags $Q$.
    \item Estimate $(\gamma_0, \gamma_1, \delta_1, \gamma_2, \delta_2, K)$ then $\widehat{\eta^2}$, and compare with the fixed values.
\end{enumerate}

\noindent The generation of the combined discrete series $(T_{i\Delta}, \zeta_{i\Delta})_{i \in \N}$ is performed thanks to the recurrence formula:
\begin{equation}\label{schemes} 
\left \{
   \begin{aligned}  
    T_{(i+1)\Delta} &= s((i+1)\Delta) + e^{-\kappa \Delta} (T_{i\Delta} - s(i\Delta)) +  \sqrt{\frac{1- e^{-2\kappa \Delta}}{2\kappa}\frac{\zeta_{i\Delta}+\zeta_{(i+1)\Delta}}2} Z_i\\
    \zeta_{(i+1)\Delta} &= \phi(\zeta_{i \Delta}, \Delta, \sqrt{\Delta} Y_{i}),
    \end{aligned}
    \right.
\end{equation}
{where $(Y_i,Z_i)_{i\ge 0}$ is an i.i.d. sequence of two independent standard normal variables.}
Remember that we assume here and in the sequel of the paper that $\rho=0$.
The first row of~\eqref{schemes} corresponds to a discretization of the integral following the temperature dynamics in Model \eqref{stoch} for a step $\Delta$. {The use of the trapezoidal rule comes from the operator splitting method and allow to get a second order scheme as in~\cite[Eq.~(4.31)]{AAbook}.}
The second row of \eqref{schemes} is the Ninomiya-Victoir scheme for CIR processes{~\cite{NV}} when freezing the time-dependent coefficients at time $(i+1/2)\Delta$, which preserves the convergence of order~2 of this scheme{, see~\cite[Paragraph 3.3.4]{AAbook}}. In general (i.e. when $K \sigma^2((i+1/2)\Delta) \geq \frac{\eta^2}{4}$), $\phi$ corresponds to: 

\begin{equation*}
\begin{aligned}
    \phi(\zeta_{i \Delta}, \Delta, \sqrt{\Delta} Y_{i}) &= e^{-\frac{K \Delta}{2}} \left( \sqrt{\left(K \sigma^2((i+1/2) \Delta)-\frac{\eta^2}{4}\right) \psi_K\left(\frac{\Delta}{2}\right)+ \zeta_{i \Delta}e^{-\frac{K \Delta}{2}}} + \frac{\eta}{2} \sqrt{\Delta} Y_{i} \right)^2  \\
    & + \left(K \sigma^2((i+1/2)\Delta)- \frac{\eta^2}{4} \right)\psi_K\left(\frac{\Delta}{2}\right)  
\end{aligned}
\end{equation*}

\noindent where $\psi_K(t) = \frac{1- e^{-K t}}{K}$. The case where $K \sigma^2((i+1/2)\Delta) < \frac{\eta^2}{4}$ is handled as in Alfonsi \cite{alfonsi2010high}, so that~\eqref{schemes} is a second order scheme for the weak error, and thus an accurate approximation of the exact law.

\begin{remark}
With this simulation method, it is possible to introduce additional granularity into our simulated processed having more than one point per day, that is $\Delta < 1$. We have analysed numerically if this extra-granularity has an incidence on the parameter estimation. Namely, we have calculated the estimators on simulated paths with different values of $\Delta$ but with the same number of points. We have noticed that $\Delta$ has a small influence on the estimators, and we do not reproduce these experiments in the paper.  
\end{remark}

\section{CLS estimators of the temperature process} \label{appendix:T}

Let us consider the following dynamics for the temperature $(T_t)_{t \geq 0}$:

\begin{equation}
  \begin{cases}
  T_t&=s(t)+\tilde{T}_t,\\
  d \tilde{T}_t&=- \kappa \tilde{T}_t dt +\sqrt{\zeta_t}(\rho dW_t +\sqrt{1-\rho^2} dZ_t)
  \end{cases} \label{dyn_T}
\end{equation}

\noindent where $\kappa>0$, $s(t) = \alpha_0 + \beta_0 t + \alpha_{1} \sin(\xi t) + \beta_{1} \cos(\xi t) $, $W$ and $Z$ are two independent Brownian motions and $\zeta_t$ is a nonnegative adapted process {such that $\E[\int_0^t \zeta_s de]<\infty$ for all $t>0$}. The goal of this appendix is to compute the conditional least squares estimators of $(\kappa, \alpha_0,\beta_0, \alpha_1, \beta_1 )$ {and to prove  the next proposition.}

\begin{prop}\label{lambda_def}
{Let $X_{i\Delta} = (1, i\Delta,  T_{i\Delta}, \sin(\xi i\Delta), \cos(\xi i \Delta))^T \in \R^5$ for $i \in \N$ with $(T_t)_{t \ge 0}$ following the dynamics of~\eqref{dyn_T} and $\Delta>0$. We assume that $\sum_{i=0}^{N-1} X_{i\Delta} X^T_{i\Delta}$ is invertible and define 
 \begin{equation}\label{def_hlambda}
\hat{\lambda} {= (\hat{\lambda}_0,\dots \hat{\lambda}_4)^T}=\left( \sum_{i=0}^{N-1} X_{i\Delta} X^T_{i\Delta}  \right)^{-1} \left(\sum_{i=0}^{N-1} X_{i\Delta}  T_{(i + 1)\Delta} \right). 
\end{equation} 
If $\hat{\lambda}_2 \in (0,1)\cup (1,+\infty)$}, the solution of the minimisation problem
\begin{mini}|s|
{{(\kappa,\alpha_0,\alpha_1,\beta_0,\beta_1})\in \R^5}{\sum_{i=0}^{N-1} \left( T_{(i + 1)\Delta} - \mathbb{E}  [T_{(i + 1)\Delta} | T_{i\Delta} ] \right)^2}{}{}\label{minim_T}
\end{mini}

\noindent is given by

\begin{equation}
\begin{cases}\label{def_estim_T}
  \hat{\kappa} &= - \frac{1}{\Delta}\ln{\hat{\lambda} _2} \\
  \hat{\alpha}_0&= \frac{\hat{\lambda}_0 }{1-\hat{\lambda}_2} -  \frac{\hat{\lambda}_1 \Delta}{(1-\hat{\lambda}_2)^2}\\
  \hat{\beta}_0&= \frac{\hat{\lambda}_1}{1-\hat{\lambda}_2} \\
  \hat{\alpha}_1 &= \frac{\hat{\lambda}_3 (\cos(\xi \Delta)-e^{-\hat{\kappa}\Delta}) + \hat{\lambda}_4 \sin(\xi \Delta)}{(\cos(\xi \Delta)-e^{-\hat{\kappa}\Delta})^2 + \sin^2(\xi \Delta)}\\
  \hat{\beta}_1 &= \frac{\hat{\lambda}_4 (\cos(\xi \Delta)-e^{-\hat{\kappa} \Delta}) - \hat{\lambda}_3 \sin(\xi \Delta)}{(\cos(\xi \Delta)-e^{-\hat{\kappa} \Delta})^2 + \sin^2(\xi \Delta)}.\\
\end{cases}
\end{equation}
\label{prop-ap1}
\end{prop}

\begin{proof} Recall that {$T_t=s(t)+\tilde{T}_t$}. {Applying Ito's formula for $(e^{\kappa t}\tilde{T}_t)_{t \geq 0}, we $} have 

\begin{equation}\label{integral}
\tilde{T}_{t+\Delta}=\tilde{T}_t e^{-\kappa \Delta}+ \rho \int_t^{t+\Delta}e^{-\kappa (t+\Delta-s)}\sqrt{\zeta_s} dW_s + \sqrt{1-\rho^2} \int_t^{t+\Delta}e^{-\kappa (t+\Delta-s)}\sqrt{\zeta_s} dZ_s 
\end{equation}

\noindent From the martingale property of the stochastic integral {(we have $\int_t^{t+\Delta} \E[\zeta_s]ds<\infty$  since $\E[\zeta_s]=\zeta_0e^{-Ks}+\int_0^s e^{-K(s_u)}\sigma^2(u)du$)}, we have $\E[\tilde{T}_{t+\Delta}|\cF_t]=\tilde{T}_t e^{-\kappa \Delta}$ and thus
$$\E[T_{t+\Delta}|\cF_t]=T_t e^{-\kappa\Delta} +s(t+\Delta)-s(t)e^{-\kappa\Delta}.$$

\noindent We now use trigonometric identities to get

\begin{align*}
s(t+\Delta)-e^{-\kappa\Delta}s(t)=& \alpha_0 + \beta_0 (t + \Delta) - \alpha_0 e^{-\kappa  \Delta} - \beta_0 e^{-\kappa  \Delta}t  + \alpha_1 \sin(\xi (t+\Delta)) - \alpha_1 e^{-\kappa \Delta} \sin(\xi t) \\
& + \beta_1 \cos(\xi (t+\Delta)) - \beta_1 e^{-\kappa \Delta} \cos(\xi t) \\
&= \lambda_0 + \lambda_1 t + \lambda_3 \sin(\xi t) + \lambda_4  \cos(\xi t),
\end{align*}

\noindent with

\begin{equation}
  \begin{cases}
  \lambda_0 &= \alpha_0(1-e^{-\kappa\Delta}) + \beta_0 \Delta\\
  \lambda_1 &= \beta_0(1-e^{-\kappa\Delta}) \\
  \lambda_2 &= e^{-\kappa\Delta} \\
  \lambda_3 &=\alpha_1 (\cos(\xi \Delta)-e^{-\kappa\Delta})-\beta_1\sin (\xi \Delta) \\
  \lambda_4 &=\alpha_1 \sin(\xi \Delta)+\beta_1(\cos (\xi \Delta)- e^{-\kappa\Delta}),
  \end{cases} \label{lambda-T}
\end{equation}
where $\lambda_2$ is set to have {$\E[{T}_{(i+1)\Delta}|\cF_t]=\lambda^T X_{i\Delta}$}. The minimization problem~\eqref{minim_T} is then equivalent to
\begin{mini*}|s|
{\lambda \in \R^5}{\sum_{i=0}^{N-1} \left( T_{(i + 1)\Delta} - \lambda^T X_{i\Delta}   \right)^2}{}{}.
\end{mini*}


\noindent This corresponds to a linear regression, whose solution is given by~\eqref{def_hlambda}. When $\lambda_2 \in (0,1)$, the system~\eqref{lambda-T} can be inverted, and the claim follows easily. 

\end{proof}
\noindent Let us note here that $\hat{\lambda}^T X_{i\Delta}$ can then be seen as the estimation of $\mathbb{E}  [T_{(i+1) \Delta} | T_{i \Delta} ]$.


\section{CLS estimators of the volatility process} \label{appendix:zeta}

Let consider the volatility of the temperature $(\zeta_t)_{t \geq 0}$ follows the below dynamics:

\begin{equation} \label{zeta-dyn}
  d \zeta_t= - K (\zeta_t- \sigma^2(t))dt +  \eta \sqrt{\zeta_t} dW_t
\end{equation}

\noindent where $K>0$, $\sigma^2$ is {a nonnegative function with the parametric form} given by~\eqref{sigma} and $W$ is a Brownian motion. The goal of this appendix is to compute the conditional least squares estimators of {the parameters} $(\gamma_0, K  ,\gamma_1,\dots, \gamma_{K_{\sigma^2}}, \delta_1, \dots,\delta_{K_{\sigma^2}})$. 

\begin{prop} \label{prop-zeta}
{Let $X'_{i\Delta} = (1, \zeta_{i\Delta} , \sin(\xi_1 i\Delta),\dots, \sin( \xi_{K_{\sigma^2}} i\Delta), \cos(\xi_1 i \Delta), \dots,\cos(\xi_{K_{\sigma^2}} i \Delta))^T$ with {$(\zeta_t)_{t \geq 0}$} following the dynamics~\eqref{zeta-dyn} and $\Delta>0$. We assume that $\sum_{i=0}^{N-1} X'_{i\Delta} X'^T_{i\Delta}$ is invertible and define $$(\hat{\theta}_0,\hat{\phi}_0,\hat{\theta}_1,\dots,\hat{\theta}_{K_{\sigma^2}},\hat{\phi}_1,\dots,\hat{\phi}_{K_{\sigma^2}})^T =  \left(\sum_{i=0}^{N-1} X'_{i\Delta} X'^T_{i\Delta} \right)^{-1} \left(\sum_{i=0}^{N-1} X'_{i\Delta}  \zeta_{(i + 1)\Delta}\right). $$ If $\hat{\phi}_0\in (0,1)\cup (1,\infty)$,} the solution of the minimisation problem 

\begin{mini}|s|
{{K \in \R ,\gamma \in \R^{K_{\sigma^2}+1}, \delta \in \R^{K_{\sigma^2}}}}{\sum_{i=0}^{N-1} \left( \zeta_{(i+1)\Delta} - \mathbb{E}  [\zeta_{(i +1) \Delta} | \zeta_{i \Delta} ] \right)^2}{}{} \label{minim_zeta}
\end{mini}
is given by

\begin{equation}\label{estim_vol}
\begin{dcases}
   \begin{aligned}
  \hat{\gamma}_0 &=  \frac{\hat{\theta}_0 }{1 - \hat{\phi}_0 } \\
  \hat{K}  &=-\frac{1}{\Delta}\ln(\hat{\phi}_0) \\
  \hat{\gamma}_k& = \frac{\hat{\theta}_k D_k- \hat{\phi}_k B_k}{A_k D_k- C_k B_k} \\
  \hat{\delta}_k& = \frac{\hat{\theta}_k C_k- \hat{\phi}_k A_k}{C_k B_k- A_k D_k}
    \end{aligned}
\end{dcases}
\end{equation}

\noindent where, for $k\in \{1,\dots,  K_{\sigma^2}\}$,

\begin{equation}
\begin{dcases}
   \begin{aligned}
  A_k& =  \hat{K} \frac{\hat{K} (\cos(\xi_k \Delta)-e^{-\hat{K} \Delta}) + \xi_k \sin(\xi_k \Delta)}{\hat{K}^2+\xi_k^2}\\
  B_k&=  - \hat{K} \frac{\hat{K} \sin(\xi_k \Delta)-\xi_k ( \cos(\xi_k \Delta) - e^{-\hat{K} \Delta})}{\hat{K}^2+\xi_k^2} \\
  C_k &=  \hat{K} \frac{\hat{K} \sin(\xi_k \Delta)-\xi_k ( \cos(\xi_k \Delta) - e^{-\hat{K}  \Delta})}{\hat{K} ^2+\xi_k^2} \\
  D_k &= \hat{K} \frac{\hat{K} (\cos(\xi_k \Delta)-e^{-\hat{K}\Delta}) +\xi_k  \sin(\xi_k \Delta)}{\hat{K}^2+\xi_k^2}.
  \end{aligned}
\end{dcases}
\end{equation}
\end{prop}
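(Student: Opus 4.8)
The plan is to turn the minimisation~\eqref{minim_zeta} into an ordinary least squares regression in a well-chosen parameter vector, solve that by the normal equations, and then invert an explicit change of variables.

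\textbf{Step 1 (conditional expectation).} First I would write the mild form of~\eqref{zeta-dyn}: multiplying by $e^{Kt}$ and integrating over $[t,t+\Delta]$ gives
\[\zeta_{t+\Delta}=e^{-K\Delta}\zeta_t+K\int_t^{t+\Delta}e^{-K(t+\Delta-s)}\sigma^2(s)\,ds+\eta\int_t^{t+\Delta}e^{-K(t+\Delta-s)}\sqrt{\zeta_s}\,dW_s.\]
Since $\zeta$ has finite first moment, the integrand of the stochastic integral is square integrable on the finite interval, so the stochastic integral is a martingale and $\E[\cdot\mid\cF_t]$ annihilates it. Hence $\E[\zeta_{t+\Delta}\mid\cF_t]=e^{-K\Delta}\zeta_t+K\int_t^{t+\Delta}e^{-K(t+\Delta-s)}\sigma^2(s)\,ds$, which depends on $\cF_t$ only through $\zeta_t$, so it equals $\E[\zeta_{t+\Delta}\mid\zeta_t]$.

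\textbf{Step 2 (the deterministic integral).} Next I would substitute the Fourier expansion~\eqref{sigma} of $\sigma^2$ and evaluate the integral term by term. The constant term contributes $\gamma_0(1-e^{-K\Delta})$. For the trigonometric terms I would use the primitives $\int e^{Ks}\sin(\xi_k s)\,ds=\frac{e^{Ks}(K\sin(\xi_k s)-\xi_k\cos(\xi_k s))}{K^2+\xi_k^2}$ and its cosine analogue, then expand $\sin(\xi_k(t+\Delta))$, $\cos(\xi_k(t+\Delta))$ by the addition formulas and collect the coefficients of $\sin(\xi_k t)$ and $\cos(\xi_k t)$. This produces exactly the constants $A_k,B_k,C_k,D_k$ of the statement (note that $D_k=A_k$ and $C_k=-B_k$), and yields, at $t=i\Delta$,
\[\E[\zeta_{(i+1)\Delta}\mid\zeta_{i\Delta}]=\theta_0+\phi_0\,\zeta_{i\Delta}+\sum_{k=1}^{K_{\sigma^2}}\theta_k\sin(\xi_k i\Delta)+\sum_{k=1}^{K_{\sigma^2}}\phi_k\cos(\xi_k i\Delta)=\vartheta^TX'_{i\Delta},\]
where $\phi_0=e^{-K\Delta}$, $\theta_0=\gamma_0(1-\phi_0)$, $\theta_k=\gamma_kA_k+\delta_kB_k$, $\phi_k=\gamma_kC_k+\delta_kD_k$, and $\vartheta=(\theta_0,\phi_0,\theta_1,\dots,\theta_{K_{\sigma^2}},\phi_1,\dots,\phi_{K_{\sigma^2}})^T$.

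\textbf{Step 3 (regression and inversion).} The map $(K,\gamma,\delta)\mapsto\vartheta$ is injective on the relevant domain: $K=-\frac1\Delta\ln\phi_0$ recovers $K>0$ as soon as $\phi_0\in(0,1)$, $\gamma_0=\theta_0/(1-\phi_0)$, and each pair $(\gamma_k,\delta_k)$ is recovered from $(\theta_k,\phi_k)$ by solving the $2\times2$ system with matrix $\left(\begin{smallmatrix}A_k&B_k\\C_k&D_k\end{smallmatrix}\right)$, whose determinant equals $A_kD_k-B_kC_k=A_k^2+B_k^2=\frac{K^2}{K^2+\xi_k^2}\big((\cos\xi_k\Delta-e^{-K\Delta})^2+\sin^2\xi_k\Delta\big)\neq0$ in the non-degenerate case. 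Therefore~\eqref{minim_zeta} is equivalent to minimising $\sum_{i=0}^{N-1}(\zeta_{(i+1)\Delta}-\vartheta^TX'_{i\Delta})^2$ over $\vartheta\in\R^{2K_{\sigma^2}+2}$, a linear least squares problem whose minimiser (unique when $\sum_iX'_{i\Delta}X'^T_{i\Delta}$ is invertible) is $\hat\vartheta=(\sum_iX'_{i\Delta}X'^T_{i\Delta})^{-1}(\sum_iX'_{i\Delta}\zeta_{(i+1)\Delta})$. Plugging $\hat\vartheta$ into the inversion formulas of Step 2 gives $\hat K=-\frac1\Delta\ln\hat\phi_0$, $\hat\gamma_0=\hat\theta_0/(1-\hat\phi_0)$, and, by Cramer's rule, $\hat\gamma_k=\frac{\hat\theta_kD_k-\hat\phi_kB_k}{A_kD_k-C_kB_k}$, $\hat\delta_k=\frac{\hat\theta_kC_k-\hat\phi_kA_k}{C_kB_k-A_kD_k}$, which is~\eqref{estim_vol}.

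\textbf{Main obstacle.} Nothing here is deep; the only genuine work is the trigonometric bookkeeping of Step 2 needed to identify $A_k,B_k,C_k,D_k$ in precisely the stated form, together with the care needed to record the non-degeneracy hypotheses ($\hat\phi_0\in(0,1)$, invertibility of the $2\times2$ blocks and of $\sum_iX'_{i\Delta}X'^T_{i\Delta}$) under which~\eqref{estim_vol} is well-defined.
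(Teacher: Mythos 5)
Your proof is correct and follows essentially the same route as the paper's: compute $\E[\zeta_{t+\Delta}\mid\cF_t]$ from the mild form of the SDE, expand the deterministic integral of $\sigma^2$ to identify the linear parametrisation $\vartheta^T X'_{i\Delta}$, solve the resulting least squares problem by the normal equations, and invert the change of variables (the paper packages the trigonometric integrals into a separate Lemma, but the computation is identical). Your added observations that $D_k=A_k$, $C_k=-B_k$ and that the determinant $A_kD_k-B_kC_k=A_k^2+B_k^2>0$ for $K>0$ are correct and slightly more explicit than the paper on the non-degeneracy of the inversion.
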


\noindent For the proof of Proposition~\ref{prop-zeta} we will need Lemma \ref{lem_integ}, whose proof is straightforward. 

\begin{lemma}\label{lem_integ} 
  For $K^2+\xi^2>0$ and $k\in \N^*$, we have
  \begin{align*}
    \int_t^{t+\Delta}e^{-K(t+\Delta-s)}\cos(\xi_k s)ds&=\cos(\xi_k t)\frac{K[\cos(\xi_k \Delta)-e^{-K\Delta}] +\xi_k \sin(\xi_k \Delta)}{K^2+\xi_k^2}\\
    &-\sin(\xi_k t) \frac{K\sin(\xi_k \Delta) - \xi_k ( \cos(\xi_k \Delta)  -e^{-K \Delta})}{K^2+\xi_k^2}, \\
    \int_t^{t+\Delta}e^{-K(t+\Delta-s)}\sin(\xi_k s)ds&=\sin(\xi_k t) \frac{K[\cos(\xi_k \Delta)-e^{-K\Delta}] + \xi_k \sin(\xi_k \Delta)}{K^2+\xi_k^2} \\
    &+ \cos(\xi_k t)\frac{K\sin(\xi_k \Delta)-\xi_k (\cos(\xi_k \Delta) - e^{-K \Delta})}{K^2+\xi_k^2}.
  \end{align*}
\end{lemma}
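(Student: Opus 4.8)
The plan is to prove both identities at once by computing the single complex integral $\int_t^{t+\Delta} e^{-K(t+\Delta-s)} e^{i\xi_k s}\, ds$, and then reading off the cosine identity as its real part and the sine identity as its imaginary part; this works because $\xi_k = k\xi \in \R$, so $e^{i\xi_k s} = \cos(\xi_k s) + i\sin(\xi_k s)$.

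First I would factor out the $s$-independent constant $e^{-K(t+\Delta)}$ and integrate the remaining exponential $e^{(K+i\xi_k)s}$, which is legitimate since the hypothesis $K^2+\xi_k^2>0$ guarantees $K+i\xi_k\neq 0$. This yields
\begin{equation*}
\int_t^{t+\Delta} e^{-K(t+\Delta-s)} e^{i\xi_k s}\, ds = \frac{e^{i\xi_k t}\left(e^{i\xi_k \Delta} - e^{-K\Delta}\right)}{K+i\xi_k}.
\end{equation*}
Next I would rationalize by multiplying numerator and denominator by $K-i\xi_k$, turning the denominator into $K^2+\xi_k^2$. Writing $e^{i\xi_k\Delta} - e^{-K\Delta} = (\cos(\xi_k\Delta)-e^{-K\Delta}) + i\sin(\xi_k\Delta)$ and expanding the product $(e^{i\xi_k\Delta}-e^{-K\Delta})(K-i\xi_k)$ produces real part $P := K(\cos(\xi_k\Delta)-e^{-K\Delta}) + \xi_k\sin(\xi_k\Delta)$ and imaginary part $Q := K\sin(\xi_k\Delta) - \xi_k(\cos(\xi_k\Delta)-e^{-K\Delta})$, so the integral equals $e^{i\xi_k t}(P+iQ)/(K^2+\xi_k^2)$.

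Finally I would multiply out $e^{i\xi_k t}(P+iQ) = (P\cos(\xi_k t) - Q\sin(\xi_k t)) + i(P\sin(\xi_k t) + Q\cos(\xi_k t))$. Dividing by $K^2+\xi_k^2$, the real part is exactly the claimed cosine identity (coefficient $P/(K^2+\xi_k^2)$ on $\cos(\xi_k t)$ and $-Q/(K^2+\xi_k^2)$ on $\sin(\xi_k t)$), and the imaginary part is exactly the claimed sine identity. As the paper already remarks, there is no genuine obstacle; the only care required is the bookkeeping when separating real and imaginary parts and matching $P$ and $Q$ against the precise numerator combinations written in the statement. As an alternative check one could differentiate the right-hand sides in $\Delta$ and verify the boundary value at $\Delta=0$, but the complex-exponential computation above is the most direct route.
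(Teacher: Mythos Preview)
Your proof is correct. The paper does not actually give a proof of this lemma --- it only states that the proof ``is straightforward'' --- so there is nothing to compare your approach against; your complex-exponential computation is the standard and cleanest way to carry out this straightforward verification.
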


\begin{proof}[Proof of Proposition~\ref{prop-zeta}]
{Applying Ito's formula for $(e^{K t}\zeta_t)_{t \geq 0}$, we have:} 

\begin{equation} \label{zeta_integral}
    \zeta_{t+\Delta}=\zeta_t e^{-K \Delta}+K\int_t^{t+\Delta}e^{-K(t+\Delta-s)}\sigma^2(s)ds + \eta \int_{t}^{t + \Delta} e^{-K (t+ \Delta -s)} \sqrt{\zeta_s} dW_s
\end{equation}

\noindent Hence, we get
$$\E[\zeta_{t+\Delta}|\cF_t]=\zeta_t e^{-K \Delta}+K\int_t^{t+\Delta}e^{-K(t+\Delta-s)}\sigma^2(s)ds.$$

\noindent From~\eqref{sigma} and Lemma~\ref{lem_integ}, we then obtain {for $t \geq 0$ and $\zeta_0 \geq 0$,}
\begin{equation}\label{formule_condexp_zeta}
  \E[\zeta_{t+\Delta}|\cF_t]=\theta_0 + \phi_0 \zeta_t + \sum_{k=1}^{K_{\sigma^2}} \theta_k \sin(\xi_k t) + \sum_{k=1}^{K_{\sigma^2}} \phi _k \cos( \xi_k t),
\end{equation}

\noindent with

$$\begin{dcases*}
  \begin{aligned}
  \theta_0&= \gamma_0(1-e^{-K\Delta}) \\
  \phi_0 &= e^{-K \Delta} \\
  \theta_k &= \gamma_k K \frac{K[\cos(\xi_k \Delta)-e^{-K\Delta}] + \xi_k \sin(\xi_k \Delta)}{K^2+\xi_k^2}- \delta_k K \frac{K\sin(\xi_k \Delta)-\xi_k( \cos(\xi_k \Delta) - e^{-K \Delta})}{K^2+\xi_k^2} \\
  \phi_k&= \gamma_k K \frac{K\sin(\xi_k \Delta)-\xi_k ( \cos(\xi_k \Delta) - e^{-K \Delta})}{K^2+\xi_k^2} +\delta_k K \frac{K[\cos(\xi_k \Delta)-e^{-K\Delta}] +\xi_k \sin(\xi_k \Delta)}{K^2+\xi_k^2}
  \end{aligned}
\end{dcases*}$$

\noindent We can invert the above system {when $\phi_0\in(0,1)\cup(1,+\infty)$} to get the formulas of $K, \gamma_0, \gamma_k, \delta_k${.}

\begin{equation} \label{invert-zeta}
\begin{dcases}
   \begin{aligned}
  \gamma_0&=  \frac{\theta_0 }{1 - \phi_0 } \\
  K  &=-\frac{1}{\Delta}\ln(\phi_0) \\
  \gamma_k& = \frac{\theta_k D_k- \phi_k B_k}{A_k D_k- C_k B_k} \\
  \delta_k& = \frac{\theta_k C_k- \phi_k A_k}{C_k B_k- A_k D_k}
    \end{aligned}
\end{dcases}
\end{equation}

\noindent  where

\begin{equation*}
\begin{dcases}
   \begin{aligned}
  A_k& =  K \frac{K(\cos(\xi_k \Delta)-e^{-K\Delta}) + \xi_k \sin(\xi_k \Delta)}{K^2+(\xi_k)^2}\\
  B_k&=  - K \frac{K\sin(\xi_k \Delta)-\xi_k ( \cos(\xi_k \Delta) - e^{-K \Delta})}{K^2+(\xi_k)^2} \\
  C_k &=  K \frac{K\sin(\xi_k \Delta)-\xi_k ( \cos(\xi_k \Delta) - e^{-K \Delta})}{K^2+(\xi_k)^2} \\
  D_k &= K \frac{K(\cos(\xi_k \Delta)-e^{-K\Delta}) +\xi_k  \sin(\xi_k \Delta)}{K^2+(\xi_k)^2}
  \end{aligned}
\end{dcases}
\end{equation*}

\noindent {Let us observe that $A_kD_k-C_kB_k\ge 0$ as a sum of squares. We even have $A_kD_k-C_kB_k> 0$ since $(K(\cos(\xi_k \Delta)-e^{-K\Delta}) + \xi_k \sin(\xi_k \Delta),K\sin(\xi_k \Delta)-\xi_k ( \cos(\xi_k \Delta) - e^{-K \Delta}))\not=(0,0)$ as $$\det\left[\begin{matrix}
    (\cos(\xi_k \Delta)-e^{-K\Delta}) & \sin(\xi_k \Delta) \\
    \sin(\xi_k \Delta) & (\cos(\xi_k \Delta)-e^{-K\Delta}) 
\end{matrix}\right]=1-2\cos(\xi_k \Delta)e^{-K\Delta}+e^{-2K\Delta}\ge (1-e^{-K\Delta})^2>0,$$ 
since $K\not = 0$  by the assumption $\phi_0 \not=1$.}
The minimization problem~\eqref{minim_zeta} is then equivalent to 
\begin{mini*}|s|
{{\vartheta}}{\sum_{i=0}^{N-1} \left( \zeta_{(i + 1)\Delta} - {\vartheta}^T X'_{i\Delta} \right)^2}{}{}
\end{mini*}
\noindent where ${\vartheta} = (\theta_0,\phi_0,\theta_1,\dots,\theta_{K_{\sigma^2}},\phi_1,\dots,\phi_{K_{\sigma^2}})^T$ and \begin{equation}\label{def_Xprime}
X'_{i\Delta} = (1, \zeta_{i\Delta} , \sin(\xi_1 i\Delta),\dots, \sin(\xi_{K_{\sigma^2}} i\Delta), \cos(\xi_1 i \Delta),\dots, \cos(\xi_{K_{\sigma^2}} i \Delta))^T, \ 0\le i\le N-1. \end{equation}

\noindent This problem corresponds to a simple multilinear regression problem. Its solution is given by:
$$\hat{\vartheta} = \left( \sum_{i=0}^{N-1} X'^T_{i\Delta} X'_{i\Delta} \right)^{-1}  \left(\sum_{i=0}^{N-1} X'^T_{i\Delta}  \zeta_{(i + 1)\Delta} \right) $$
Combined with Equation~\eqref{invert-zeta}, we get the estimators~\eqref{estim_vol} of the volatility parameters of of Model~\eqref{stoch}.
\end{proof}
\noindent Let us note here that $\hat{\vartheta}^T X'_{i\Delta}$ can be seen as the estimation of $\E[\zeta_{(i+1)\Delta}|\zeta_{i\Delta}]$.


\section{Computation of the CLS estimators of $\eta^2$ and $\rho$} \label{appendix:eta2rho}

Let consider the volatility of the temperature $(\zeta_t)_{t \geq 0}$ follows the below dynamics {for $t \geq 0$ and $\zeta_0 \geq 0$:}

\begin{equation} \label{zeta_dyn}
  d \zeta_t= - K (\zeta_t- \sigma^2(t))dt +  \eta \sqrt{\zeta_t} dW_t,
\end{equation}

\noindent as in Model~\eqref{stoch} with $ \sigma^2(t) = \gamma_0 + \sum_{k=1}^{K_{\sigma^2}} \gamma_{k} \sin(\xi_k t ) + \sum_{k=1}^{K_{\sigma^2}} \delta_{k} \cos(\xi_k t )$. We first focus on the conditional least squares estimator of the volatility of the volatility $\eta^2$, and assume that the coefficients $K$ and $\sigma^2(\cdot)$ are known. 

\begin{prop} \label{prop-eta2}
Let $(\zeta_t)_{t \ge 0}$ follow the dynamics~\eqref{zeta_dyn} {with $\sigma^2(t)$ being a nonnegative function of the form~\eqref{sigma} and $\Delta >0$. Then, we have for $i\in \N$ 
$$Y_{i\Delta } = \mathbb{E}  [ \left(\zeta_{(i + 1)\Delta} - \mathbb{E}  [\zeta_{(i +1) \Delta} | \zeta_{i\Delta} ] \right)^2| \zeta_{i\Delta} ]  = \theta_0' + \phi_0' \zeta_{i\Delta }  + \sum_{k=1}^{K_{\sigma^2}} \theta_k' \sin(\xi_k i\Delta ) + \sum_{k=1}^{K_{\sigma^2}} \phi_k' \cos(\xi_k i\Delta )>0,$$ with $\theta'$ and $\phi'$ defined by~\eqref{def_thetaphiprime}.} The solution of the minimisation problem 

\begin{mini}|s|  
{\eta^2 {\ge 0 }}{\sum_{i=0}^{N-1} \left( (\zeta_{(i+1)\Delta}- \mathbb{E}  [\zeta_{(i +1) \Delta} | \zeta_{i \Delta} ])^2 - \mathbb{E}  [ \left(\zeta_{(i + 1)\Delta} - \mathbb{E}  [\zeta_{(i +1) \Delta} | \zeta_{i\Delta} ] \right)^2| \zeta_{i\Delta} ]  \right)^2}{}{}
\label{mini:min-eta2}
\end{mini}

\noindent is given by 

\begin{equation}\label{eta_opt}
   \widehat{\eta^2} =  \frac{\sum_{i=0}^{N-1}  Y_{i\Delta }(\zeta_{(i+1)\Delta}- \vartheta^T X'_{i\Delta})^2}{\sum_{i=0}^{N-1} Y_{i\Delta }^2},
\end{equation}
where $\vartheta$ and $X'_{i\Delta}$ are defined by~\eqref{def_Xprime}.
\end{prop}

\noindent For the proof of Proposition~\ref{prop-eta2}, we first state Lemma~\ref{lem_integ2}, which is a straightforward generalisation of~\eqref{formule_condexp_zeta}. 

\begin{lemma}  \label{lem_integ2}

For all $s\ge t$,

$$\E[\zeta_s|\cF_t]=\zeta_te^{-K(s-t)}+ \gamma_0(1-e^{-K(s-t)})+ \sum_{k=1}^{K_{\sigma^2}} \Theta_k(s-t)\sin(\xi_k t)+ \sum_{k=1}^{K_{\sigma^2}} \Phi_k(s-t) \cos(\xi_k t),$$

\noindent with
\begin{align*}
  \Theta_k(v)=\gamma_k K \frac{K[\cos(\xi_k v)-e^{-Kv}] +\xi_k \sin(\xi_k v)}{K^2+\xi_k^2}- \delta_k K \frac{K\sin(\xi_k v)-\xi_k ( \cos(\xi_k v) - e^{- K v})}{K^2+\xi_k^2}, \\
  \Phi_k(v)= \gamma_k K \frac{K\sin(\xi_k v)-\xi_k (\cos(\xi_k v)-e^{-K v})}{K^2+\xi_k^2} +\delta_k K \frac{K[\cos(\xi_k v)-e^{-K v}] +\xi_k \sin(\xi_k v)}{K^2+\xi_k^2}.
\end{align*}
\end{lemma}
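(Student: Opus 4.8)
The plan is to treat this exactly as~\eqref{formule_condexp_zeta}, of which it is the natural extension from the fixed step $\Delta$ to an arbitrary horizon $v=s-t$: solve the linear equation~\eqref{zeta_dyn} by variation of constants, take the conditional expectation to discard the martingale part, and then expand $\sigma^2$ through its Fourier representation~\eqref{sigma}.

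First I would apply Itô's formula to $u\mapsto e^{Ku}\zeta_u$ on $[t,s]$ (this map is affine in $\zeta_u$, so there is no second-order correction) and multiply by $e^{-Ks}$ to get the Duhamel representation
$$\zeta_s=e^{-K(s-t)}\zeta_t+K\int_t^s e^{-K(s-u)}\sigma^2(u)\,du+\eta\int_t^s e^{-K(s-u)}\sqrt{\zeta_u}\,dW_u.$$
Taking $\E[\,\cdot\mid\cF_t]$, the stochastic integral drops out: the process $r\mapsto\int_t^r e^{-K(s-u)}\sqrt{\zeta_u}\,dW_u$ is a genuine $(\cF_r)_{r\ge t}$-martingale started from $0$, since $\E\bigl[\int_t^s e^{-2K(s-u)}\zeta_u\,du\bigr]=\int_t^s e^{-2K(s-u)}\E[\zeta_u]\,du<\infty$, the function $u\mapsto\E[\zeta_u]$ solving the linear ODE $m'=-K(m-\sigma^2)$ with bounded forcing $\sigma^2$ and hence being bounded on $[t,s]$. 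This leaves $\E[\zeta_s\mid\cF_t]=e^{-K(s-t)}\zeta_t+K\int_t^s e^{-K(s-u)}\sigma^2(u)\,du$.

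It then remains to evaluate the deterministic integral. Plugging in $\sigma^2(u)=\gamma_0+\sum_{k=1}^{K_{\sigma^2}}\bigl(\gamma_k\sin(\xi_k u)+\delta_k\cos(\xi_k u)\bigr)$ and splitting, the constant part contributes $K\gamma_0\int_t^s e^{-K(s-u)}\,du=\gamma_0\bigl(1-e^{-K(s-t)}\bigr)$, while the oscillating parts are handled by Lemma~\ref{lem_integ} applied with step $\Delta=s-t$ and base point $t$; collecting the $\sin(\xi_k t)$ and $\cos(\xi_k t)$ coefficients with their $\gamma_k,\delta_k$ weights reproduces precisely $\Theta_k(s-t)$ and $\Phi_k(s-t)$. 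Summing the three contributions yields the stated formula.

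There is essentially no real obstacle here; the only step worth a line of justification is that the stochastic integral is a true martingale (so that its conditional expectation vanishes), and beyond that the argument is just the bookkeeping of the trigonometric primitives already packaged in Lemma~\ref{lem_integ}.
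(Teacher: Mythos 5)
Your proof is correct and follows exactly the route the paper intends: the paper states Lemma~\ref{lem_integ2} without proof as a ``straightforward generalisation'' of~\eqref{formule_condexp_zeta}, which is itself obtained (in the proof of Proposition~\ref{prop-zeta}) by the same Duhamel representation, conditional expectation, and application of Lemma~\ref{lem_integ} with the step replaced by $v=s-t$. Your extra care in verifying that the stochastic integral is a true martingale (via the boundedness of $u\mapsto\E[\zeta_u]$) is a small addition the paper leaves implicit, but it does not change the argument.
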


\begin{proof}[Proof of Proposition~\ref{prop-eta2}]

\noindent {Let recall Equation~\eqref{zeta_integral} and compute the conditional variance of  $(\zeta_t)_{t \geq 0}$:}

$$\zeta_{t+\Delta}=\zeta_t e^{-K \Delta}+K\int_t^{t+\Delta}e^{-K(t+\Delta-s)}\sigma^2(s)ds + \eta \int_{t}^{t + \Delta} e^{-K (t+ \Delta -s)} \sqrt{\zeta_s} dW_s$$

\begin{equation*}
   \begin{aligned}
  \mathbb{E}  \left[ \left(\zeta_{t + \Delta} - \mathbb{E}  [\zeta_{t + \Delta} | \zeta_{t} ] \right)^2| \cF_{t} \right]  &= \mathbb{E}  \left[ \left(\eta \int_{t}^{t + \Delta} e^{-K (t+ \Delta -s)} \sqrt{\zeta_s} dW_s \right)^2| \cF_{t} \right] \\
  &=\eta^2 \int_{t}^{t + \Delta} e^{-2 K (t+ \Delta -s)} \mathbb{E}  [ \zeta_s | \zeta_{t} ]ds 
    \end{aligned}
\end{equation*}

\noindent by Ito's isometry, and using that $\E[\zeta_s|\cF_t]=\E[\zeta_s|\zeta_t]$.  From Lemmas \ref{lem_integ2} {and~\ref{lem_integ}}, we deduce that

\begin{align*}
  \int_t^{t+\Delta}e^{-2K(t+\Delta-u)} \E[\zeta_u|\zeta_t]du&= \theta_0' + \phi_0' \zeta_t+{\sum_{k=1}^{K_{\sigma^2}}} \theta'_k \sin(\xi_k t)+ {\sum_{k=1}^{K_{\sigma^2}}} \phi'_k \cos( \xi_k t),
\end{align*}

\noindent where

\begin{equation}\label{def_thetaphiprime}
\begin{dcases}
   \begin{aligned}
  \theta_0' &= \gamma_0 \frac{(1  - e^{- K \Delta})^2 }{2K}\\
  \phi_0' &=  \frac{e^{-K \Delta}}{K}  (1- e^{- K \Delta}) \\
  \theta'_k &=\gamma_k K \frac{K[A_k'-\phi_0'] +\xi_k B_k'}{K^2+\xi_k^2}- \delta_k K \frac{K B_k' -\xi_k [A_k'- \phi_0']}{K^2+\xi_k^2} \\
  \phi'_k &= \gamma_k K \frac{K B_k'-\xi_k [A_k'-\phi_0']}{K^2+\xi_k^2} +\delta_k K \frac{K[A_k'-\phi_0'] +\xi_k B_k'}{K^2+\xi_k^2},   
    \end{aligned}
\end{dcases}
\end{equation}

\noindent and

\begin{equation} 
\begin{dcases}
   \begin{aligned}
   A_k'&=\frac{2K[\cos(\xi_k  \Delta)-e^{- 2K  \Delta}] +\xi_k \sin(\xi_k  \Delta)}{4K^2+\xi_k^2}\\
   B_k'&=\frac{2K\sin(\xi_k  \Delta)-\xi_k( \cos(\xi_k  \Delta)- e^{- 2 K  \Delta})}{4K^2+\xi_k^2}.
    \end{aligned}
\end{dcases}
\end{equation}
{For example, $\theta'_k=\int_t^{t+\Delta} e^{-2K(t+\Delta-u)} \Theta_k(u-t)du=\int_0^{\Delta} e^{-2K(\Delta-u)}\Theta_k(u)du$, and we use then Lemma~\ref{lem_integ} to get the formulas for $A'_k=\int_0^{\Delta} e^{-2K(\Delta-u)}\cos(\xi_k u)du$ and $B'_k=\int_0^{\Delta} e^{-2K(\Delta-u)}\sin(\xi_k u)du$. The calculation of $\psi'_k$ works in the same way.}

\noindent Hence, we get

\begin{equation*}
  \mathbb{E}  [ \left(\zeta_{t + \Delta} - \mathbb{E}  [\zeta_{t + \Delta} | \zeta_{t} ] \right)^2| \zeta_{t} ]  =  \eta^2 \left(\theta_0' + \phi_0' \zeta_{t}  + {\sum_{k=1}^{K_{\sigma^2}}} \theta_k' \sin(\xi_k t) + {\sum_{k=1}^{K_{\sigma^2}}} \phi_k' \cos(\xi_k t)\right),
\end{equation*} 
{and hence the value of $Y_{i\Delta}$. We also observe that $Y_{i\Delta}>0$ since  $e^{-2K((i+1)\Delta-u)}>0$ and $\E[\zeta_u|\zeta_{i\Delta}]=\zeta_{i\Delta}e^{-K(u-i\Delta)}+K \int_{i \Delta}^{u}e^{-K(u-s)}\sigma^2(s) ds\ge K \int_{i \Delta}^{u}e^{-K(u-s)}\sigma^2(s) ds >0$ for $u\in (i\Delta,(i+1)\Delta)$. }

\noindent Now, let come back to the considered minimisation problem~\eqref{mini:min-eta2}. We consider $\vartheta$ and $X'_{i\Delta}$ defined by~\eqref{def_Xprime} and define  {$Y_{i \Delta}$ as in Proposition~\ref{prop-eta2}} $=  \theta_0' + \phi_0' \zeta_{i \Delta}  + \sum_k \theta_k' \sin(\xi_k i \Delta) + \sum_k \phi_k' \cos(\xi_k i \Delta)$. Thus, {from Proposition~\ref{prop-zeta}, }we have $\E[\zeta_{(i+1)\Delta} |\zeta_{i\Delta}]=\vartheta^T X'_{i\Delta}$ and $\E[( \zeta_{(i+1)\Delta}-\E[\zeta_{(i+1)\Delta} |\zeta_{i\Delta}] )^2|\zeta_{i\Delta}]= {\eta^2} Y_{i \Delta}$. Problem~\eqref{mini:min-eta2} is then  equivalent to

\begin{mini*}|s|
{\eta^2}{\sum_{i=0}^{N-1} \left( (\zeta_{(i+1)\Delta}- \vartheta^T X'_{i\Delta})^2 - \eta^2 Y_{i \Delta} \right)^2}{}{ },
\end{mini*}
whose solution is given by~\eqref{eta_opt}. 
\end{proof}


We now focus on the conditional least squares estimator of the correlation $\rho$ for Model~\eqref{stoch} and assume that the coefficients $\kappa$, $s(\cdot)$, $K$, $\sigma^2(\cdot)$ and $\eta^2$  are known. 

\begin{prop} \label{prop-rho}
Let $(T_t,\zeta_t)_{t \ge 0}$  follow~\eqref{stoch} {with $\sigma^2(t)$ being a nonnegative function of the form~\eqref{sigma} and $\Delta >0$. Then, we have for $i\in \N$ 
$$Y'_{i  \Delta} =  \theta_0'' + \phi_0'' \zeta_{i\Delta} + \sum_k \theta_k'' \sin(\xi_k i \Delta) + \sum_k \phi_k'' \cos(\xi_k i \Delta)>0,$$ with $\theta''$ and $\phi''$ given by~\eqref{def_thetaphiscd}.} The solution of the minimisation problem

\begin{equation*}
  \begin{aligned}
    \min_{\rho \in \R} \sum_{i=0}^{N-1}  &  \Bigl( (T_{(i+1)\Delta}- \mathbb{E}  [T_{(i +1) \Delta} |  \mathcal{F}_{i\Delta} ])(\zeta_{(i+1)\Delta}- \mathbb{E}  [\zeta_{(i +1) \Delta} |  \mathcal{F}_{i\Delta} ]) \\
    &- \mathbb{E}  [ (T_{(i+1)\Delta}- \mathbb{E}  [T_{(i +1) \Delta} |  \mathcal{F}_{i\Delta} ])(\zeta_{(i+1)\Delta}- \mathbb{E}  [\zeta_{(i +1) \Delta} | \zeta_{i\Delta} ]| \mathcal{F}_{i\Delta} ] ) \Bigr)^2
  \end{aligned}  
\end{equation*}

\noindent is given by

\begin{equation}\label{rho_opt}
   \hat{\rho} =  \frac{\sum_{i=0}^{N-1} Y'_{i  \Delta} (T_{(i+1)\Delta}-  \lambda^T X_{i\Delta})(\zeta_{(i+1)\Delta}- \vartheta X'_{i\Delta})}{\sum_{i=0}^{N-1} (Y'_{i  \Delta})^2},
\end{equation}
where {$X_{i\Delta}$ and $\lambda$ are defined in Proposition~\ref{lambda_def} (resp. $X'_{i\Delta}$ and $\vartheta$ in Proposition~\ref{prop-zeta})}.
\end{prop}
\noindent {Let us note that we do not know a priori that $\hat{\rho} \in [-1,1]$. }
\begin{proof}
  
We first calculate the covariance between the temperature $(T_t)_{t\geq0}$ and the volatility $(\zeta_t)_{t\geq0}$:

\begin{equation*}
   \begin{aligned}
  &\mathbb{E}  [ \left(T_{t + \Delta} - \mathbb{E}  [T_{t + \Delta} | T_{t} ] \right) \left(\zeta_{t + \Delta} - \mathbb{E}  [\zeta_{t + \Delta} | \zeta_{t} ] \right)| \mathcal{F}_{t} ] \\
  &=\mathbb{E}  \left[ \left( \int_t^{t+ \Delta} e^{-\kappa (t+ \Delta-s)}\sqrt{\zeta_s}(\rho dW_s +\sqrt{1-\rho^2} dZ_s) \right) \left(\eta \int_{t}^{t + \Delta} e^{-K (t+ \Delta -s)} \sqrt{\zeta_s} dW_s \right)| \mathcal{F}_{t} \right] \\
  &= \rho \eta \mathbb{E}  \left[ \int_t^{t+ \Delta} e^{-(\kappa+K) (t+ \Delta-s)}  \zeta_s ds| \zeta_{t} \right]= \rho \eta  \int_t^{t+ \Delta} e^{-(\kappa+K) (t+ \Delta-s)}  \mathbb{E}  [\zeta_s | \zeta_{t} ] ds,
    \end{aligned} 
\end{equation*}
by using the Ito isometry, the independence between $W$ and $Z$ and $\E[\zeta_s|\cF_t]=\E[\zeta_s|\zeta_t]$. 

\noindent From Lemma~\ref{lem_integ2}, we get by standard calculations
\begin{align*}
  \eta \int_t^{t+\Delta}e^{-(K+\kappa)(t+\Delta-s)} \E[\zeta_s|\cF_t]ds&=\theta''_0  +  \phi''_0 \zeta_t + \sum_{k=1}^{K_{\sigma^2}} \theta_k'' \sin(\xi_k s)+ \sum_{k=1}^{K_{\sigma^2}} \phi_k'' \cos( \xi_k s),
\end{align*}

\noindent with

\begin{equation}\label{def_thetaphiscd}
\begin{dcases}
   \begin{aligned}
  \theta''_0 &= \eta \gamma_0  {\left( \frac{1 - e^{- (\kappa+K) \Delta} }{\kappa+K}+ \frac{e^{-(\kappa+K)  \Delta }  - e^{-K\Delta }  }{\kappa} \right)} \\
  \phi''_0 &= \eta  e^{-K  \Delta} \frac{1 - e^{-\kappa \Delta} }{\kappa} \\
  \theta''_k&= \eta \gamma_k K \frac{K(A_k''-\phi_0'')  +  \xi_k B_k''}{K^2+\xi_k^2}- \eta \delta_k K \frac{K B_k'' -\xi_k (A_k''-\phi_0'')}{K^2+\xi_k^2} \\
  \phi''_k&=  \eta \gamma_k K \frac{KB_k''-\xi_k (A_k''-\phi_0'')}{K^2+\xi_k^2} + \eta \delta_k K \frac{K(A_k''-\phi_0'') +\xi_k B_k''}{K^2+\xi_k^2},
    \end{aligned}
\end{dcases}
\end{equation}

\noindent and

\begin{equation*}
\begin{dcases}
   \begin{aligned}
  A_k''&=\frac{(K+\kappa)(\cos(\xi_k \Delta)-e^{- (K+\kappa)\Delta}) +\xi_k \sin(\xi_k\Delta)}{(K+\kappa)^2+\xi_k^2}\\
  B_k''&=\frac{(K+\kappa)\sin(\xi_k \Delta)-\xi_k (\cos(\xi_k \Delta)-e^{- (K+\kappa)\Delta}) }{(K+\kappa)^2+\xi_k^2}.
  \end{aligned}
\end{dcases}
\end{equation*}
\noindent {These calculations are similar to the ones of Proposition~\ref{prop-eta2}, and we get that $Y'_{i\Delta}>0$ exactly as we have obtained $Y_{i\Delta}>0$ in this proposition.} Now, let come back to the considered minimisation problem and define  {$Y_{i \Delta}$ as in Proposition~\ref{prop-eta2}}. We also consider $X_{i\Delta}=(1,i\Delta,T_{i\Delta},\sin(\xi i \Delta), \cos( \xi i \Delta))^T$ and $\lambda$ defined by~\eqref{lambda-T} (resp. $X'_{i\Delta}$ and $\vartheta$ defined by~\eqref{def_Xprime}), so that $\E[T_{(i+1)\Delta}|T_{i\Delta}]=\lambda^T X_{i\Delta}$ (resp.  $\E[\zeta_{(i+1)\Delta}|\zeta_{i\Delta}]=\vartheta^T \zeta_{i\Delta}$). 
The minimisation problem can be rewritten as follows,
\begin{mini*}|s|
{\rho}{\sum_{i=0}^{N-1} \left( (T_{(i+1)\Delta}- \lambda^T X_{i\Delta})(\zeta_{(i+1)\Delta}- \vartheta^T \zeta_{i\Delta} | \mathcal{F}_{i  \Delta} ]) - \rho Y'_{i \Delta} \right)^2}{}{ },
\end{mini*}
and the minimum is clearly given by~\eqref{rho_opt}. 
\end{proof}


\section{Strong {consistency} of CLS estimators for the time-dependent CIR processes} \label{appendix:conv-CIR}

We study in this appendix the strong  {consistency} of  CLS estimators of a time-dependent CIR process. This process is implemented in this paper to represent the temperature volatility dynamics. Let us consider the following process
    \begin{equation}\label{CIR_timedep}
      d\zeta_t= K ( \gamma \theta(t) -\zeta_t) dt + \eta \sqrt{\zeta_t} dW_t, {\ \zeta_0\ge 0,}
    \end{equation}
with $K,\gamma,\eta>0$  and  $\theta:\R_+\to \R_+$.  We assume that process is observed at discrete times $(\zeta_{k\Delta})_{k \in \N}$. 

The goal of this appendix is twofold. First, we prove in Theorem~\ref{thm_cv_gamma} the {consistency} of the CLS estimator of $\gamma$ when other parameters are known and give the rate of convergence. This result complements the one of Overbeck and Ryden~\cite{overbeck1997estimation} in a time inhomogeneous case. This is a simplification with respect to the estimation of~$K$, $\gamma$'s and $\delta$'s in model~\eqref{stoch} given {by Proposition~\ref{prop-zeta}: we only estimate one drift parameter instead of $2(K_{\sigma^2}+1)$ drift parameters.}  This avoids cumbersome calculations, but the same behaviour is expected for the CLS estimators of these {$2(K_{\sigma^2}+1)$} parameters. Second, we prove in Theorem~\ref{thm_cv_eta2} the  {consistency} of the CLS estimator of~$\eta^2$ when other parameters are known. This result complements the results of Bolyog and Pap~\cite{bolyog2019conditional} that only focus on the CLS estimation of the drift part.

By straightforward calculations, we have for $0\le s \le t$,
    
\begin{align}
      \zeta_t&=\zeta_s e^{-K (t-s)} + \int_s^t K \gamma \theta(u) e^{-K(t-u)}du+ \eta \int_s^t e^{-K(t-u)} \sqrt{\zeta_u}dW_u, \label{zeta_integre} \\
      \E[\zeta_t|\zeta_s]&=\zeta_s e^{-K (t-s)} + \int_s^t K \gamma \theta(u) e^{-K(t-u)}du. \notag
\end{align}

The CLS estimator of $\gamma$ consists in minimizing $\sum_{i=0}^{N-1}  \left(\zeta_{i\Delta}- \E[\zeta_{(i+1)\Delta}|\zeta_{i\Delta}]\right)^2$, i.e.
$$\sum_{i=0}^{N-1}  \left(\zeta_{(i+1)\Delta}- \zeta_{i\Delta}e^{-K\Delta}  -  K \gamma \int_{i\Delta}^{(i+1)\Delta} \theta(u) e^{-K((i+1)\Delta -u)}du  \right)^2, $$
which leads to

\begin{equation}\label{def_estim_gamma}
      \hat{\gamma}_{N,\Delta}= \frac{\sum_{i=0}^{N-1} (\zeta_{(i+1)\Delta}- \zeta_{i\Delta}e^{-K\Delta})\int_{i\Delta}^{(i+1) \Delta} \theta(u) e^{-K((i+1)\Delta -u)}du    }{K \sum_{i=0}^{N-1}  \left(\int_{i\Delta}^{(i+1)\Delta} \theta(u) e^{-K((i+1)\Delta -u)}du \right)^2}. 
\end{equation}
{In this appendix, we note $\hat{\gamma}_{N,\Delta}$ instead of $\hat{\gamma}$ to remind the dependence on $N$ and $\Delta$. This makes clearer the statements of Theorems~\ref{thm_cv_gamma} and~\ref{thm_cv_eta2} that involve these two quantities.} 

In the particular case $\theta \equiv 1$, we have
$$\hat{\gamma}_{N,\Delta}=\frac 1{N (1-e^{-K\Delta})} \sum_{i=0}^{N-1}  (\zeta_{(i+1)\Delta}- \zeta_{i\Delta}e^{-K\Delta})=\frac 1N \sum_{i=1}^{N-1} \zeta_{i\Delta} +\frac {\zeta_{N\Delta}-\zeta_0 e^{-K\Delta}}{N (1-e^{-K\Delta})}.$$
The second term is negligible and, following Overbeck and Ryden~\cite{overbeck1997estimation}, we get that the estimator $\hat{\gamma}_{N,\Delta}$ is strongly  {consistent} (i.e. $\hat{\gamma}_{N,\Delta}\to \gamma$ a.s.)  and asymptotically normal (i.e. $\sqrt{N}(\hat{\gamma}_{N,\Delta}-\gamma)$ converges in law to a normal random variable) by using the ergodic theorem.

When $\theta$ is not constant, we can no longer use the ergodic theorem. We will make the proof of  {consistency} under the assumption that $\theta$ is a bounded function. We lose the asymptotic normality but still have a convergence rate of $\sqrt{N}$. We will use the following lemma.

\begin{lemma}\label{lem_moments}
Let $\theta:\R_+\to \R_+$ be a bounded measurable function and $K>0$. Then, the process~\eqref{CIR_timedep} is well defined, nonnegative, and we have
$$\forall p>0, \  \sup_{t\ge 0} \E[\zeta_t^p]<\infty .$$ 
\end{lemma}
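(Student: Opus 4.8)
The plan is to establish well-posedness and nonnegativity first, then prove the uniform-in-time moment bound by a Gronwall-type argument on $\E[\zeta_t^p]$, treating integer $p$ and general $p>0$ separately.

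First I would note that since $\theta$ is bounded and nonnegative, say $0 \le \theta(t) \le M$ for all $t$, the drift $K(\gamma\theta(t)-x)$ is globally Lipschitz in $x$ uniformly in $t$, the diffusion coefficient $\eta\sqrt{x}$ is $1/2$-Hölder, and the Yamada–Watanabe condition gives pathwise uniqueness; existence of a nonnegative solution follows by comparison with the standard CIR process (with $\gamma\theta$ replaced by $0$ and by $\gamma M$), exactly as in the classical time-homogeneous case. So $\zeta$ is well defined and $\zeta_t \ge 0$ a.s.

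Next, for the moment bound, I would first treat $p = m \in \N^*$. Applying Itô's formula to $\zeta_t^m$ and taking expectations (after a standard localization argument using a sequence of stopping times $\tau_n = \inf\{t: \zeta_t \ge n\}$, then passing to the limit with Fatou/monotone convergence once the bound is in hand), one gets
\begin{equation*}
\frac{d}{dt}\E[\zeta_t^m] = m\,\E[\zeta_t^{m-1}K(\gamma\theta(t)-\zeta_t)] + \frac{m(m-1)}{2}\eta^2\,\E[\zeta_t^{m-1}].
\end{equation*}
The crucial point is that the $-mK\,\E[\zeta_t^m]$ term is strictly dissipative, while all the other terms involve only lower moments $\E[\zeta_t^{m-1}]$ (times the bounded quantity $\gamma\theta(t) \le \gamma M$ or the constant $\eta^2$). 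So, writing $f_m(t) = \E[\zeta_t^m]$, we obtain $f_m'(t) \le -mK f_m(t) + C_m(1 + f_{m-1}(t))$ for a constant $C_m$ depending only on $m, K, \eta, \gamma, M$. By induction on $m$ (the base case $m=0$ being trivial, and $f_1$ handled directly since $f_1'(t) \le -Kf_1(t) + K\gamma M$ gives $\sup_t f_1(t) < \infty$), if $\sup_t f_{m-1}(t) =: c_{m-1} < \infty$ then $f_m'(t) \le -mKf_m(t) + C_m(1+c_{m-1})$, so by Gronwall $f_m(t) \le f_m(0)e^{-mKt} + \frac{C_m(1+c_{m-1})}{mK}$, which is bounded uniformly in $t\ge 0$. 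Finally, for general $p>0$, pick any integer $m \ge p$; by Jensen's (or Lyapunov's) inequality $\E[\zeta_t^p] \le \E[\zeta_t^m]^{p/m} \le (\sup_{s}\E[\zeta_s^m])^{p/m} < \infty$, uniformly in $t$. This gives the claim for all $p>0$.

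The main obstacle is making the localization in the Itô computation fully rigorous: a priori one does not know $\E[\zeta_t^m]<\infty$, so the stochastic integral term $\int_0^t m\eta\zeta_s^{m-1/2}\,dW_s$ may not be a true martingale. The clean fix is to work with $\zeta_{t\wedge\tau_n}$, derive the differential inequality for $f_m^{(n)}(t) = \E[\zeta_{t\wedge\tau_n}^m]$ with constants \emph{independent of $n$} (this works because the right-hand side bound only uses lower moments, which by the induction hypothesis are already controlled uniformly in $n$ by the same argument at level $m-1$), conclude a uniform bound on $f_m^{(n)}$, and then let $n\to\infty$ by Fatou to transfer the bound to $f_m$. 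Since $\theta$ enters only through the bound $\gamma\theta(t)\le \gamma M$, no regularity of $\theta$ beyond measurability and boundedness is needed, consistent with the hypotheses.
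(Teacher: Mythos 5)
Your proposal is correct and follows essentially the same route as the paper: Yamada--Watanabe plus comparison for well-posedness and nonnegativity, then induction on integer moments via It\^o's formula exploiting the dissipative $-pK\E[\zeta_t^p]$ term, and Jensen for non-integer $p$ (the paper writes the resulting Duhamel formula explicitly rather than a Gronwall inequality, but this is the same estimate). Your explicit treatment of the localization via stopping times is a point the paper glosses over, and is a welcome addition rather than a deviation.
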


\begin{proof}
By using the well-known result of Yamada and Watanabe (see e.g. Karatzas and Shreve~\cite[Proposition 2.13 p. 291]{MR1121940}), there exists a {pathwise unique strong} solution to~\eqref{CIR_timedep}. From the comparison result~\cite[Proposition 2.18 p. 293]{MR1121940}, $\zeta_t$ is greater than $\tilde{\zeta}_t=\zeta_0-\int_0^t K\tilde{\zeta}_s ds +\eta \int_0^t\sqrt{\tilde{\zeta}_s}dW_s$, { since the initial values are the same and the drift of $\tilde{\zeta}$ is below the one of $\zeta$. Since $\tilde{\zeta}$} is a Cox-Ingersoll-Ross process, {it is}  nonnegative. We thus have $\zeta_t\ge \tilde{\zeta}_t\ge 0$.

Now let us turn to the moments. It is sufficient to check the result for $p\in \N^*$. For $p=1$, we have
$$\E[\zeta_t]=\zeta_0e^{-K t} + \int_0^t K \gamma \theta(u) e^{-K(t-u)}du \le \zeta_0+\gamma \bar{\theta},$$
{with $\bar{\theta}=\sup_{u\ge 0}\theta(u) <\infty$.} We then prove $\sup_{t\ge 0} \E[\zeta_t^p]<\infty$ by induction on~$p$. 

By Itô's formula, we have $d\zeta_t^p=p\zeta_t^{p-1}K(\theta(t)-\zeta_t)dt+p\eta \zeta_t^{p-1/2}dW_t +p(p-1)\frac{\eta^2}2 \zeta_t^{p-1} dt$ and thus

$$\E[\zeta_t^p]=\zeta_0^p e^{-Kp t}+ \int_0^t e^{-Kp(t-u)} pK \left(\theta(u) +\frac{p-1}{2K} \eta^2\right) \E[\zeta_t^{p-1}] du,$$
{since the stochastic integral has a zero expectation (note that the SDE~\eqref{CIR_timedep} has finite moments of any order by~\cite[Problem 3.15 p.~306]{MR1121940}).}
This leads to $\E[\zeta_t^p]\le \zeta_0^p+ \left(\bar{\theta} +\frac{p-1}{2K} \eta^2\right) \sup_{t\ge 0} \E[\zeta_t^{p-1}]$, and to the claim by induction on~$p$.  
\end{proof}

\begin{theorem} \label{thm_cv_gamma}
Let us assume that $\theta:\R_+\to \R_+$ is a bounded measurable function such that $0<\underline{\theta}\le \theta(u)<\bar{\theta}$ for some $\underline{\theta},\bar{\theta}\in \R_+^*$. Then, {for all $\Delta>0$,} the estimator $\hat{\gamma}_{N,\Delta}$ is strongly {consistent} (i.e. converges {to $\gamma$ a.s. as $N\to \infty$}) and such that $N^\alpha (\hat{\gamma}_{N,\Delta}-\gamma) \to 0$ {as $N\to \infty$} a.s. for any $\alpha \in (0,1/2)$.
\end{theorem}

\begin{proof}

From~\eqref{zeta_integre}, we get $$\zeta_{(i+1)\Delta}- \zeta_{i\Delta}e^{-K\Delta}= \gamma \int_{i\Delta}^{(i+1)\Delta} K \theta(u) e^{-K((i+1)\Delta-u)}du+ \eta \int_{i\Delta}^{(i+1)\Delta} e^{-K((i+1)\Delta-u)} \sqrt{\zeta_u}dW_u.$$
Using this in~\eqref{def_estim_gamma}, we obtain
\begin{align*}
\hat{\gamma}_{N,\Delta}&= \gamma+ \eta \frac{\sum_{i=0}^{N-1} \int_{i\Delta}^{(i+1)\Delta} e^{-K((i+1)\Delta-u)} \sqrt{\zeta_u}dW_u \int_{i\Delta}^{(i+1) \Delta} \theta(u) e^{-K({(i+1)} \Delta -u)}du    }{K \sum_{i=1}^{N{-1}} \left(\int_{i\Delta}^{(i+1)\Delta} \theta(u) e^{-K((i+1) \Delta -u)}du \right)^2}\\
&         = \gamma+ \eta \frac{\sum_{i=0}^{N-1} \Theta_i(M_{i+1}-M_{i})  }{ \sum_{i=0}^{N-1} \Theta_i^2}, 
\end{align*}
with $\Theta_i= {K} \int_{i\Delta}^{(i+1) \Delta} \theta(u) e^{-K((i+1) \Delta -u)}du$, $M_{i+1}-M_{i}=\int_{i\Delta}^{(i+1)\Delta} e^{-K((i+1)\Delta-u)} \sqrt{\zeta_u}dW_u$. 

We have $\sum_{i=0}^{N-1} \Theta_i(M_{i+1}-M_{i})= \int_0^{N\Delta}\Theta_{i(u)}e^{-K(i(u)\Delta-u)}\sqrt{\zeta_u}dW_u$. {By Burkholder-Davis-Gundy inequality and then Jensen inequality}, we get for $p\ge 2$,

\begin{align*}
    \E\left[\left|\sum_{i=0}^{N-1} \Theta_i(M_{i+1}-M_{i})\right|^p\right] &\le C_p \E\left[\left|\int_0^{N\Delta}\Theta_{i(u)}^2e^{-2K(i(u)\Delta-u)} \zeta_udu\right|^{p/2}\right] \\
    & \le C_p (N\Delta)^{p/2-1}\E\left[\int_0^{N\Delta}\Theta_{i(u)}^pe^{-p K(i(u)\Delta-u)} \zeta^{p/2}_udu\right] \\
    & \le C_ p(N\Delta)^{p/2} \bar{\theta}^p \sup_{t\ge 0} \E[\zeta_t^{p/2}],
\end{align*}
where $i(u)=i$ for $u\in [i\Delta,(i+1)\Delta]$. {Here, we have used $\Theta_i \le (1-e^{-K \Delta})\overline{\theta}\le\overline{\theta}$.}

On the other hand, we have {$\Theta_i \ge (1-e^{-K \Delta})\underline{\theta} $}, and therefore $$\E[|\epsilon_N|^p]\le {\left(\frac{1}{N(1-e^{-K \Delta})^2\underline{\theta}^2} \right)^p \times C_ p(N\Delta)^{p/2} \bar{\theta}^p \sup_{t\ge 0} \E[\zeta_t^{p/2}]  ,}$$ with $\epsilon_N=\frac{\sum_{i=0}^{N-1} \Theta_i(M_{i+1}-M_{i})  }{ \sum_{i=0}^{N-1} \Theta_i^2}$. This gives $\E[|\epsilon_N|^p]=O(N^{-p/2})$ by Lemma~\ref{lem_moments}.

Therefore, for any $\alpha \in (0,1/2)$, we can take $p>2$ such that $p(1/2-\alpha)>1$ and thus $\E[\sum_{N=1}^\infty|N^\alpha \epsilon_N |^p]<\infty$, which gives that $N^\alpha \epsilon_N \to 0$, a.s.
\end{proof}

We now turn to the  Conditional Least Squares estimation of $\eta^2$ for the process~\eqref{CIR_timedep}. We now assume that $K,\gamma>0$ and $\theta(\cdot)$ are known. Without loss of generality, we assume that $\gamma=1$ and consider the minimization problem of  
$$\sum_{i=0}^{N-1}\left[\left(\zeta_{(i+1)\Delta}-\E[\zeta_{(i+1)\Delta}|\cF_{i\Delta}] \right)^2- \E\left[\left(\zeta_{(i+1)\Delta}-\E[\zeta_{(i+1)\Delta}|\cF_{i\Delta}] \right)^2|\cF_{i\Delta}\right]\right]^2,$$
with respect to $\eta^2$. By using Equation~\eqref{zeta_integre} and {Fubini theorem, we get}
\begin{align}
    &\E\left[\left(\zeta_{(i+1)\Delta}-\E[\zeta_{(i+1)\Delta}|\cF_{i\Delta}]\right)^2|\cF_{i\Delta}\right]=\eta^2\int_{i\Delta}^{(i+1)\Delta} e^{-2K((i+1)\Delta-u)} \E[\zeta_u{|}\cF_{i\Delta}] du \notag \\
    & {=\eta^2 \left(\int_{i\Delta}^{(i+1)\Delta} e^{-2K((i+1)\Delta-u)} \zeta_{i\Delta}e^{-K(u-i\Delta)} du + \int_{i\Delta<v<u<(i+1)\Delta} e^{-2K((i+1)\Delta-u)} \theta(v)e^{-K(u-v)} dudv \right)}\notag\\
   & =\eta^2 \left(  e^{-K\Delta}\frac{1-e^{-K\Delta}}{K}\zeta_{i\Delta}+\int_{i\Delta}^{(i+1)\Delta} \theta(v) e^{-K((i+1)\Delta-v)}(1-e^{-K((i+1)\Delta-v)}) dv\right).\label{calc_esp_cond_sq}
\end{align}
{The minimization of {\footnotesize
$$ \sum_{i=0}^{N-1}\left[\left(\zeta_{(i+1)\Delta}-\E[\zeta_{(i+1)\Delta}|\cF_{i\Delta}] \right)^2- \eta^2 \left(  e^{-K\Delta}\frac{1-e^{-K\Delta}}{K}\zeta_{i\Delta}+\int_{i\Delta}^{(i+1)\Delta} \theta(v) e^{-K((i+1)\Delta-v)}(1-e^{-K((i+1)\Delta-v)}) dv\right)\right]^2 $$} then} leads to the following estimator
\begin{equation}
    \widehat{\eta^2}_{\Delta,N}=\frac{\sum_{i=0}^{N-1}  \left(\zeta_{(i+1)\Delta}-(\zeta_{i\Delta}e^{-K\Delta}+\Theta^1_i)\right)^2 \left(  e^{-K\Delta}\frac{1-e^{-K\Delta}}{K}\zeta_{i\Delta} +\Theta^2_i\right) }{\sum_{i=0}^{N-1}\left(  e^{-K\Delta}\frac{1-e^{-K\Delta}}{K}\zeta_{i\Delta} +\Theta^2_i\right)^2},
\end{equation}
with $\Theta^1_i={K \gamma} \int_{i\Delta}^{(i+1)\Delta} \theta(v) e^{-K((i+1)\Delta-v)}dv$ and $\Theta^2_i=\int_{i\Delta}^{(i+1)\Delta} \theta(v) e^{-K((i+1)\Delta-v)}(1-e^{-K((i+1)\Delta-v)}) dv$. 

\begin{theorem} \label{thm_cv_eta2}
Let us assume $\gamma=1$ and that $\theta:\R_+\to \R_+$ is a bounded measurable function such that $0<\underline{\theta}\le \theta(u)<\bar{\theta}$ for some $\underline{\theta},\bar{\theta}\in \R_+^*$. Then, {for all $\Delta>0$,} the estimator $\widehat{\eta^2}_{N,\Delta}$ is strongly  {consistent} (i.e. converges to $\eta^2$ a.s.  as $N\to + \infty$) and such that $N^\alpha (\widehat{\eta^2}_{N,\Delta}-\eta^2) \to 0$ a.s. for any $\alpha \in (0,1/2)$.
\end{theorem}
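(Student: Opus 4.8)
The plan is to mimic the proof of Theorem~\ref{thm_cv_gamma}, working now with squared increments instead of increments. First I would set $\Delta M_i := \int_{i\Delta}^{(i+1)\Delta} e^{-K((i+1)\Delta-u)}\sqrt{\zeta_u}\,dW_u$ and $V_i := e^{-K\Delta}\frac{1-e^{-K\Delta}}{K}\zeta_{i\Delta}+\Theta^2_i$. By~\eqref{zeta_integre} with $\gamma=1$ we have $\zeta_{(i+1)\Delta}-\E[\zeta_{(i+1)\Delta}|\cF_{i\Delta}]=\eta\,\Delta M_i$, while the It\^o isometry, the Markov property $\E[\zeta_u|\cF_{i\Delta}]=\E[\zeta_u|\zeta_{i\Delta}]$ and~\eqref{calc_esp_cond_sq} give $\E[(\Delta M_i)^2|\cF_{i\Delta}]=V_i$ and hence $\E[(\zeta_{(i+1)\Delta}-\E[\zeta_{(i+1)\Delta}|\cF_{i\Delta}])^2|\cF_{i\Delta}]=\eta^2 V_i$. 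Substituting these identities into the definition of $\hat\eta^2_{N,\Delta}$ and subtracting $\eta^2$ yields
\[
\hat\eta^2_{N,\Delta}-\eta^2=\eta^2\,\frac{\sum_{i=0}^{N-1}\bigl[(\Delta M_i)^2-V_i\bigr]V_i}{\sum_{i=0}^{N-1}V_i^2}=:\eta^2\,\varepsilon_N.
\]

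The key structural observation is that $\xi_i:=\bigl[(\Delta M_i)^2-V_i\bigr]V_i$ is a martingale difference sequence for the discrete filtration $(\cF_{i\Delta})_{i\in\N}$, since $V_i$ is $\cF_{i\Delta}$-measurable and $\E[(\Delta M_i)^2-V_i|\cF_{i\Delta}]=0$; this plays the role of $M_{i+1}-M_i$ in Theorem~\ref{thm_cv_gamma}. For the denominator, the change of variable $x=(i+1)\Delta-v$ gives $\Theta^2_i\ge\underline{\theta}\int_0^\Delta e^{-Kx}(1-e^{-Kx})\,dx=:c>0$, which is precisely where the strictly positive lower bound $\underline{\theta}>0$ is needed, so that $V_i\ge c$ deterministically and $\sum_{i=0}^{N-1}V_i^2\ge c^2 N$.

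The heart of the proof is a moment bound on the numerator. Applying the Burkholder--Davis--Gundy inequality to the discrete martingale $S_N:=\sum_{i=0}^{N-1}\xi_i$ followed by the power-mean inequality, for every $p\ge 2$,
\[
\E\bigl[|S_N|^p\bigr]\le C_p\,\E\Bigl[\bigl(\textstyle\sum_{i=0}^{N-1}\xi_i^2\bigr)^{p/2}\Bigr]\le C_p\,N^{p/2-1}\sum_{i=0}^{N-1}\E[|\xi_i|^p]\le C_p\,N^{p/2}\sup_i\E[|\xi_i|^p].
\]
It then suffices to show $\sup_i\E[|\xi_i|^p]<\infty$. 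Since $|\xi_i|\le V_i\bigl((\Delta M_i)^2+V_i\bigr)$ with $0\le V_i\le C(1+\zeta_{i\Delta})$, and since the continuous-time BDG inequality combined with Jensen's inequality gives $\E[|\Delta M_i|^{2q}]\le C_q\,\Delta^{q}\sup_{t\ge0}\E[\zeta_t^{q}]$ for every $q\ge1$, an application of H\"older's inequality and Lemma~\ref{lem_moments} (which provides $\sup_{t\ge0}\E[\zeta_t^{q}]<\infty$ for all $q>0$) yields the required uniform bound. Combining the above, $\E[|S_N|^p]=O(N^{p/2})$, hence $\E[|\varepsilon_N|^p]\le (c^2N)^{-p}\,\E[|S_N|^p]=O(N^{-p/2})$, giving strong consistency; and for $\alpha\in(0,1/2)$, choosing $p>2$ with $p(1/2-\alpha)>1$ gives $\E\bigl[\sum_{N\ge1}|N^\alpha\varepsilon_N|^p\bigr]<\infty$, so $N^\alpha(\hat\eta^2_{N,\Delta}-\eta^2)\to 0$ a.s., exactly as at the end of the proof of Theorem~\ref{thm_cv_gamma}.

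I expect the main obstacle to be the uniform moment bound on $\xi_i$: because $(\Delta M_i)^2$ is the square of a stochastic integral and $V_i$ is itself a random function of $\zeta_{i\Delta}$ correlated with it, one has to nest the continuous-time BDG estimate (controlling $\E[|\Delta M_i|^{2p}]$ through $\sup_{t}\E[\zeta_t^{p}]$) inside the discrete-time martingale estimate and then separate the factors with H\"older; this is where Lemma~\ref{lem_moments} is essential. The deterministic lower bound on the denominator and the Borel--Cantelli step are routine and identical in spirit to the proof of Theorem~\ref{thm_cv_gamma}.
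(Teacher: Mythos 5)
Your proposal is correct and follows essentially the same route as the paper's proof: the same decomposition $\hat\eta^2_{N,\Delta}-\eta^2=\eta^2\varepsilon_N$ with the martingale differences $a_i(M_{i+1}-M_i)$ (your $\xi_i$), the discrete Burkholder--Davis--Gundy plus power-mean bound $\E[|S_N|^p]=O(N^{p/2})$, the deterministic lower bound on the denominator via $\underline{\theta}$, the uniform moment bound from Lemma~\ref{lem_moments}, and the Borel--Cantelli conclusion. If anything, you are slightly more explicit than the paper on the value of the lower bound $c=\underline{\theta}\int_0^\Delta e^{-Kx}(1-e^{-Kx})\,dx$ and on the H\"older step behind $\sup_i\E[|\xi_i|^p]<\infty$, but the argument is the same.
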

\begin{proof}
The proof follows the same arguments as the one of Theorem~\ref{thm_cv_gamma}, and we give the main lines. We have $\zeta_{(i+1)\Delta}-(\zeta_{i\Delta}e^{-K\Delta}+\Theta^1_i)=\eta \int_{i\Delta}^{(i+1)\Delta} e^{-K((i+1)\Delta-u)}\sqrt{\zeta_u}dW_u $ by~\eqref{zeta_integre} and
define $a_i=e^{-K\Delta}\frac{1-e^{-K\Delta}}{K}\zeta_{i\Delta} +\Theta^2_i$ which is nonnegative. We can rewrite
$$ \widehat{\eta^2}_{\Delta,N}=\eta^2 +\eta^2 \frac{\sum_{i=0}^{N-1}   a_i \left[\left( \int_{i\Delta}^{(i+1)\Delta} e^{-K((i+1)\Delta-u)}\sqrt{\zeta_u}dW_u  \right)^2- a_i \right] }{\sum_{i=0}^{N-1} a_i ^2}$$
We set $M_0=0$ and $M_{i+1}-M_i=\left( \int_{i\Delta}^{(i+1)\Delta} e^{-K((i+1)\Delta-u)}\sqrt{\zeta_u}dW_u  \right)^2- a_i $ for $i \in \N$. The process $M$ is a $\cF_{i\Delta}$-martingale by~\eqref{calc_esp_cond_sq}, and since $a_i$ is $\cF_{i\Delta}$-adapted, $\sum_{i=0}^{N-1}a_i(M_{i+1}-M_i)$ is also a martingale. Applying  Burkholder-Davis-Gundy inequality and then Jensen inequality, we get for $p\ge 2$
\begin{align*}
\E\left[ \left|\sum_{i=0}^{N-1}a_i(M_{i+1}-M_i)\right|^p\right ]  \le C_p \E\left[ \left(\sum_{i=0}^{N-1}a^2_i(M_{i+1}-M_i)^2\right)^{p/2}\right ]\le C_p N^{p/2-1}  \E\left[ \sum_{i=0}^{N-1}a^p_i|M_{i+1}-M_i|^p\right ].  
\end{align*}
Now, we check easily from Lemma~\ref{lem_moments} that $\E[a_i^p|M_{i+1}-M_i|^p] \le C'_p<\infty$ for all $i$, and thus $\E\left[ \left|\sum_{i=0}^{N-1}a_i(M_{i+1}-M_i)\right|^p\right ] =O(N^{p/2})$.

On the other hand, we have $a_i\ge \Theta^2_i \ge \underline{\theta} {\frac{(1-e^{-K\Delta})^2}{2K}}$ and thus $\sum_{i=0}^{N-1}a_i^2\ge N {\underline{\theta}^2\left(\frac{(1-e^{-K\Delta})^2}{2K}\right)^2 }$. Setting $\epsilon_N=\frac{\sum_{i=0}^{N-1} a_i(M_{i+1}-M_i)}{\sum_{i=0}^{N-1} a_i^2}$, we get $\E[|\epsilon_N|^p]=O(N^{-p/2})$, and we conclude as in the proof of Theorem~\ref{thm_cv_gamma}.
\end{proof}

\end{document}